\documentclass[11pt]{article}

\usepackage[margin=1.2in]{geometry}
\geometry{letterpaper}

\usepackage{amsfonts,amssymb,amsmath,amsthm,mathtools,mathrsfs}
\usepackage{thmtools,thm-restate}

\usepackage{graphicx,microtype,enumitem,latexsym,lpic,bm,floatrow,xspace}
\usepackage[bf]{caption}

\usepackage{hyperref}

\usepackage[usenames,dvipsnames]{xcolor}
\hypersetup{
	colorlinks=true,
	linkcolor=Sepia,
	citecolor=Sepia,
	filecolor=Sepia,
	urlcolor=Sepia
}

\declaretheorem[name=Theorem]{theorem}
\declaretheorem[name=Lemma,sibling=theorem]{lemma}
\declaretheorem[name=Corollary,sibling=theorem]{corollary}
\declaretheorem[name=Definition,style=definition]{definition}

\declaretheorem[name=Example,style=remark]{example}

\newcommand{\N}{\mathbb{N}} 
\newcommand{\Z}{\mathbb{Z}} 
\newcommand{\D}{\mathbb{D}}

\newcommand{\bs}{\mathrm{bs}}
\newcommand{\cbs}{\mathrm{cbs}}

\DeclareMathOperator*{\polylog}{polylog}
\DeclareMathOperator*{\argmax}{arg\,max}
\DeclareMathOperator{\poly}{poly}
\DeclareMathOperator*{\pct}{pct}
\newcommand{\Paths}{\text{Paths}}
\newcommand{\Viol}{\text{Viol}}
\newcommand{\vars}{\mathrm{vars}}
\newcommand{\cons}{\mathrm{cons}}
\newcommand{\Lines}{\text{Lines}}

\newcommand{\Peb}{\text{\slshape Peb}}
\newcommand{\Tse}{\text{\slshape Tse}}
\newcommand{\Sp}{\text{\itshape Sp}}

\newcommand{\newfunction}[2]{\newcommand{#1}{{\textrm{\upshape\small #2}}}}
\newfunction{\EQ}{EQ}
\newfunction{\ThreeEQ}{3EQ}
\newfunction{\AND}{AND}
\newfunction{\OR}{OR}
\newfunction{\XOR}{XOR}
\newfunction{\VER}{VER}
\newfunction{\HN}{HN}
\newfunction{\DISJ}{DISJ}
\newfunction{\UDISJ}{UDISJ}
\newfunction{\IND}{\text{-}IND}
\newfunction{\TwoIND}{2IND}
\newfunction{\ThreeIND}{3IND}
\newfunction{\Jump}{Jump}
\newfunction{\QCS}{QCS}
\newfunction{\GEN}{GEN}
\newfunction{\SAT}{SAT}

\renewcommand{\P}{\textsf{\upshape P}\xspace}
\newcommand{\NP}{\textsf{\upshape NP}\xspace}

\newcommand{\Th}[1]{\mathsf{Th}(#1)}
\newcommand{\Tcc}[1]{\mathsf{T}^{\mathsf{cc}}(#1)}
\newcommand{\LSfull}{Lov\'asz--Schrijver\xspace}
\newcommand{\LS}{\mathsf{LS}}

\newcommand{\CP}{\mathsf{CP}}

\begin{document}

\newgeometry{left=1.3in,right=1.3in}
\vspace*{2ex}
\begin{center}
\LARGE

\renewcommand*{\thefootnote}{\fnsymbol{footnote}}

\textbf{Communication Lower Bounds via \\ Critical Block Sensitivity\footnote{A preliminary version of this work appeared in the Proceedings of STOC 2014~\cite{goos14communication}.}}

\renewcommand*{\thefootnote}{\arabic{footnote}}
\setcounter{footnote}{0}

\vspace*{4ex}

\large
Mika G\"o\"os and Toniann Pitassi

\medskip

Department of Computer Science \\ University of Toronto

\vspace*{4.5ex}
\end{center}
\vspace*{4ex}

\noindent
\textbf{Abstract.}\quad
We use \emph{critical block sensitivity}, a new complexity measure introduced by Huynh and Nordstr{\"o}m~({\small STOC~2012}), to study the communication complexity of search problems. To begin, we give a simple new proof of the following central result of Huynh and Nordstr{\"o}m: if $S$ is a search problem with critical block sensitivity~$b$, then every randomised two-party protocol solving a certain \emph{two-party lift} of $S$ requires $\Omega(b)$ bits of communication. Besides simplicity, our proof has the advantage of generalising to the multi-party setting. We combine these results with new critical block sensitivity lower bounds for \emph{Tseitin} and \emph{Pebbling} search problems to obtain the following applications.
\begin{itemize}
\item \textbf{Monotone circuit depth:}
We exhibit a monotone $n$-variable function in $\NP$ whose monotone circuits require depth $\Omega(n/\log n)$; previously, a bound of $\Omega(\sqrt{n})$ was known (Raz and Wigderson, {\small JACM~1992}). Moreover, we prove a $\Theta(\sqrt{n})$ monotone depth bound for a function in monotone $\P$.

\item \textbf{Proof complexity:}
We prove new rank lower bounds as well as obtain the first length--space lower bounds for semi-algebraic proof systems, including Lov\'asz--Schrijver and Lasserre (SOS) systems. In particular, these results extend and simplify the works of Beame et al.~({\small SICOMP~2007}) and Huynh and Nordstr{\"o}m.
\end{itemize}
\thispagestyle{empty}
\setcounter{page}{0}
\newpage
\restoregeometry

\section{Introduction}

Apart from their intrinsic interest, communication lower bounds for \emph{search problems} find applications in two major areas of complexity theory.
\begin{itemize}
\item[\bf 1.] \textbf{Circuit complexity:}
A famous theorem of Karchmer and Wigderson~\cite{karchmer88monotone} states that for all boolean functions $f$, the minimum depth of a circuit computing $f$ is equal to the communication complexity of a certain search problem, called the \emph{Karchmer--Wigderson (KW) game} for~$f$. While it still remains a major open problem to prove general depth lower bounds for explicit boolean functions, KW-games have permitted progress in \emph{monotone} circuit complexity: there are monotone depth lower bounds for graph connectivity~\cite{karchmer88monotone}, clique functions~\cite{goldmann92simple,raz92monotone}, perfect matchings~\cite{raz92monotone}, and functions in monotone~\P~\cite{raz99separation}. See also Chapter 7 in Jukna's book~\cite{jukna12boolean}.

\item[\bf 2.] \textbf{Proof complexity:}
Impagliazzo et al.~\cite{impagliazzo94upper} (see also~\cite[\S19.3]{jukna12boolean}) introduced an analogue of KW-games to proof complexity. They showed how small tree-like Cutting Planes refutations of an unsatisfiable CNF formula $F$ can be converted into efficient \emph{two-party} communication protocols for a certain canonical search problem associated with~$F$. 
More recently, Beame et al.~\cite{beame07lower} extended this connection by showing that suitable lower bounds 
for \emph{multi-party} protocols imply degree/rank lower bounds for many well-studied semi-algebraic proof systems, including 
\LSfull~\cite{lovasz91cones}, Positivstellensatz~\cite{grigoriev01linear}, Sherali--Adams~\cite{sherali90hierarchy}, and Lasserre (SOS)~\cite{lasserre01explicit} systems. In parallel to these developments, Huynh and Nordstr{\"o}m~\cite{huynh12virtue} have also found a new kind of simulation of space-bounded proofs by communication protocols. They used this connection to prove length--space lower bounds in proof 
complexity.
\end{itemize}

In this work we obtain new randomised lower bounds for search problems in both two-party and multi-party settings. Our proofs are relatively simple reductions from the \emph{set-disjointness} function, the canonical $\NP$-complete problem in communication complexity. These results allow us to derive, almost for free, new lower bounds in the above two application domains.
\begin{itemize}
\item[\bf 1.] \textbf{Monotone depth:}
We introduce a certain monotone encoding of the \emph{CSP satisfiability} problem and prove an $\Omega(n/\log n)$ monotone depth lower bound for it, where $n$ is the number of input variables. Previously, the best bound for an explicit monotone function (perfect matchings) was $\Omega(\sqrt{n})$ due to Raz and Wigderson~\cite{raz92monotone}. Moreover, we prove a $\Theta(\sqrt{n})$ monotone depth bound for a function in monotone~$\P$.

\item[\bf 2.] \textbf{Rank, length, and space:}
We obtain new rank lower bounds for a family of semantic polynomial threshold proof systems called $\Tcc{k}$, which includes many of the semi-algebraic proof systems mentioned above. This extends and simplifies the work of Beame et al~\cite{beame07lower}. We also extend the length--space lower bound of Huynh and Nordstr{\"o}m~\cite{huynh12virtue} to hold for $\Tcc{k}$ systems of degree up to $k=(\log n)^{1-o(1)}$.
In particular, this yields the first nontrivial length--space lower bounds for dynamic SOS proofs of this degree.
\end{itemize}
We state these results more precisely shortly, once we first formalise our basic communication complexity setup.

\subsection{Starting point: Critical block sensitivity}

We build on the techniques recently introduced by Huynh and Nordstr{\"o}m \cite{huynh12virtue}. They defined a new complexity measure for search problems called \emph{critical block sensitivity}, which is a generalisation of the usual notion of block sensitivity for functions (see~\cite{buhrman02complexity} for a survey). They used this measure to give a general method of proving lower bounds for \emph{composed} search problems in the two-party communication model. These notions will be so central to our work that we proceed to define them immediately.

A \emph{search problem} on $n$ variables is a relation $S\subseteq\{0,1\}^n\times Q$ where $Q$ is some set of possible solutions. On input $\alpha\in\{0,1\}^n$ the search problem is to find a solution $q\in Q$ that is \emph{feasible for $\alpha$}, that is, $(\alpha,q)\in S$. We assume that $S$ is such that all inputs have at least one feasible solution. An input is called \emph{critical} if it has a unique feasible solution.

\begin{definition}[Critical block sensitivity~\cite{huynh12virtue}]
Fix a search problem $S\subseteq\{0,1\}^n\times Q$. Let $f\subseteq S$ denote a total function that solves $S$, i.e., for each input $\alpha\in\{0,1\}^n$ the function picks out some feasible solution $f(\alpha)$ for $\alpha$. We denote by $\bs(f,\alpha)$ the usual block sensitivity of $f$ at $\alpha$. That is, $\bs(f,\alpha)$ is the maximal number $\bs$ such that there are disjoint blocks of coordinates $B_1,\ldots,B_\bs\subseteq[n]$ satisfying $f(\alpha)\neq f(\alpha^{B_i})$ for all $i$; here, $\alpha^{B_i}$ is the same as $\alpha$ except the input bits in coordinates $B_i$ are flipped. The \emph{critical block sensitivity} of $S$ is defined as
\[
\cbs(S)\ :=\ \min_{f\subseteq S}\max_{\text{critical } \alpha} \bs(f,\alpha).
\]
\end{definition}

We note immediately that $\cbs(S)$ is a lower bound on the deterministic decision tree complexity of $S$. Indeed, a deterministic decision tree defines a total function $f\subseteq S$ and on each critical input $\alpha$ the tree must query at least one variable from each sensitive block of $f$ at $\alpha$ (see~\cite[Theorem 9]{buhrman02complexity}). It turns out that $\cbs(S)$ is also a lower bound on the \emph{randomised} decision tree complexity (see \autoref{thm:two-party} below).

\subsection{Composed search problems}

In order to study a search problem $S\subseteq\{0,1\}^n\times Q$ in the setting of two-party communication complexity, we need to specify how the $n$ input variables of $S$ are divided between the two players, Alice and Bob. 

Unfortunately, for many search problems (and functions) there is often no partition of the variables that would carry the ``intrinsic'' complexity of $S$ over to communication complexity. For example, consider computing the $\AND$ function on $n$ inputs. The block sensitivity of $\AND$ is $n$, but this complexity is lost once we move to the two-party setting: only $O(1)$ many bits need to be communicated between Alice and Bob regardless of the input partition.

For this reason, one usually studies \emph{composed} (or \emph{lifted}) variants $S\circ g^n$ of the original problem; see \autoref{fig:lift}. In a composed problem, each of the $n$ input bits of $S$ are encoded using a small two-party function $g\colon\mathcal{X}\times\mathcal{Y}\to\{0,1\}$, sometimes called a \emph{gadget}. As input to $S\circ g^n$  Alice gets an $x\in\mathcal{X}^n$ and Bob gets a $y\in\mathcal{Y}^n$. We think of the pair $(x,y)$ as encoding the input
\[
\alpha = g^n(x,y) = (\,g(x_1,y_1),\ldots,g(x_n,y_n)\,)
\]
of the original problem $S$. The objective is to find a $q\in Q$ such that $(g^n(x,y),q)\in S$.

\begin{figure}[t]
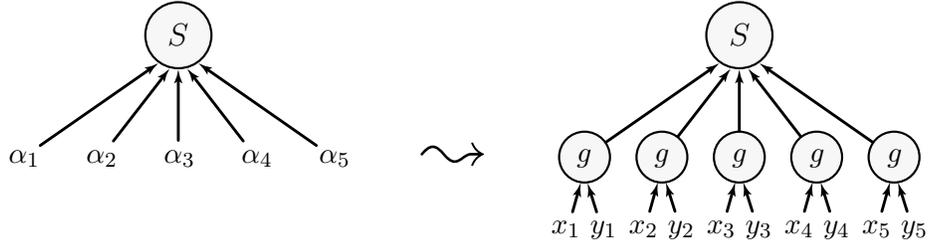
%
\caption{Composing a search problem $S$ with a two-party gadget $g$.}
\label{fig:lift}
\begin{lpic}[b(-2mm)]{lift(.27)}
\lbl[c]{88.8,111;\large$S$}
\lbl[c]{13,50;$\alpha_1$}
\lbl[c]{51.25,50;$\alpha_2$}
\lbl[c]{89.5,50;$\alpha_3$}
\lbl[c]{127.75,50;$\alpha_4$}
\lbl[c]{166,50;$\alpha_5$}
\lbl[c]{225,50;\Huge$\leadsto$}

\lbl[c]{365.25,111;\large$S$}
\lbl[c]{288.75,50;$g$}
\lbl[c]{327,50;$g$}
\lbl[c]{365.25,50;$g$}
\lbl[c]{403.5,50;$g$}
\lbl[c]{441.75,50;$g$}
\lbl[c]{288.75,15;$x_1\ y_1$}
\lbl[c]{327,15;$x_2\ y_2$}
\lbl[c]{365.25,15;$x_3\ y_3$}
\lbl[c]{403.5,15;$x_4\ y_4$}
\lbl[c]{441.75,15;$x_5\ y_5$}
\end{lpic}
\end{figure}

\subsection{Our communication complexity results} \label{ssec:results}

We start by giving a simple new proof of the following central result of Huynh and Nordstr{\"o}m~\cite{huynh12virtue}. (Strictly speaking, the statement of the original theorem~\cite{huynh12virtue} is slightly weaker in that it involves an additional ``consistency'' assumption, which we do not need.)
\begin{restatable}[Two-party version]{theorem}{twoparty} \label{thm:two-party}
There is a two-party gadget $g\colon \mathcal{X}\times\mathcal{Y}\to\{0,1\}$ such that if $S\subseteq\{0,1\}^n\times Q$ is any search problem, then $S\circ g^n$ has randomised bounded-error communication complexity $\Omega(\cbs(S))$.
\end{restatable}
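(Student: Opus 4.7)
My plan is to give a direct reduction from the $b$-bit unique set-disjointness problem $\UDISJ_b$, with $b := \bscrit(S)$, to $S \circ g^n$, and then invoke the classical $\Omega(b)$ randomised lower bound for $\UDISJ$ due to Kalyanasundaram--Schnitger and Razborov. Since the reduction will add no additional communication, the theorem follows. First I would fix an optimal $f \subseteq S$ witnessing $\bscrit(S) = b$ together with a critical input $\alpha \in \{0,1\}^n$ achieving $\bs(f, \alpha) = b$ via disjoint sensitive blocks $B_1, \ldots, B_b \subseteq [n]$, and write $q := f(\alpha)$. The key structural step is to argue that, by optimality of $f$, one may assume $q$ is \emph{not} a feasible solution for any of the perturbed inputs $\alpha^{B_i}$: if $q$ were feasible for some $\alpha^{B_i}$, then locally resetting $f(\alpha^{B_i}) := q$ would preserve $f \subseteq S$ while stripping $B_i$ from the set of blocks sensitive to $f$ at $\alpha$.

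For the gadget I would use the small selector $g\colon \{0,1\} \times \{0,1\}^2 \to \{0,1\}$ defined by $g(x, (y_0, y_1)) := y_x$. Given inputs $(u, v) \in \{0,1\}^b \times \{0,1\}^b$ to $\UDISJ_b$ (promising $|u \cap v| \leq 1$), Alice and Bob locally construct $(x, y) \in \mathcal{X}^n \times \mathcal{Y}^n$ as follows: for each coordinate $j$ outside $\bigcup_i B_i$ they hard-code $g(x_j, y_j) = \alpha_j$; for each $j \in B_i$ Alice sets $x_j := u_i$ and Bob sets $(y_{j,0}, y_{j,1}) := (\alpha_j, \alpha_j \oplus v_i)$, producing $g(x_j, y_j) = \alpha_j \oplus (u_i \wedge v_i)$. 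Consequently $g^n(x, y) = \alpha$ whenever $u \cap v = \emptyset$, and $g^n(x, y) = \alpha^{B_{i^*}}$ whenever $u \cap v = \{i^*\}$. Alice and Bob then run the assumed protocol $\Pi$ for $S \circ g^n$ on $(x, y)$ and declare ``disjoint'' if and only if $\Pi$'s output equals $q$. On the critical input $\alpha$, uniqueness of the feasible solution forces $\Pi$ to output $q$ with probability $\geq 2/3$; on $\alpha^{B_{i^*}}$ the infeasibility property established above ensures every feasible answer differs from $q$, so $\Pi$ signals ``not disjoint'' with probability $\geq 2/3$. This yields a randomised $\UDISJ_b$ protocol of the same communication cost as $\Pi$, so $\Pi$ must use $\Omega(b)$ bits.

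The main obstacle is rigorously establishing the ``without loss of generality infeasibility'' claim used above. The naive local modification $f(\alpha^{B_i}) \mapsto q$ can, in principle, inflate the block sensitivity of $f$ at some \emph{other} critical input $\alpha'$ whose perturbation $\alpha'^{B'}$ coincides with $\alpha^{B_i}$; so a careful global rearrangement of $f$---for example, iterating such modifications and tracking a suitable potential function over all critical inputs simultaneously---is required to reach a configuration in which the witnessing sensitive blocks are all truly sensitive. This is precisely the delicate point at which the original argument of Huynh and Nordstr\"om invoked an explicit consistency hypothesis, which the simpler new proof proposed here is meant to circumvent.
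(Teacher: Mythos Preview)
Your reduction is clean, but the proof hinges entirely on the ``WLOG infeasibility'' claim, and that claim is the whole difficulty---you have not proved it, and the potential-function sketch you offer is not an argument. The local fix $f(\alpha^{B_i})\mapsto q$ can raise the block sensitivity at another critical input $\alpha'$ by one whenever $\alpha^{B_i}=(\alpha')^{B'}$ for some $B'$; iterating such fixes gives you no obvious monovariant, and you have not exhibited one. This infeasibility property is precisely the ``consistency'' hypothesis that Huynh--Nordstr\"om had to \emph{assume}, so your plan as written recovers at best their weaker theorem, not the unconditional statement claimed here.

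The paper's proof avoids this obstacle by reversing the order of quantifiers: rather than fixing $f$ first and hoping the protocol respects it, one \emph{defines} $f$ from the protocol $\Pi$ itself. Concretely, $f(\alpha)$ is the most likely feasible output of $\Pi$ when run on a \emph{uniformly random} encoding $(\bm x,\bm y)$ of $\alpha$. With this choice, sensitivity $f(\alpha^{B_i})\neq f(\alpha)$ immediately forces $\Pr[\Pi=f(\alpha)]\le 1/2$ on a random encoding of $\alpha^{B_i}$ (otherwise $f(\alpha)$ would be the most likely feasible output there too)---no infeasibility needed. The price is that the reduction from $\UDISJ$ must now land on a \emph{uniformly random} encoding of $\alpha$ or $\alpha^{B_i}$, not the particular one your deterministic indexing-gadget reduction produces. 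This is exactly why the paper insists on a \emph{versatile} gadget (one that contains $\AND$, is flippable, and is random-self-reducible): the first two properties let Alice and Bob plant the $\UDISJ$ bits into the right blocks with the right signs, and random-self-reducibility lets them then rerandomise each coordinate to a uniform preimage of its value without communication. Your $2$-bit selector gadget supports the planting step but not the rerandomisation, so it cannot carry the paper's argument.
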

Huynh and Nordstr{\"o}m proved \autoref{thm:two-party} for the gadget $g=\ThreeIND$, where $\ThreeIND\colon[3]\times\{0,1\}^3\to\{0,1\}$ is the indexing function that maps $(x,y)\mapsto y_x$. Their proof used the information complexity approach~\cite{chakrabarti01informational,bar-yossef04information} and is quite intricate. By contrast, we prove \autoref{thm:two-party} by a direct randomised reduction from the \emph{set-disjointness} function
\[\textstyle
\DISJ_n(x,y) = (\OR_n\circ \AND^n)(x,y) = \bigvee_{i\in[n]}(x_i\land y_i).
\]
In the language of Babai et al.~\cite{babai86complexity} (see also~\cite{chattopadhyay10story}) the set-disjointness function is \NP-complete in communication complexity: it is easy to certify that $\DISJ_n(x,y)=1$, and conversely, every two-party function with low nondeterministic complexity reduces efficiently to $\DISJ_n$. Our proof of \autoref{thm:two-party} is inspired by a result of Zhang~\cite{zhang09tightness} that essentially establishes \autoref{thm:two-party} in case $S$ is a function and $\cbs(S)$ is simply the standard block sensitivity. The key insight in our proof is to choose $g$ to be \emph{random-self-reducible} (see \autoref{sec:versatile} for definitions). Random-self-reducibility is a notion often studied in cryptography and classical complexity theory, but less often in communication complexity. Most notably, random-self-reducibility was used implicitly in~\cite{raz92monotone}. The definitions we adopt are similar to those introduced by Feige et al.~\cite{feige94minimal} in a cryptographic context.

Our proof has also the advantage of generalising naturally to the multi-party setting. This time we start with the $k$-party unique-disjointness function $\UDISJ_{k,n}$ and the proof involves the construction of $k$-party random-self-reducible functions $g_k$.
\begin{theorem}[Multi-party version] \label{thm:multi-party}
There are $k$-party gadgets $g_k\colon \mathcal{X}^k\to\{0,1\}$ with domain size $\log|\mathcal{X}|=k^{o(1)}$ bits per player, such that if $S\subseteq\{0,1\}^n\times Q$ is any search problem, then $S\circ g_k^n$ has randomised bounded-error communication complexity at least that of $\UDISJ_{k,\cbs(S)}$ (up to constants).
\end{theorem}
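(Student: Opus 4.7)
The strategy parallels the two-party reduction behind \autoref{thm:two-party}, with $k$-party unique-disjointness $\UDISJ_{k,b}$ playing the role of two-party disjointness, where $b := \bscrit(S)$. Fix any total function $f\subseteq S$ and a critical input $\alpha^{*}\in\{0,1\}^n$ attaining $\bs(f,\alpha^{*})\geq b$, witnessed by disjoint sensitive blocks $B_1,\ldots,B_b\subseteq[n]$. Let $q^{*}:=f(\alpha^{*})$, which is by criticality the \emph{unique} feasible solution at $\alpha^{*}$; in particular $f(\alpha^{*B_i})\neq q^{*}$ for every $i$. Given a bounded-error $k$-party protocol $\Pi$ for $S\circ g_k^n$, I will simulate it to solve $\UDISJ_{k,b}$: given UDISJ inputs $u^{1},\ldots,u^{k}\in\{0,1\}^{b}$, the players will jointly sample an input to $\Pi$ whose $g_k^n$-decoded value equals $\alpha^{*}$ on coordinates outside $\bigcup_i B_i$, and, for $j\in B_i$, equals $\alpha^{*}_j$ when $\bigwedge_p u^{p}_i = 0$ and equals $1-\alpha^{*}_j$ when $\bigwedge_p u^{p}_i = 1$. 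Running $\Pi$ on this input, the players output ``disjoint'' iff the returned solution equals $q^{*}$; because $\alpha^{*}$ is critical, this correctly distinguishes intersecting from disjoint UDISJ instances with essentially the same error as $\Pi$.

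\textbf{Sampling via $k$-party random-self-reducibility.} For each coordinate $j$ outside the sensitive blocks, the players use public randomness to sample $(x^{1}_j,\ldots,x^{k}_j)$ uniformly from $g_k^{-1}(\alpha^{*}_j)$ with no communication. For $j\in B_i$, the objective is for each player $p$ to locally convert its private bit $u^{p}_i$ into a share $x^{p}_j\in\mathcal{X}$, using only shared public randomness, so that the induced joint distribution of $(x^{1}_j,\ldots,x^{k}_j)$ is uniform in $g_k^{-1}(\alpha^{*}_j)$ if $\bigwedge_p u^{p}_i=0$ and uniform in $g_k^{-1}(1-\alpha^{*}_j)$ if $\bigwedge_p u^{p}_i=1$. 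This is precisely the $k$-party analogue of the random-self-reducibility property that powers the two-party proof. Since sampling is noncommunicating, the simulation costs exactly as much as $\Pi$ (up to constants), yielding the claimed $\UDISJ_{k,b}$ lower bound.

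\textbf{Main obstacle.} The technical heart of the theorem is designing a $k$-party gadget $g_k\colon\mathcal{X}^k\to\{0,1\}$ with $\log|\mathcal{X}|=k^{o(1)}$ that admits the above $k$-wise random-self-reduction. The natural route is algebraic: choose $g_k$ to be a small-domain gadget built from inner-product- or group-product-style operations over a suitable Abelian structure, whose symmetry allows one player (holding one bit of an AND) to steer the output bit while the remaining players jointly re-randomise their shares to hide everything else. The tension is between keeping $\mathcal{X}$ subpolynomial in $k$, which severely limits the algebraic complexity of the gadget, and retaining enough symmetry for $k$ independent players to conspire without communication. This construction, presumably developed in \autoref{sec:versatile}, is the main novelty beyond the two-party argument; once it is in hand, the reduction of the first paragraph transfers any known multi-party lower bound for $\UDISJ_{k,b}$ directly to $S\circ g_k^n$.
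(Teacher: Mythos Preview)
Your reduction sketch has the right shape, but there is a genuine gap in the first paragraph: you fix a total function $f\subseteq S$ \emph{before} you are handed the protocol $\Pi$, and then assert that the test ``output disjoint iff $\Pi$ returns $q^{*}$'' distinguishes $\alpha^{*}$ from each $\alpha^{*B_i}$. Criticality of $\alpha^{*}$ handles the disjoint side, but it says nothing about the intersecting side. The input $\alpha^{*B_i}$ is in general \emph{not} critical, so $q^{*}$ may well be feasible there, and a correct $\Pi$ is free to output $q^{*}$ on every encoding of $\alpha^{*B_i}$. The fact that your arbitrarily chosen $f$ satisfies $f(\alpha^{*B_i})\neq q^{*}$ is irrelevant, since $f$ has no connection to $\Pi$. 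This is precisely the difficulty that separates search problems from functions (where Zhang's argument goes through directly).

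The paper's fix is to let $f$ be determined by $\Pi$: set $f(\alpha)$ to be the most likely \emph{feasible} output of $\Pi$ when its input is drawn from $\mu_\alpha$, the uniform distribution on $g_k^{-1}(\alpha)$. Critical block sensitivity then still supplies a critical $\alpha$ with $\bs(f,\alpha)\geq b$, and now sensitivity at $B_i$ genuinely constrains $\Pi$: if $\Pr_{\mu_{\alpha^{B_i}}}[\Pi=q^{*}]>1/2$ then $q^{*}$ would be feasible at $\alpha^{B_i}$ and would in fact be the argmax defining $f(\alpha^{B_i})$, contradicting $f(\alpha^{B_i})\neq q^{*}$. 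This yields only a gap of $1-\epsilon$ versus $1/2$, so one repeats twice and ANDs to get bounded error; your ``essentially the same error as $\Pi$'' is too optimistic. Note also that this definition of $f$ is why the reduction must land in the \emph{uniform} distribution $\mu_\alpha$ on each fibre---random-self-reducibility is not just a convenience for hiding private bits, it is what makes the $f$ extracted from $\Pi$ usable.

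Your second and third paragraphs are essentially right in spirit: the per-coordinate sampling you describe is exactly what the paper packages as \emph{versatility} of $g_k$, namely the three properties $\AND_k\leq g_k$, flippability $\neg g_k\leq g_k$, and random-self-reducibility, applied in that order. One correction on the gadget itself: the $k^{o(1)}$-bit construction in \autoref{sec:versatile} is not an inner-product-style algebraic gadget but a pointer-jumping function built from a short permutation branching program for $\AND_k$ (Barrington/Cleve), whose layered structure supplies the needed symmetry.
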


\autoref{thm:multi-party} can be applied to the following multi-player communication models.
\begin{itemize}[label=$-$]
\item {\bf Number-in-hand:}
The $i$-th player only sees the $i$-th part of the input. Here, set-disjointness has been studied under broadcast communication (e.g.,~\cite{gronemeier09asymptotically}) and under private channel communication~\cite{braverman13tight}.

\item {\bf Number-on-forehead (NOF):}
The $i$-th player sees all parts of the input except the $i$-th part~\cite{chandra83multi}. The current best randomised lower bound for $\UDISJ_{k,n}$ is $\Omega(\sqrt{n}/2^kk)$ by Sherstov~\cite{sherstov13communication}. We rely heavily on Sherstov's result in our proof complexity applications.
\end{itemize}

In the rest of this introduction we discuss the applications---the impatient reader who wants to see the proofs of Theorems~\ref{thm:two-party} and \ref{thm:multi-party} can immediately skip to Sections~\ref{sec:versatile} and \ref{sec:cc-lb}.

\subsection{CSPs and their canonical search problems}

To get the most out of Theorems~\ref{thm:two-party} and \ref{thm:multi-party} for the purposes of applications, we need to find search problems with high critical block sensitivity but low certificate complexity. Low-degree constraint satisfaction problems (CSPs) capture exactly the latter goal~\cite{lovasz95search}.

\begin{definition}[$d$-CSPs]
A CSP $F$ consists of a set of (boolean) variables $\vars(F)$ and a set of constraints $\cons(F)$. Each constraint $C\in\cons(F)$ is a function that maps a truth assignment $\alpha\colon\vars(F)\to\{0,1\}$ to either $0$ or $1$. If $C(\alpha)=1$, we say that $C$ is \emph{satisfied} by $\alpha$, otherwise $C$ is \emph{violated} by $\alpha$. Let $\vars(C)$ denote the smallest subset of $\vars(F)$ such that $C$ depends only on the truth values of the variables in $\vars(C)$. We say that $F$ is of \emph{degree}~$d$, or $F$ is a $d$-CSP, if $|\vars(C)|\leq d$ for all $C$. Note that $d$-CNF formulas are a special case of $d$-CSPs, and conversely, each $d$-CSP can be written as an equivalent $d$-CNF with a factor $2^d$ blow-up in the number of constraints.
\end{definition}

An \emph{unsatisfiable} CSP $F$ has no assignment that satisfies all the constraints. Each such $F$ comes with an associated \emph{canonical search problem} $S(F)$.
\begin{definition}[Canonical search problems]
Let $F$ be an unsatisfiable CSP. In the search problem $S(F)$ we are given an assignment $\alpha\colon\vars(F)\to\{0,1\}$ and the goal is to find a constraint $C\in\cons(F)$ that is violated by $\alpha$.
\end{definition}

We give new critical block sensitivity lower bounds for the canonical search problems associated with \emph{Tseitin} and \emph{Pebbling} formulas.

\subsection{Sensitivity of Tseitin formulas}

Tseitin formulas are well-studied examples of unsatisfiable CSPs that are hard to refute in many proof systems; for an overview, see Jukna~\cite[\S18.7]{jukna12boolean}.

\begin{definition}[Tseitin formulas]
Let $G=(V,E,\ell)$ be a connected labelled graph of maximum degree $d$ where the labelling $\ell\colon V\to\{0,1\}$ has odd Hamming weight. The \emph{Tseitin formula} $\Tse_G$ associated with $G$ is the $d$-CSP that has the edges $e\in E$ as variables and for each node $v\in V$ there is a constraint $C_v$ defined by
\[
C_v(\alpha) = 1
\quad\iff\quad
\sum_{e: v\in e} \alpha(e) \equiv \ell(v) \pmod{2}.
\]
It follows from a simple parity argument that $\Tse_G$ is unsatisfiable (see, e.g., \autoref{sec:tseitin}).
\end{definition}

Call $G$ \emph{$\kappa$-routable} if there is a set $T\subseteq V$ of size $|T|\geq 2\kappa$ such that for any set of $\kappa$ disjoint pairs of nodes of $T$ there are $\kappa$ edge-disjoint paths in $G$ that connect all the pairs. (Note: $\kappa$-routability is usually defined only for $T=V$, but we relax this condition.) The proof of the following theorem appears in \autoref{sec:cbs}.
\begin{restatable}[Tseitin sensitivity]{theorem}{tseitin} \label{thm:tseitin}
If $G$ is $\kappa$-routable, then $\cbs(S(\Tse_G))=\Omega(\kappa)$.
\end{restatable}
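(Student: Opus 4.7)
My plan is to fix any total $f\subseteq S(\Tse_G)$ and construct a critical input $\alpha$ with $\bs(f,\alpha)=\Omega(\kappa)$. I would start by describing critical inputs for Tseitin: since $\Tse_G$ is a parity CSP on an odd-weight labelling, every input has an odd number of violated vertices, so the critical inputs are exactly those with exactly one violated vertex $v(\alpha)$. Two critical inputs $\alpha_v,\alpha_u$ at different vertices $v,u$ differ precisely by an edge set with odd degree at $\{v,u\}$ and even degree elsewhere, i.e., a $v$-$u$ path together with an edge-disjoint set of cycles. In particular, any block $B\subseteq E$ which is (the edge set of) a $v$-to-$u$ path for $u\neq v$ is \emph{automatically} sensitive at $\alpha_v$: the flipped input $\alpha_v^B$ is critical at $u$, so $f(\alpha_v^B)=u\neq v$.

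Next I would invoke $\kappa$-routability. Fix a pairing $\pi=\{(s_i,t_i)\}_{i=1}^{\kappa}$ of $T$ into $\kappa$ disjoint pairs and extract $\kappa$ edge-disjoint paths $P_1,\ldots,P_\kappa$ connecting them. For a designated critical vertex $v^*\in T$, arrange the pairing so that $v^*\in\{s_1,t_1\}$; by the previous paragraph the block $E(P_1)$ is then automatically sensitive at $\alpha_{v^*}$, and by routability it is edge-disjoint from each $E(P_i)$ with $i\geq2$. The task is to promote the remaining $P_i$'s into sensitive blocks, or else to choose a different critical input.

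The main obstacle is that for $i\geq 2$ the block $E(P_i)$ does not flip $v^*$'s parity: $\alpha_{v^*}^{E(P_i)}$ has violations $\{v^*,s_i,t_i\}$, and an adversarial $f$ may simply output $v^*$ on this input, neutralising the block. To address this I would average over the choice of critical vertex. For each $t\in T$ the pair of $\pi$ containing $t$ contributes one automatically sensitive block at $\alpha_t$; varying both $\pi$ and $t$, and using that $f$ is single-valued on multi-violation inputs (so for each such input it can ``protect'' at most one of its violated vertices), I would argue by a pigeonhole/committee argument that some critical input $\alpha$ inherits $\Omega(\kappa)$ pairwise edge-disjoint sensitive blocks. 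The hard step, and the principal technical obstacle, is combining this averaging across critical inputs with the edge-disjointness guarantee of $\kappa$-routability so that the $\kappa$ sensitive blocks selected at a \emph{single} critical input are indeed pairwise disjoint --- rather than merely scattered as one block per distinct critical input.
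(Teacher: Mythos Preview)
Your setup is correct and you have identified the right obstacle, but your proposal stops exactly at the point where the real idea is needed: you do not say how the averaging is organised so that the sensitive blocks you end up with at a \emph{single} critical input are pairwise edge-disjoint. Varying both the critical vertex $t$ and the pairing $\pi$ independently, as you suggest, will indeed produce many sensitive blocks in aggregate, but there is no reason the ones landing at a common $\alpha_t$ come from a common routing, so edge-disjointness is lost.

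The paper closes this gap by packaging the averaging differently: it averages over \emph{pairs} $(\alpha,M)$, where $\alpha$ is a critical input with its unique violation in $T$ and $M$ is a pairing of the remaining $2\kappa$ elements of $T$. These pairs are the left vertices of a bipartite auxiliary graph; the right vertices are pairs $(\alpha',M')$ where $\alpha'$ has exactly three violations in $T$ and $M'$ pairs the remaining $2\kappa-2$ elements. An edge joins $(\alpha,M)$ to $(\alpha',M')$ when $M'\subseteq M$ and $\alpha'=\alpha^{E(P)}$ for $P\in\Paths(M)$ the path connecting the pair $M\smallsetminus M'$. This graph is biregular with left degree $\kappa$ and right degree $3$, and---exactly by your ``protect at most one'' observation---every right vertex has precisely two of its three edges sensitive. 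Hence the global fraction of sensitive edges is $2/3$, and some left vertex $(\alpha,M)$ has at least $2\kappa/3$ sensitive edges. The crucial payoff of carrying $M$ along in the vertex is that all $\kappa$ candidate blocks at $(\alpha,M)$ are the paths of $\Paths(M)$, which are edge-disjoint by $\kappa$-routability; the sensitive ones among them therefore witness $\bs(f,\alpha)\geq 2\kappa/3$ automatically. This is the missing idea in your proposal.
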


\autoref{thm:tseitin} can be applied to the following classes of bounded-degree graphs.
\begin{itemize}[label=$-$]
\item {\bf Grid graphs:}
If $G$ is a $\sqrt{n}\times \sqrt{n}$ grid graph, then we can take $\kappa = \Omega(\sqrt{n})$ by letting $T\subseteq V$ be any row (or column) of nodes. This is tight: the deterministic decision tree that solves $S(\Tse_G)$ using binary search makes $O(\sqrt{n})$ queries.
	
\item {\bf Expanders:}
If $G$ is a sufficiently strong expander (e.g., a Ramanujan graph~\cite{lubotzky88ramanujan}), then we can take $\kappa = \Omega(n/\log n)$ as shown by Frieze et al.~\cite{frieze00optimal,frieze01edge}.

\item {\bf Connectors:}
A \emph{$\kappa$-connector} is a bounded-degree graph with $\kappa$ inputs $I\subseteq V$ and $\kappa$ outputs $O\subseteq V$ such that for any one-to-one correspondence $\pi\colon I\to O$ there exist $\kappa$ edge-disjoint paths that connect $i\in I$ to $\pi(i)\in O$. If we merge $I$ and $O$ in a $2\kappa$-connector in some one-to-one manner and let $T=I=O$, we get a $\kappa$-routable graph. Conversely, if $G$ is $\kappa$-routable, we can partition the set $T$ as $I\cup O$ and get a $\kappa$-connector.

It is known that simple $\kappa$-connectors with $\kappa=\Theta(n/\log n)$ exist and this bound is the best possible~\cite{pippenger90communication}. Thus, the best lower bound provable using \autoref{thm:tseitin} is $\Theta(n/\log n)$.
\end{itemize}

It is well known that the \emph{deterministic} decision tree complexity of $S(\Tse_G)$ is $\Omega(n)$ when $G$ is an expander~\cite{urquhart87hard}. However, \emph{randomised} lower bounds---which \autoref{thm:tseitin} provides---are more scarce. We are only aware of a single previous result in the direction of \autoref{thm:tseitin}, namely, Lov{\'a}sz et al.~\cite[\S3.2.1]{lovasz95search} announce a lower bound of $\Omega(n^{1/3})$ for the randomised decision tree complexity of $S(\Tse_G)$ when $G$ is an expander. Our \autoref{thm:tseitin} subsumes this.

\subsection{Sensitivity of pebbling formulas}

Pebble games have been studied extensively as means to understand time and space in computations; for an overview, see the survey by Nordstr{\"o}m~\cite{nordstrom13pebble}. In this work we restrict our attention to the simple \emph{(black) pebble game} that is played on a directed acyclic graph~$G$ with a unique sink node $t$ (i.e., having outdegree $0$). In this game the goal is to place a pebble on the sink $t$ using a sequence of \emph{pebbling moves}. The allowed moves are:
\begin{enumerate}[label=(\arabic*),noitemsep]
\item A pebble can be placed on a node if its in-neighbours have pebbles on them. In particular, we can always pebble a source node (i.e., having indegree $0$).
\item A pebble can be removed from any pebbled node (and reused later in the game).
\end{enumerate}
The \emph{(black) pebbling number} of $G$ is the minimum number of pebbles that are needed to pebble the sink node in the pebble game on $G$.

The pebble game on $G$ comes with an associated \emph{pebbling formula}.
\begin{definition}[{Pebbling formulas. See~\cite{ben-sasson01short} and~\cite[\S2.3]{nordstrom13pebble}}] \label{def:pebbling}
Let $G=(V,E,t)$ be a directed acyclic graph of maximum indegree $d$ where $t$ is a unique sink. The \emph{pebbling formula} $\Peb_G$ associated with $G$ is the $(d+1)$-CSP that has the nodes $v\in V$ as variables and the following constraints:
\begin{enumerate}[label=(\arabic*),noitemsep]
\item The variable corresponding to the sink $t$ is false.
\item For all nodes $v$ with in-neighbours $w_1,\ldots,w_d$, we require that if all of $w_1,\ldots,w_d$ are true, then $v$ is true. In particular, each source node must be true.
\end{enumerate}
It is not hard to see that $\Peb_G$ is unsatisfiable.
\end{definition}

Classical complexity measures for $S(\Peb_G)$ include the pebbling number of $G$ (a measure of \emph{space}) and the deterministic decision tree complexity (a measure of \emph{parallel time}), which admits many equivalent characterisations~\cite{chan13just}. However, these complexity measures are fundamentally \emph{deterministic} and do not seem to immediately translate into \emph{randomised} lower bounds, which are needed in our applications. For this reason, Huyhn and Nordstr{\"o}m~\cite{huynh12virtue} devised an elegant ad hoc proof method for their result that, for a \emph{pyramid graph} $G$ (see~\autoref{fig:pyramid}), $\cbs(S(\Peb_G))=\Omega(n^{1/4})$. Annoyingly, this falls a little short of both the pebbling number $\Theta(\sqrt{n})$ of $G$ and the decision tree complexity $\Theta(\sqrt{n})$ of $S(\Peb_G)$. Here we close this gap by generalising their proof method: we get tight bounds for a different (but related) graph~$G$. The proof appears in \autoref{sec:cbs}.
\begin{restatable}[Pebbling sensitivity]{theorem}{pebbling} \label{thm:pebbling}
There are bounded-degree graphs $G$ on $n$ nodes such that
\begin{itemize}[label=$-$,noitemsep]
\item $G$ has pebbling number $\Theta(\sqrt{n})$.
\item $S(\Peb_G)$ has deterministic decision tree complexity $\Theta(\sqrt{n})$.
\item $S(\Peb_G)$ has critical block sensitivity $\Theta(\sqrt{n})$.
\end{itemize}
\end{restatable}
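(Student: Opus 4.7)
For a suitable bounded-degree DAG $G$ with $n$ nodes, the three claims would be handled separately. The pebbling number and deterministic decision tree complexity bounds follow from a standard pyramid-like construction: I would take $G$ to have $h=\Theta(\sqrt{n})$ levels, each of size $O(\sqrt{n})$, and a unique sink on top. The layer-by-layer pebbling strategy uses $O(\sqrt{n})$ pebbles, and a recursive decision-tree algorithm localises a violated constraint in $O(\sqrt{n})$ queries by, at each step, querying the predecessors of the currently ``unpebbled'' node and recursing. The matching lower bounds on pebbling number are classical (due to Klawe), and the decision tree complexity lower bound follows directly from the critical block sensitivity lower bound established below.

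The main content is the $\Omega(\sqrt{n})$ critical block sensitivity lower bound, which I would establish by adapting Huynh--Nordstr{\"o}m's pyramid argument to a graph with richer ``rerouting'' structure. Critical inputs of $\Peb_G$ come in two kinds: (A) the sink $t$ is set to $1$ and $T=\{v:\alpha(v)=1\}$ is closed under pebbling propagation, so $C_t$ is the unique violated constraint; or (B) $t$ is set to $0$ and $T$ is closed except at a single ``stuck'' node $v$, where all in-neighbours of $v$ lie in $T$ but $v\notin T$, so $C_v$ is the unique violation. I would work with type-(B) inputs, which correspond to partial pebblings of $G$ poised to pebble $v$ next. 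Given any total function $f\subseteq S(\Peb_G)$ and a type-(B) input $\alpha$ with stuck node $v$, we have $f(\alpha)=C_v$; a sensitive block $B_i$ is one whose flip produces another type-(B) critical input with stuck node $v_i\neq v$, forcing $f(\alpha^{B_i})=C_{v_i}\neq f(\alpha)$.

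The key is to design $G$ so that a single critical input $\alpha$ admits $\Omega(\sqrt{n})$ pairwise-disjoint rerouting blocks. The obstacle in Huynh--Nordstr{\"o}m's pyramid is that independent shifts of the frontier share too many interior bits, so only $\Omega(n^{1/4})$ disjoint blocks can be extracted. I would circumvent this by choosing $G$ to be a pyramid-like graph with $\Omega(\sqrt{n})$ ``independent shift gadgets'' arranged along the sink-ward pebbling, each of size $O(\sqrt{n})$ and each supporting one local rerouting move that is vertex-disjoint from the others. The global pebbling number stays $O(\sqrt{n})$ because the gadgets cooperate linearly (one active at a time), and the total size remains $O(n)$. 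For each gadget, a local swap of source bits shifts the frontier within that gadget to a different predecessor of $v$'s successor, producing a distinct stuck node $v_i$.

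The main obstacle I expect is the combinatorial verification that the $\Omega(\sqrt{n})$ candidate blocks can be made simultaneously disjoint and each produces a \emph{critical} input with a distinct stuck node---not just one with some new violation that $f$ could still evaluate to $C_v$. To handle the ``for any total $f$'' quantifier cleanly, I would likely set up a family of candidate critical inputs parameterised by a ``pebbling timeline'' and use an adversary / averaging argument to pick one $\alpha$ in the family on which $f$ is forced to disagree across many of the local shift blocks. Managing the interaction between distant gadgets---ensuring that a rerouting at one gadget does not accidentally satisfy $C_v$ via some spurious interaction elsewhere---is where the specific construction of $G$ and a careful closure analysis of $\Peb_G$ will do the real work.
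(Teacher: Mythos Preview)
Your upper-bound arguments (layer-by-layer pebbling, sink-to-source decision tree) are fine and match the paper's. The gap is in your plan for the critical block sensitivity lower bound.

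You want each flip $\alpha^{B_i}$ to again be a \emph{critical} input with unique violation at some $v_i\neq v$. This cannot yield more than $d+1$ disjoint blocks in a graph of in-degree $d$: if $\alpha$ has its unique violation at $v$ (so $\alpha(v)=0$ and every in-neighbour of $v$ is true), then for $\alpha^{B_i}$ to have no violation at $v$ the block $B_i$ must contain either $v$ itself or one of its $\le d$ in-neighbours. Disjointness of the $B_i$ then forces their number to be at most $d+1=O(1)$. So the step ``forcing $f(\alpha^{B_i})=C_{v_i}$'' collapses long before $\Omega(\sqrt n)$, no matter how the shift gadgets are laid out.

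The paper sidesteps this by allowing the flipped inputs to be \emph{non}-critical. Its graph is the tensor product $G=H\times P$, where $H$ is the $3$-dimensional grid on $m=\Theta(\sqrt n)$ nodes and $P$ is a directed path of length $\Theta(m)$; the point of dimension $3$ is that a random walk on $H$ covers half the vertices in $O(m)$ expected steps. Critical inputs are $\alpha_p$ for source--sink paths $p$ (nodes on $p$ set to $0$, all others to $1$; the unique violation sits at the source of $p$). Two paths $p,q$ are \emph{paired at layer $i$} if they coincide from layer $i$ onward and are exact mirrors ($b_j\leftrightarrow -b_j$) on the first $i-1$ steps; the cover-time bound gives a typical $p$ at least $\Omega(m)$ pairs, and the mirroring makes the blocks $q\smallsetminus p$ pairwise disjoint. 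Flipping such a block yields $\alpha_{p\cup q}$, which has \emph{two} violated source constraints, so $f$ may return either. One then builds an auxiliary digraph on paths (edge $p\to q$ when $f(\alpha_{p\cup q})$ is $q$'s source) and averages to find a single $p$ with out-degree $\Omega(m)=\Omega(\sqrt n)$. Your closing remark about an adversary/averaging argument is the right instinct, but it is incompatible with your earlier insistence that flipped inputs stay critical; you must drop the latter and let $f$ have a genuine choice on two-violation inputs.
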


\subsection{Applications: Monotone depth} \label{sec:app-monotone}

\paragraph{Monotone depth from Tseitin.}
Let $G$ be an $\Omega(n/\log n)$-routable graph of bounded degree \mbox{$d=O(1)$}. By \autoref{thm:tseitin} the lifted problem $S(\Tse_G)\circ g^{O(n)}$ has two-party communication complexity~$\Omega(n/\log n)$. By contrast, its nondeterministic communication complexity is just $\log n + O(1)$, since the players can guess a node $v\in V(G)$ and verify that it indeed induces a parity violation (which involves exchanging the inputs to $d=O(1)$ many copies of $g$ associated to edges incident to $v$). It is known that \emph{any} two-party search problem with nondeterministic communication complexity $C$ reduces to a monotone KW-game for some monotone~$f\colon \{0,1\}^N\to\{0,1\}$ on $N=2^C$ variables; see G{\'a}l~\cite[Lemma~2.3]{gal01characterization} for an exposition. In our case we get a monotone function on $N=O(n)$ variables whose monotone KW-game complexity---i.e., its monotone depth complexity---is $\Omega(N/\log N)$. Moreover, we make this general connection a bit more explicit in \autoref{sec:monotone-proofs} by showing that our function can be taken to be a monotone variant of the usual \emph{CSP satisfiability} function.
\begin{corollary}[Monotone depth from Tseitin] \label{cor:monotone-dept-tseitin}
There is an monotone function in $\NP$ on~$N$ input bits whose monotone depth complexity is $\Omega(N/\log N)$.
\end{corollary}

\paragraph{Monotone depth from pebbling.}
We also get perhaps the simplest proof yet of a $n^{\Omega(1)}$ monotone depth bound for a function in monotone $\P$. Indeed, we only need to apply a transformation of Raz and McKenzie, described in~\cite[\S3]{raz99separation}, which translates our $\Omega(\sqrt{n})$ communication lower bound for $S(\Peb_G)\circ g^{O(n)}$ (coming from Theorems \ref{thm:two-party} and \ref{thm:pebbling}) to a monotone depth lower bounds for a related ``generation'' function $\GEN_{G'}$ defined relative to a ``lifted'' version $G'$ of~$G'$. Raz and McKenzie originally studied the case when $G$ is a pyramid graph, and they lifted $S(\Peb_G)$ with some $\poly(n)$-size gadget (making the number of input bits of $\GEN_{G'}$ a large polynomial in $n$). However, their techniques work for any graph $G$ and any gadget $g$. In our case of constant-size gadgets, we get only constant factor blow-up in parameters; we refer to~\cite[\S3]{raz99separation} for the details of deriving the following.%
\begin{corollary}[Monotone depth from pebbling] \label{cor:pebbling}
There is an explicit function $f$ on $N$ input bits such that $f$ admits polynomial size monotone circuits of depth $\Theta(\sqrt{N})$ and any monotone circuit for $f$ requires depth $\Theta(\sqrt{N})$. \qed
\end{corollary}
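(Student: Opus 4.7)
The plan is simply to combine the ingredients assembled above. First, I would take $G$ from \autoref{thm:pebbling} to be a bounded-degree graph on $n$ nodes for which $\Peb_G$ is a $d$-CSP with $m=n$ constraints and $d=O(1)$, and for which $S(\Peb_G)$ has both deterministic decision tree complexity and critical block sensitivity $\Theta(\sqrt n)$. With $g$ the constant-size two-party gadget from \autoref{thm:two-party}, the Raz--McKenzie transformation (\autoref{thm:RM-transformation}) then produces a monotone function $f=\GEN_G$ on $N = m|\mathcal{X}|^d = \Theta(n)$ inputs, so $\sqrt n = \Theta(\sqrt N)$ and everything converts cleanly between $n$ and $N$.

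For the lower bound I would chain the three theorems: \autoref{thm:RM-transformation} gives that $\depth(f)$ is at least the deterministic two-party communication complexity of $S(\Peb_G)\circ g^n$, which is trivially at least its randomised bounded-error complexity; \autoref{thm:two-party} bounds this below by $\Omega(\bscrit(S(\Peb_G)))$; and \autoref{thm:pebbling} finishes with $\Omega(\sqrt n) = \Omega(\sqrt N)$.

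For the matching upper bound I would invoke the observation ascribed to Raz and McKenzie (and alluded to in the paragraph immediately preceding the corollary) that $\GEN_G$ admits polynomial-size monotone circuits whose depth is controlled by the deterministic decision tree complexity of $S(\Peb_G)$. Roughly, one simulates an optimal decision tree for $S(\Peb_G)$ monotonically on the lifted inputs: each variable query becomes a constant-depth monotone gadget reading the relevant block of $g$-encoded bits, and the branching structure is unfolded into a monotone $\OR$ over leaves, each computing a small monotone indicator that its associated pebbling constraint is the violated one. Since $g$ has constant size, the per-query depth is $O(1)$, so the whole circuit has depth $O(\sqrt n) = O(\sqrt N)$ and size polynomial in $N$.

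The only nontrivial point in the plan is this monotone upper bound: one has to verify carefully that the Raz--McKenzie lift of $\Peb_G$ really is computed by such a small-depth monotone circuit, and that the polynomial blow-up from the constant-size gadget is absorbed into ``polynomial in~$N$.'' Both look routine given the explicit combinatorial nature of $\GEN_G$ and the fact that $|\mathcal{X}|^d = O(1)$, but they are where the specific pebbling structure (rather than a black-box application of earlier theorems) actually enters the argument; every other step slots in by citing the results assembled above.
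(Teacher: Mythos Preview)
Your proposal is correct and follows essentially the same route as the paper: the paragraph preceding the corollary already spells out exactly this argument---apply \autoref{thm:RM-transformation} to $\Peb_G$ from \autoref{thm:pebbling} with the constant-size gadget of \autoref{thm:two-party}, cite Raz--McKenzie's observation that $\GEN_G$ has polynomial-size monotone circuits of depth equal to the decision-tree complexity of $S(\Peb_G)$ for the upper bound, and chain Theorems~\ref{thm:RM-transformation}, \ref{thm:two-party}, \ref{thm:pebbling} for the lower bound. The paper treats the upper bound as a black box from~\cite{raz99separation} rather than sketching the decision-tree-to-circuit simulation you outline, but otherwise the arguments coincide.
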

The original bounds of \cite{raz99separation} went up to $\Omega(N^{\delta})$ for a small constant $\delta$. This was recently improved by the works~\cite{chan12tight,filmus13average} that prove (among other things) monotone depth bounds of up to $\Omega(N^{1/6-o(1)})$ for $\GEN_G$ type functions. Our \autoref{cor:pebbling} achieves quantitatively the largest bound (to-date) for a function in monotone $\P$.

\subsection{Applications: Proof complexity} \label{sec:app-proof}

Over the last decade or so there have been a large number of results proving lower bounds on the rank required to refute (or approximately optimise over) systems of constraints in a wide variety of semi-algebraic (a.k.a.\ polynomial threshold) proof systems, including \LSfull~\cite{lovasz91cones}, Cutting Planes~\cite{gomory58outline,chvatal73edmonds}, Positivstellensatz~\cite{grigoriev01linear}, Sherali--Adams~\cite{sherali90hierarchy}, and Lasserre~\cite{lasserre01explicit} proofs. Highlights of this work include recent linear rank lower bounds for many constraint optimisation problems~\cite{schoenebeck08linear,tulsiani09csp,charikar09integrality,schoenebeck07linear,georgiou10integrality}. Nearly all of these results rely on delicate constructions of local distributions that are specific to both the problem and to the proof system. 

A communication complexity approach for proving lower bounds for semi-algebraic proofs was developed by Beame et al.~\cite{beame07lower}.
They studied a semantic proof system called $\Tcc{k}$ whose proofs consist of lines that are computed by a low-cost (i.e., $\polylog$ communication) $k$-party NOF protocols (see \autoref{sec:proof-complexity} for definitions). They prove that if a CNF formula $F$ has a small tree-like $\Tcc{k}$ refutation, then $S(F)$ has an efficient $k$-party
NOF protocol. Thus, lower bounds for the tree-size of $\Tcc{k}$ proofs
follow from NOF lower bounds for $S(F)$.

\paragraph{Rank lower bounds.}
Using this relationship we can now prove the following result%
\footnote{Similar claims were made in \cite{beame10hardness}. Unfortunately, as pointed out by~\cite{huynh12virtue}, Lemma 3.5 in~\cite{beame10hardness} is incorrect and this renders many of the theorems in the paper incorrect.}
for $\Tcc{k}$ proof systems, where $k$ can be almost logarithmic in the size of the formula. We state the theorem only for rank, with the understanding that a bound of $\Omega(R)$ on rank also implies a bound of $\exp(\Omega(R))$ on tree-size. The proof appears in \autoref{sec:proof-complexity}.
\begin{restatable}[Rank lower bounds]{theorem}{rankbounds} \label{thm:rank-lb}
There are explicit {CNF} formulas $F$ of size $s$ and width $O(\log s)$ such that all $\Tcc{k}$ refutations of $F$ require rank at least
\[
R_k(s) =
\begin{cases}
s^{1-o(1)}, &\text{for}\enspace k=2, \\
s^{1/2-o(1)}, &\text{for}\enspace 3\leq k\leq(\log s)^{1-o(1)}.
\end{cases}
\]
\end{restatable}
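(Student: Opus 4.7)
My plan is to combine the Tseitin critical block sensitivity bound (Theorem \ref{thm:tseitin}), the multi-party communication lower bound (Theorem \ref{thm:multi-party}), and the Beame et al.~\cite{beame07lower} reduction of low-rank $\Tcc{k}$ refutations to $k$-party NOF protocols for the canonical search problem. The hard CNF $F$ will be the natural encoding of $\Tse_G \circ g_k^n$, where $G$ is a bounded-degree expander on $n$ nodes that is $\Omega(n/\log n)$-routable (such expanders exist by~\cite{frieze00optimal}), and $g_k$ is the $k$-party gadget from Theorem \ref{thm:multi-party} with $\log|\mathcal{X}|=k^{o(1)}$ bits per player. Each Tseitin constraint has constant width, so after composing with $g_k^n$ each constraint of the composed CSP depends on $O(k^{1+o(1)})$ bits; expanding into CNF yields $O(n \cdot 2^{k^{1+o(1)}})$ clauses of width $O(k^{1+o(1)})$. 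Choosing $k \leq (\log s)^{1-o(1)}$ makes the width $O(\log s)$ and leaves $n = s^{1-o(1)}$ variables.

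The Tseitin sensitivity bound gives $\bscrit(S(\Tse_G)) = \Omega(n/\log n)$, so by Theorem \ref{thm:multi-party} the randomised bounded-error $k$-party NOF communication complexity of $S(F)$ is at least $\Tcc{k}(\UDISJ_{k,\bs})$ up to constants, where $\bs = \Omega(n/\log n)$. For $k=2$, the classical set-disjointness bound yields $\Omega(n/\log n) = s^{1-o(1)}$. For $3 \leq k \leq (\log s)^{1-o(1)}$, Sherstov's bound $\Tcc{k}(\UDISJ_{k,\bs}) = \Omega(\sqrt{\bs}/(2^k k))$ combines with $2^k k = s^{o(1)}$ and $\bs = s^{1-o(1)}$ to yield $s^{1/2-o(1)}$.

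The Beame et al.\ simulation converts any rank-$R$ $\Tcc{k}$ refutation of $F$ into a randomised bounded-error $k$-party NOF protocol for $S(F)$ of cost $R \cdot \polylog(s)$. Combining with the NOF lower bound of the previous paragraph yields $R \geq R_k(s)$ after absorbing the polylog factor into the $o(1)$ in the exponent. The main delicate point is verifying that the Beame et al.\ reduction produces a protocol under precisely the NOF input partition used by Theorem \ref{thm:multi-party} (the gadget $g_k$ distributes the $i$-th player's bits across all $n$ gadgets), and that the canonical search problem of the encoded CNF coincides with $S(\Tse_G) \circ g_k^n$ up to a trivial re-encoding of solutions; once this correspondence is in place, the three ingredients compose exactly as above.
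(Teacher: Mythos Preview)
Your proposal is correct and follows essentially the same route as the paper: take the Tseitin formula on an $\Omega(n/\log n)$-routable bounded-degree graph, lift it by the versatile $k$-party gadget $g_k$ of \autoref{cor:gadget}, invoke Theorems~\ref{thm:multi-party} and~\ref{thm:tseitin} together with Sherstov's bound (or the classical two-party bound) to lower-bound the NOF complexity of $S(F\circ g_k^n)$, and then apply the proof-to-protocol simulation (\autoref{proof-to-com}) with $k=(\log n)^{1-\delta}$. The ``delicate point'' you flag---that the canonical search problem of the lifted CNF matches $S(\Tse_G)\circ g_k^n$ under the intended $k$-partition---is exactly what the paper isolates as \autoref{lem:lifted-search}, costing only an additive $dkl$ term.
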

\autoref{thm:rank-lb} simplifies the proof of a similar theorem from \cite{beame07lower}, which held only for a specific family of formulas obtained from non-constant degree graphs, and only for $k < \log \log s$.

We note already here that the quadratic gap between $R_2(s)$ and $R_3(s)$ will be an artefact of us switching from two-party communication to multi-party communication. More specifically, while the two-party communication complexity of set-disjointness $\DISJ_n$ is $\Omega(n)$, the corresponding lower bound for three parties is only~$\Omega(\sqrt{n})$~\cite{sherstov13communication}. Whether the multi-party bound can be improved to $\Omega(n)$ is an open problem.

\paragraph{Length--space lower bounds.}
Continuing in similar spirit, \cite{huynh12virtue} showed how to prove length--space lower bounds for $\Tcc{2}$ systems from lower bounds on the communication complexity of $S(F)$. Using this relationship together with our new multi-party lower bounds, we can extend this result to $\Tcc{k}$ systems of degree $k>2$.
\begin{restatable}[Length--space lower bounds]{theorem}{tradeoffs} \label{thm:tradeoffs}
There are CNF formulas $F$ of size $s$ such that
\begin{itemize}[label=$-$,noitemsep]
\item $F$ admits a Resolution refutation of length $L=s^{1+o(1)}$ and space $\Sp= s^{1/2+o(1)}$.
\item Any length $L$ and space $\Sp$ refutation of $F$ in $\Tcc{k}$ must satisfy
\begin{equation} \label{eq:tradeoff}
\Sp \cdot \log L \enspace\geq\enspace
\begin{cases}
s^{1/2-o(1)}, &\text{for}\enspace k=2, \\
s^{1/4-o(1)}, &\text{for}\enspace 3\leq k\leq(\log s)^{1-o(1)}.
\end{cases}
\end{equation}
\end{itemize}
\end{restatable}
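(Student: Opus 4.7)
The plan is to take the bounded-degree graph $G$ on $n$ nodes provided by \autoref{thm:pebbling} and form the composed CNF $F := \Peb_G \circ g_k^n$, where $g_k$ is the $k$-party gadget from \autoref{thm:multi-party} with domain size $|\mathcal{X}| = 2^{k^{o(1)}}$ per player. Writing each constraint of $\Peb_G$ (which involves at most $d+1 = O(1)$ variables) as a CNF in the $\leq k|\mathcal{X}|$ underlying inputs of the gadget costs only an $|\mathcal{X}|^{O(1)} = 2^{k^{o(1)}}$ blow-up. Thus $F$ has size $s = n \cdot 2^{k^{o(1)}}$, so for $k \leq (\log s)^{1-o(1)}$ we have both $s = n^{1+o(1)}$ and $\sqrt{n} = s^{1/2-o(1)}$.

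For the upper bound, I would use the well-known correspondence between pebblings and resolution refutations: the $O(\sqrt{n})$-pebble strategy for $G$ (given by \autoref{thm:pebbling}) yields a resolution refutation of $\Peb_G$ of length $O(n)$ and space $O(\sqrt{n})$. Because $g_k$ is a gadget of constant size (in $s$), substituting the gadget into this refutation gives a resolution refutation of $F$ with length $s^{1+o(1)}$ and space $s^{1/2+o(1)}$, as required.

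For the lower bound, the key ingredient is the space-to-communication simulation of Huynh--Nordstr\"om~\cite{huynh12virtue}, extended to the $k$-party NOF model. Concretely, I would argue that any $\Tcc{k}$ refutation of $F$ of length $L$ and space $\Sp$ can be simulated by a randomised $k$-party NOF protocol for $S(F) = S(\Peb_G) \circ g_k^n$ of cost $O(\Sp \cdot \log L)$: the simulators walk through the refutation, maintaining in the $\polylog$-communication-cost representation of each configuration the $\Sp$ lines currently in memory, and use the standard ``find a falsified constraint'' idea to extract an element of $S(F)$ once the empty clause is derived. Combining this simulation with \autoref{thm:multi-party} and $\bscrit(S(\Peb_G)) = \Theta(\sqrt{n})$ yields
\[
\Sp \cdot \log L \;=\; \Omega\!\bigl(\Tcc{}\text{-cost of }\UDISJ_{k,\sqrt{n}}\bigr).
\]
For $k=2$, Kalyanasundaram--Schnitger gives $\Omega(\sqrt{n}) = s^{1/2-o(1)}$. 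For $3 \leq k \leq (\log s)^{1-o(1)}$, Sherstov's bound~\cite{sherstov13communication} gives $\Omega(n^{1/4}/(2^k k)) = s^{1/4-o(1)}$ since $2^k k = s^{o(1)}$ in this range.

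The main obstacle is the proof-system-to-protocol simulation step: I must check that Huynh--Nordstr\"om's two-party simulation of $\Tcc{2}$ refutations truly carries over to $k$-party NOF for $\Tcc{k}$, since now each ``line'' of the refutation is a $\polylog$-cost $k$-party NOF predicate rather than a $\polylog$-cost two-party one, and the players must also agree on the variable partition. Once this simulation is in place, the rest is bookkeeping, combining \autoref{thm:pebbling}, \autoref{thm:multi-party}, and Sherstov's $\UDISJ$ lower bound. A secondary technical point is verifying that the CNF encoding of the composed formula does not change the canonical search problem beyond what \autoref{thm:multi-party} already controls---this is standard because each composed constraint depends on only $O(1)$ gadgets and therefore on only $O(1)$ blocks of input bits.
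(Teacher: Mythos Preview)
Your proposal is correct and matches the paper's proof essentially step for step: the same formula $\Peb_G\circ g_k^n$, the same pebbling-based Resolution upper bound, and the same combination of the Huynh--Nordstr\"om space-to-protocol simulation (the paper's \autoref{hn-sizespace}) with \autoref{thm:multi-party}, \autoref{thm:pebbling}, and Sherstov's $\UDISJ$ bound for the lower bound; the two technical points you flag are exactly the paper's \autoref{hn-sizespace} and \autoref{lem:lifted-search}. Two small slips to clean up: each gadget contributes $k\log|\mathcal{X}|$ input \emph{bits} (not $k|\mathcal{X}|$), so the per-constraint CNF blow-up is $2^{O(k\log|\mathcal{X}|)}=|\mathcal{X}|^{O(k)}$ rather than $|\mathcal{X}|^{O(1)}$ (this still yields $s=n^{1+o(1)}$ in your range of $k$); and the simulation is a \emph{binary search} over the configurations rather than a linear walk---that is what produces the $\log L$ factor---and it carries an extra $\polylog s$ per line evaluated, harmlessly absorbed into the $o(1)$.
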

We hesitate to call \autoref{thm:tradeoffs} a \emph{tradeoff} result since our only upper bound is a refutation requiring space $\Sp=s^{1/2+o(1)}$ and we do not know how to decrease this space usage by trading it for length; this is the same situation as in~\cite{huynh12virtue}. Surprisingly, in a subsequent work, Galesi et al.~\cite{galesi15space} have shown that \emph{any} unsatisfiable CNF formula admits an exponentially long Cutting Planes refutation in \emph{constant} space, which gives a second data point in the length--space parameter space for which an upper bound exists. We also mention that while the CNF formulas $F$ in \autoref{thm:tradeoffs} are lifted versions of pebbling formulas, we could have formulated similar length--space lower bounds for lifted Tseitin formulas (where, e.g., $\Sp\cdot\log L \geq s^{1-o(1)}$ for $k=2$). But for Tseitin formulas we do not have close-to-matching upper bounds.

In any case, \autoref{thm:tradeoffs} gives, in particular, the first length--space lower bounds for dynamic SOS proofs of degree $k$. In addition, even in the special case of $k=2$, \autoref{thm:tradeoffs} simplifies and improves on \cite{huynh12virtue}. However, for Polynomial Calculus Resolution (a $\Tcc{2}$ system), the best known length--space tradeoff results are currently proved in the recent work of Beck et al.~\cite{beck13some}. For Resolution (maybe the simplest $\Tcc{2}$ system), even stronger tradeoff results have been known since~\cite{ben-sasson11understanding}; see also Beame et al.~\cite{beame12time} for nontrivial length lower bounds in the superlinear space regime. For Cutting Planes (a $\Tcc{2}$ system) \autoref{thm:tradeoffs} remains the state-of-the-art to our knowledge.

\subsection{Models of communication complexity}

We work in the standard models of two-party and multi-party communication complexity; see~\cite{kushilevitz97communication,jukna12boolean} for definitions. Here we only recall some conventions about randomised protocols. A protocol $\Pi$ solves a search problem $S$ with \emph{error $\epsilon$} iff on any input $x$ the probability that $(x,\Pi(x))\in S$ is at least $1-\epsilon$ over the random coins of the protocol. Note that $\Pi(x)$ need not be the same feasible solution; it can depend on the outcomes of the random coins. The protocol is of \emph{bounded-error} if $\epsilon \leq 1/4$. The constant $1/4$ here can often be replaced with any other constant less than $1/2$ without affecting the definitions too much. In the case of computing boolean functions this follows from standard boosting techniques~\cite[Exercise 3.4]{kushilevitz97communication}. While these boosting techniques may fail for general search problems, we do not encounter any such problems in this work.

\section{Versatile Gadgets} \label{sec:versatile}

In this section we introduce \emph{versatile} two-party and multi-party functions. Our proofs of Theorems~\ref{thm:two-party} and \ref{thm:multi-party} will work whenever we choose $g$ or $g_k$ to be a versatile gadget. We start by introducing the terminology in the two-party case; the multi-party case will be analogous.

\subsection{Self-reductions and versatility}
The simplest reductions between communication problems are those that can be computed without communication. Let $f_i\colon \mathcal{X}_i\times\mathcal{Y}_i\to\{0,1\}$ for $i=1,2$, be two-party functions. We say that $f_1$ \emph{reduces to} $f_2$, written $f_1\leq f_2$, if the communication matrix of $f_1$ appears as a submatrix of the communication matrix of $f_2$. Equivalently, $f_1\leq f_2$ iff there exist one-to-one mappings $\pi_A$ and $\pi_B$ such that
\[
f_1(x,y) = f_2(\pi_A(x),\pi_B(y))\qquad\text{for all}\quad (x,y)\in\mathcal{X}_1\times\mathcal{Y}_1.
\]
Our restriction to one-to-one reductions above is merely a technical convenience (cf.\ Babai et al.~\cite{babai86complexity} allow reductions to be many-to-one).

\begin{example}
Let $\ThreeEQ\colon[3]\times[3]\to\{0,1\}$ be the equality function with inputs from $[3]$. Then $\AND$ reduces to $\ThreeEQ$ since $\AND(x,y)=\ThreeEQ(1+x,3-y)$.
\end{example}
We will be interested in special kinds of reductions that reduce a function to \emph{itself}. Our first flavour of self-reducibility relates a function $f$ and its negation $\neg f$:
\begin{itemize}[leftmargin=1.4cm]
	\item[{\raisebox{-.63\height}[0pt][0pt]{
	\includegraphics[scale=0.13]{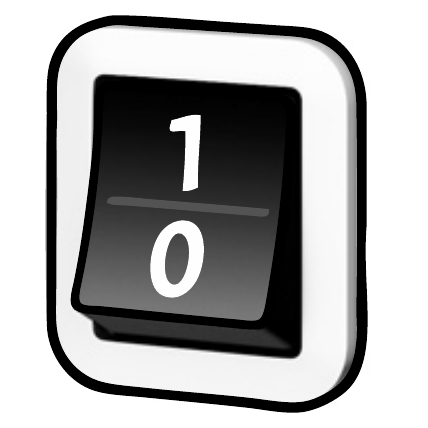}
	}}\hspace{-3pt}]
{\bf Flippability.} A function $f$ is called \emph{flippable} if $\neg f\leq f$. Note that since the associated reduction maps $z$-inputs to $(1-z)$-inputs in a one-to-one fashion, a flippable function must be \emph{balanced}: exactly half of the inputs satisfy $f(x,y)=1$.
\end{itemize}

\begin{example}
The $\XOR$ function is flippable via $\neg\XOR(x,y)=\XOR(1-x,y)$. By contrast, $\AND$ and $\ThreeEQ$ are not balanced and hence not flippable.
\end{example}

We will also consider randomised reductions where the two parties are allowed to \emph{synchronise} their computations using public randomness. More precisely, even though the two parties are still not communicating, we can let the mappings $\pi_A$ and $\pi_B$ depend on a public random string $\bm{r}\in\{0,1\}^*$, whose distribution the two parties can freely choose. This way, a random reduction computes $(x,y) \mapsto (\pi_A(x,\bm{r}),\pi_B(y,\bm{r}))$. The following definition is similar to the \emph{perfectly secure} functions of Feige et al.~\cite{feige94minimal}.
\begin{itemize}[leftmargin=1.4cm]
	\item[{\raisebox{-.6\height}[0pt][0pt]{
	\includegraphics[scale=0.12]{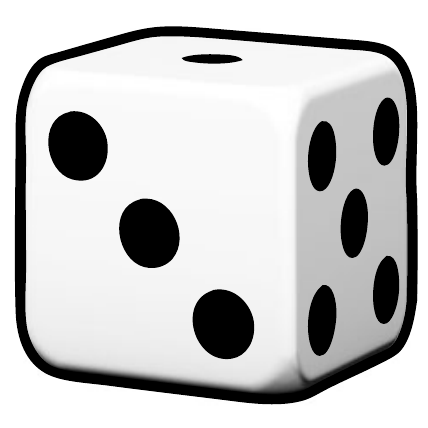}
	}}\hspace{-1pt}]
{\bf Random self-reducibility.} A function $f$ is called \emph{random-self-reducible} if there are mappings $\pi_A$ and $\pi_B$ together with a random variable $\bm{r}$ such that for every $z$-input $(x,y)\in f^{-1}(z)$ the random pair $(\pi_A(x,\bm{r}),\pi_B(y,\bm{r}))$ is uniformly distributed among all the $z$-inputs of $f$.
\end{itemize}
\begin{example}
The equality function $\EQ\colon [n]\times[n]\to\{0,1\}$ is random-self-reducible: we can use the public randomness to sample a permutation $\bm{\pi}\colon[n]\to[n]$ uniformly at random and let the two parties compute $(x,y)\mapsto(\bm{\pi}(x),\bm{\pi}(y))$. (In fact, to further save on the number of random bits used, it would suffice to choose $\bm{\pi}$ from any group that acts 2-transitively on $[n]$.)
\end{example}

A notable example of a function that is \emph{not} random-self-reducible is $\AND$; it has only one $1$-input, which forces any self-reduction to be the identity map. This is particularly inconvenient since $\AND$ is featured in the set-disjointness function $\DISJ_n=\OR_n\circ\AND^n$, which will be the starting point for our reductions. To compensate for the shortcomings of $\AND$ we work with a slightly larger function $g\geq\AND$ instead.
\begin{definition}[Versatility]
A two-party function $g$ is called \emph{versatile} if (1) $g\geq\AND$, (2) $g$ is flippable, and (3) $g$ is random-self-reducible.
\end{definition}

\begin{figure}
\begin{floatrow}
\ffigbox[6cm]{
\begin{lpic}[]{versatile(.25)}
\lbl[c]{15.8 ,15.8;$1$}
\lbl[c]{41.2 ,15.8;$0$}
\lbl[c]{66.66,15.8;$0$}
\lbl[c]{92   ,15.8;$1$}

\lbl[c]{15.8 ,41.2;$1$}
\lbl[c]{41.2 ,41.2;$1$}
\lbl[c]{66.66,41.2;$0$}
\lbl[c]{92   ,41.2;$0$}

\lbl[c]{15.8 ,66.66;$0$}
\lbl[c]{41.2 ,66.66;$1$}
\lbl[c]{66.66,66.66;$1$}
\lbl[c]{92   ,66.66;$0$}

\lbl[c]{15.8 ,92;$0$}
\lbl[c]{41.2 ,92;$0$}
\lbl[c]{66.66,92;$1$}
\lbl[c]{92   ,92;$1$}

\boldmath
\lbl[c]{-8 ,92;$0$}
\lbl[c]{-8 ,66.66;$1$}
\lbl[c]{-8 ,41.2;$2$}
\lbl[c]{-8 ,15.8;$3$}

\lbl[c]{15.8 ,116;$0$}
\lbl[c]{41.2 ,116;$1$}
\lbl[c]{66.66,116;$2$}
\lbl[c]{92   ,116;$3$}
\end{lpic}}
{\caption{Function $\VER$.}\label{fig:ver}}

\ffigbox[6cm]{
\begin{lpic}[]{hnmatrix(.25)}
\lbl[c]{15.8 ,15.8;$0$}
\lbl[c]{41.2 ,15.8;$0$}
\lbl[c]{66.66,15.8;$1$}
\lbl[c]{92   ,15.8;$1$}
\lbl[c]{117.4,15.8;$1$}
\lbl[c]{142.8,15.8;$0$}

\lbl[c]{15.8 ,41.2;$0$}
\lbl[c]{41.2 ,41.2;$1$}
\lbl[c]{66.66,41.2;$0$}
\lbl[c]{92   ,41.2;$1$}
\lbl[c]{117.4,41.2;$0$}
\lbl[c]{142.8,41.2;$1$}

\lbl[c]{15.8 ,66.66;$1$}
\lbl[c]{41.2 ,66.66;$0$}
\lbl[c]{66.66,66.66;$0$}
\lbl[c]{92   ,66.66;$0$}
\lbl[c]{117.4,66.66;$1$}
\lbl[c]{142.8,66.66	;$1$}
\end{lpic}}
{\caption{Function $\HN$.}\label{fig:hn}}
\end{floatrow}
\end{figure}

\subsection{Two-party example} \label{ssec:versatility}

Consider the function $\VER\colon \Z_4\times\Z_4\to\{0,1\}$ defined by
\begin{equation}
\VER(x,y) = 1\enspace \iff\enspace x+y \in \{2,3\},\qquad\text{for all } x,y\in\Z_4,
\end{equation}
where the arithmetic is that of $\Z_4$; see \autoref{fig:ver}.
\begin{lemma} \label{lem:g}
$\VER$ is versatile.
\end{lemma}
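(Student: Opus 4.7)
The plan is to verify the three defining conditions of versatility for $\VER$ in turn; the first two are immediate from the matrix in \autoref{fig:ver}, and the only substantive step is random self-reducibility.

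For $\VER\geq\AND$, the restriction to $x,y\in\{0,1\}\subset\Z_4$ yields a $2\times 2$ submatrix with entries $\VER(0,0)=\VER(0,1)=\VER(1,0)=0$ and $\VER(1,1)=1$, which is exactly the $\AND$ matrix; so the inclusion $\{0,1\}\hookrightarrow\Z_4$ serves as both $\pi_A$ and $\pi_B$. For flippability, take $\pi_A(x)=x+2\pmod 4$ and $\pi_B(y)=y$: both are one-to-one, and $x+y+2$ lies in $\{2,3\}$ iff $x+y$ lies in $\{0,1\}$, so $\VER(\pi_A(x),\pi_B(y))=\neg\VER(x,y)$.

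For random self-reducibility, the plan is to exhibit a group $G$ of product permutations $(\pi_A,\pi_B)\in S_{\Z_4}\times S_{\Z_4}$ that preserves $\VER$ and acts transitively on each level set $\VER^{-1}(z)$; drawing $(\pi_A,\pi_B)$ uniformly from $G$ will then equidistribute the image over $\VER^{-1}(z)$, since a transitive action on a set of the same cardinality as $|G|$ is regular. The natural candidates come from $\VER(x,y)$ depending only on $x+y \bmod 4$: the translations $\tau_a\colon(x,y)\mapsto(x+a,y-a)$ for $a\in\Z_4$ all preserve $\VER$ and act transitively on each ``slice'' $\{x+y\equiv s\pmod 4\}$. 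To move between the two slices that make up $\VER^{-1}(z)$, I would use the involution $\sigma\colon(x,y)\mapsto(-x,1-y)$, whose image has sum $1-(x+y)$; since $1-2\equiv 3$ and $1-3\equiv 2\pmod 4$ (and similarly $1-0\equiv 1$, $1-1\equiv 0$), $\sigma$ preserves $\VER$ and swaps the two slices inside each level set. Combining the four translations with $\{\mathrm{id},\sigma\}$ yields a group of order~$8$, matching $|\VER^{-1}(z)|=8$, and a short check confirms transitivity on each level set. No genuine obstacle arises: the whole argument reduces to modular-arithmetic checks, and the key insight is simply that $\VER$ was designed so that its level sets decompose cleanly under the translation--reflection symmetries of $\Z_4$.
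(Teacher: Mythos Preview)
Your proof is correct and is essentially the same as the paper's. The paper's two-step procedure---first replace $(x,y)$ by a uniform element of $\{(x,y),(1-x,-y)\}$, then apply a uniformly random translation $(x',y')\mapsto(x'+a,y'-a)$---is exactly sampling uniformly from the $8$-element symmetry group you describe (your $\sigma$ and the paper's $(x,y)\mapsto(1-x,-y)$ differ only by a translation, so together with the $\tau_a$ they generate the same group); you simply make the group-action structure explicit and invoke regularity, whereas the paper gives the construction procedurally.
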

\begin{proof}
The reduction from $\AND$ is simply given by $\AND(x,y) = \VER(x,y)$. Moreover, $\VER$ is flippable because $\neg \VER(x,y)= \VER(x+2,y)$. To see that $\VER$ is random-self-reducible, start with $(x,y)$ and compute as follows. First, choose $(\bm{x},\bm{y})$ uniformly at random from the set $\{(x,y),(1-x,-y)\}$ so that $\bm{x}+\bm{y}$ is uniformly distributed either in the set $\{0,1\}$ if $(x,y)$ was a $0$-input, or in the set $\{2,3\}$ if $(x,y)$ was a $1$-input. Finally, choose a random $\bm{a}\in\Z_4$ and output $(\bm{x}+\bm{a},\bm{y}-\bm{a})$.
\end{proof}

It is not hard to show that $\VER$ is in fact a minimum-size example of a versatile function: if $g\colon[a]\times[b]\to\{0,1\}$ is versatile then $a,b\geq 4$. Indeed, $\VER$ is the smallest two-party function for which our proof of \autoref{thm:two-party} applies. By comparison, the original proof of \autoref{thm:two-party}~\cite{huynh12virtue} uses a certain subfunction $\HN\leq\ThreeIND$ whose communication matrix is illustrated in \autoref{fig:hn}. Thus, somewhat interestingly, our proof yields a result that is incomparable to~\cite{huynh12virtue} since we have neither $\VER\leq\HN$ nor $\HN\leq\VER$.

Coincidentally, $\VER$ makes an appearance in Sherstov's pattern matrix method~\cite[\S12]{sherstov11pattern}, too. There, the focus is on exploiting the \emph{matrix-analytic} properties of the communication matrix of $\VER$. By contrast, in this work, we celebrate its \emph{self-reducibility} properties.

\subsection{Multi-party examples} \label{ssec:multi-party-examples}

In the multi-party setting we restrict our attention to $k$-party reductions $f_1\leq f_2$ for $k$-party functions $f_i\colon\mathcal{X}_i^k\to\{0,1\}$ that are determined by one-to-one mappings $\pi_1,\ldots,\pi_k$ satisfying
\[
f_1(x_1,\ldots,x_k)=f_2(\pi_1(x_1),\ldots,\pi_k(x_k))\qquad\text{for all}\quad(x_1,\ldots,x_k)\in \mathcal{X}_1^k.
\]
This way any player that sees an input $x_i$ can evaluate $\pi_i(x_i)$ without communication. As before, a randomised reduction can also depend on public coins.

\emph{Versatile} $k$-party functions $g_k\colon\mathcal{X}^k\to\{0,1\}$ are defined analogously to the two-party case: we require that the $k$-party $k$-bit $\AND_k$ function reduces to $g_k$, and that $g_k$ is both flippable and random-self-reducible---all under $k$-party reductions.

It is known that every $k$-party function is a subfunction of some, perhaps exponentially large random-self-reducible function~\cite{feige94minimal}. However, in the following, we are interested in finding examples of \emph{small} versatile $k$-party functions in order to optimise our constructions. We proceed to give two examples of well-studied $k$-party functions and prove them versatile.

\paragraph{First example: Quadratic character.}

Denote by $\chi\colon\Z_p^\times\to\{0,1\}$ the indicator function for quadratic residuosity modulo $p$, i.e., $\chi(x) = 1$ iff $x$ is a square in $\Z_p$. The pseudo-random qualities of $\chi$ have often made it an object of study in communication complexity~\cite{babai92multiparty,babai03communication,ada12nof}. Moreover, the self-reducibility properties of $\chi$ are famously useful in cryptography, starting with~\cite{goldwasser84probabilistic}.

For our purposes we let $p$ to be an $O(k)$-bit prime. Following~\cite[\S2.5]{babai92multiparty} the $k$-party quadratic character function $\QCS_k\colon\Z_p^k\to\{0,1\}$ is defined as 
\begin{equation}
\textstyle
\QCS_k(x_1,\ldots,x_k):=\chi\big(\sum_i x_i\big).
\end{equation}
We leave $\QCS_k(x_1,\ldots,x_k)$ undefined for inputs with $\sum_i x_i = 0$, i.e., we consider $\QCS_k$ to be a promise problem. Our three items of versatility fall out of the well-known properties of~$\chi$.
\begin{lemma}\label{qcs-versatile}
$\QCS_k$ is versatile.
\end{lemma}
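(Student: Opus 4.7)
The plan is to verify the three conditions of versatility, all of which reduce to standard facts about the Legendre character $\chi$: its multiplicativity, the fact that multiplying by a fixed non-residue flips the character value, and the Weil bound on character sums of square-free polynomials. Flippability and random-self-reducibility will fall out quickly; the reduction $\AND_k \leq \QCS_k$ will be the main technical step.

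Flippability is immediate once we fix a non-residue $c \in \Z_p^\times$ and set $\pi_i(x_i) = c x_i$: the new sum is $cs$ (where $s = \sum_i x_i$), and multiplicativity gives $\chi(cs) = 1 - \chi(s)$ in the $\{0,1\}$ encoding, so the value flips, and $c \neq 0$ keeps the sum nonzero. For random-self-reducibility, I would combine a multiplicative scaling by a uniform random QR with an additive reshuffle that leaves the sum invariant. Publicly sample a uniform QR $\bm{b} \in \Z_p^\times$ together with independent uniform $\bm{a}_1, \ldots, \bm{a}_{k-1} \in \Z_p$, and let player $i$ compute $\pi_i(x_i) = \bm{b} x_i + \bm{a}_i - \bm{a}_{i-1}$, with $\bm{a}_0 = \bm{a}_k := 0$. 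Each $\pi_i$ is a bijection on $\Z_p$ for every fixing of the randomness, the output sum is exactly $\bm{b} s$, and $\chi(\bm{b} s) = \chi(s)$ since $\bm{b}$ is a QR. A short check shows that the first $k-1$ output coordinates are jointly uniform on $\Z_p^{k-1}$ while the last is forced to make the sum equal $\bm{b} s$; since $\bm{b}$ ranges uniformly over QRs, $\bm{b} s$ is uniform over $\{s' \in \Z_p^\times : \chi(s') = \chi(s)\}$, so the output tuple ends up uniform on $\QCS_k^{-1}(\chi(s))$.

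The main obstacle is the reduction $\AND_k \leq \QCS_k$. The key idea is to use a very symmetric construction $\pi_i(x_i) = A + x_i(B - A)$ with $A \neq B$ in $\Z_p$, so that the output sum $\sum_i \pi_i(x_i) = kA + (B - A) t$ depends only on the Hamming weight $t = \sum_i x_i$. This collapses the $2^k$ AND-inputs into just $k+1$ sums forming a length-$(k{+}1)$ arithmetic progression with step $d := B - A$ and starting point $u := kA$. It then suffices to find $u \in \Z_p$ and $d \in \Z_p^\times$ such that $\chi(u + j d) = 0$ for $j = 0, \ldots, k - 1$ and $\chi(u + k d) = 1$, since then the all-ones input is the unique one mapping to a residue, and all $k+1$ sums are automatically nonzero so the image lies in the promise domain of $\QCS_k$. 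Expanding the indicator $\prod_{j=0}^{k} \tfrac{1}{2}\bigl(1 + \epsilon_j \chi(u + j d)\bigr)$ and bounding each resulting character sum $\sum_u \chi\bigl(\prod_{j \in T}(u + j d)\bigr)$ by $(|T| - 1)\sqrt{p}$ via the Weil bound (applicable since the polynomial is square-free for $d \neq 0$), the count of valid $u$ is $p/2^{k+1} \pm O(k \sqrt{p})$, which is positive whenever $\log p \geq 2k + O(\log k)$. Taking $p$ to be a sufficiently large $O(k)$-bit prime thus produces the required $A, B$, and the reduction is one-to-one because $A \neq B$.
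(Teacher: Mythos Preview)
Your proposal is correct and follows essentially the same approach as the paper: flippability via multiplication by a fixed non-residue, random-self-reducibility via a random QR scaling combined with a random zero-sum additive shift, and the $\AND_k$ reduction via a length-$(k{+}1)$ arithmetic progression realising the pattern ``non-residue, \ldots, non-residue, residue.'' The only cosmetic differences are that the paper takes step $d=1$ (consecutive integers) and cites the existence of such a run as a known fact rather than deriving it from Weil's bound, and its reduction map adds the offset only to the first coordinate rather than symmetrically to all of them.
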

\begin{proof}
\emph{Reduction from $\AND_k$:}
We need the following nonelementary fact (see, e.g., Lemma~6.13 in~\cite{babai03communication} or the recent work~\cite{wright13quadratic}): if $p$ is a large enough $O(k)$-bit prime then there are $k+1$ consecutive integers $\{a,a+1,\ldots,a+k\}\subseteq \Z^\times_k$ realising the pattern
\[
\chi(a) = \chi(a+1) = \cdots = \chi(a+k-1) = 0\qquad\text{and}\qquad \chi(a+k)=1.
\]
This immediately facilitates the reduction: an input $(y_1,\ldots,y_k)$ of $\AND_k$ is mapped to an input $(a+y_1,y_2,\ldots,y_k)$ of $\QCS_k$.
\emph{Flippability:}
Map $x_i\mapsto s\cdot x_i$ for all $i$, where $s\neq 0$ is a fixed quadratic nonresidue.
\emph{Random-self-reducibility:}
Choose a random quadratic residue $\bm{r}\in\Z_p$ and numbers $\bm{a}_1,\ldots,\bm{a}_k\in\Z_p$ satisfying $\bm{a}_1+\cdots+\bm{a}_k = 0$. The random self-reduction maps $x_i\mapsto \bm{r}\cdot x_i + \bm{a}_i$ for all $i$.
\end{proof}

\paragraph{Second example: Pointer jumping.}

Next, we observe that certain variants of the $k$-party pointer jumping function are versatile. To explain this idea, we begin by describing a simple construction where each of the $k$ inputs requires $\Theta(k\log k)$ bits to represent. After this we improve on the construction by using known results on branching programs; we note that similar ideas have been used in the context of secure multi-party computations~\cite{cramer03efficient}.

Define the $k$-party \emph{pointer jumping} function $\Jump_k\colon \mathcal{X}^k\to\{0,1\}$ as follows. The inputs are permutations $x_i\colon [2k]\to[2k]$, $i\in[k]$, and the function value is given by
\begin{equation} \label{eq:jump-k}
\Jump_k(x_1,\ldots,x_k) = 0 \quad\iff\quad (x_k\circ x_{k-1}\circ \cdots \circ x_1)(1) \in [k].
\end{equation}
A useful way to view the input $(x_1,\ldots,x_k)$ is as a layered digraph: there are $k+1$ layers, each containing $2k$ nodes; the input $x_i$ defines a perfect matching between layers $i$ and $i+1$; and the nodes on the last layer are labelled in a \emph{balanced} way with $k$ zeroes and $k$ ones. The value of the function is the label of the sink that is reachable from the $1$st node of the $1$st layer.
\begin{lemma} \label{lem:jump}
$\Jump_k$ is versatile.
\end{lemma}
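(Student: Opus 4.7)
The plan is to verify the three versatility properties---$\AND_k \leq \Jump_k$, flippability, and random self-reducibility---separately, each by an explicit construction on the permutation representation of $\Jump_k$.

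\textbf{Reduction from $\AND_k$.} I would use a ``conditional advance pointer'' gadget. Mark advance nodes $A_i := k+i$ (in the $1$-labelled half $\{k+1,\ldots,2k\}$) for $i=1,\ldots,k$ and divert-sinks $S_i := i$ (in the $0$-labelled half $[k]$). Player $1$ sends $y_1=1$ to the transposition $(1\ A_1)$ and $y_1=0$ to the identity. For $i\ge 2$, player $i$ sends $y_i=1$ to $(A_{i-1}\ A_i)$ and $y_i=0$ to $(A_{i-1}\ S_i)$. When all $y_i=1$ the token traverses $1\to A_1\to\cdots\to A_k=2k$, so $\Jump_k=1$. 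Otherwise, letting $j$ be the smallest zero-index, the token is either left at $1$ (if $j=1$) or diverted to $S_j$; every subsequent transposition acts only on nodes in $\{A_\ell,S_\ell:\ell>j\}$, which are disjoint from $\{1,S_j\}$, so the token stays in $[k]$ and $\Jump_k=0$.

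\textbf{Flippability.} Let $\rho$ be the involution of $[2k]$ that swaps $j\leftrightarrow k+j$ for each $j\in[k]$. The map $x_k\mapsto \rho\circ x_k$, together with the identity on the other players, is a bijection in each coordinate and turns $(x_k\cdots x_1)(1)$ into $\rho((x_k\cdots x_1)(1))$, which lies in the opposite half; this flips the $\Jump_k$-value.

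\textbf{Random self-reducibility.} Using public coins, sample independent uniform permutations $\bm{\sigma}_0\in F$, $\bm{\sigma}_1,\ldots,\bm{\sigma}_{k-1}\in S_{2k}$ and $\bm{\sigma}_k\in H$, where $F$ is the stabiliser of $1$ and $H$ is the subgroup preserving the two halves. Player $i$ outputs $\bm{\sigma}_i\circ x_i\circ \bm{\sigma}_{i-1}^{-1}$, which is a bijection in $x_i$ for each fixed sample. Because $\bm{\sigma}_0(1)=1$, the new composition at $1$ equals $\bm{\sigma}_k((x_k\cdots x_1)(1))$, whose half-label is preserved by $\bm{\sigma}_k\in H$, so the $\Jump_k$-value is unchanged. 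For uniformity, fix a target $z$-input $(y_1,\ldots,y_k)$ and put $C=x_k\cdots x_1$, $D=y_k\cdots y_1$. The equations $\bm{\sigma}_i x_i \bm{\sigma}_{i-1}^{-1}=y_i$ force $\bm{\sigma}_i=y_i\bm{\sigma}_{i-1}x_i^{-1}$ inductively, so all $\bm{\sigma}_i$ are determined by $\bm{\sigma}_0$, and the remaining constraint $\bm{\sigma}_k\in H$ becomes $\bm{\sigma}_0\in F\cap D^{-1}HC$. Substituting $\pi=D\bm{\sigma}_0C^{-1}$ identifies this set with $\{\pi\in H:\pi(C(1))=D(1)\}$, which has size $(k-1)!\cdot k!$ as soon as $C(1)$ and $D(1)$ lie in the same half; thus every $z$-input has the same preimage size and so is produced with equal probability.

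The main obstacle is the uniformity calculation in the random self-reduction: one must argue that the fibre sizes are independent of the targeted $z$-input, which is why $F$ and $H$ are chosen as the stabiliser of $1$ and the half-preserving subgroup rather than, say, all of $S_{2k}$ at both ends. The other two properties are combinatorial bookkeeping on the explicit transposition gadget. The subsequent refinement that drives the per-player input size down from $\Theta(k\log k)$ bits to $k^{o(1)}$ bits will be handled separately by replacing the chain $A_0,\ldots,A_k$ with a short Barrington-style branching program over a constant-size non-solvable group, without disturbing the versatility verification above.
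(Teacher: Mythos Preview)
Your proof is correct and follows essentially the same approach as the paper: the flippability map and the random self-reduction (conjugating by uniform permutations constrained to fix~$1$ at the start and the two halves at the end) are exactly the paper's constructions, and you additionally spell out the fibre-counting argument for uniformity that the paper leaves implicit. The only real difference is in the $\AND_k$ reduction: the paper uses the simpler choice $x_i = \mathrm{id}$ when $y_i=0$ and $x_i = (j\mapsto j{+}1 \bmod 2k)$ when $y_i=1$, so that the composition is a cyclic shift by the Hamming weight of $y$, whereas your transposition-based gadget is a bit more bespoke; both are valid, but the paper's version is shorter to verify.
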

\begin{proof}
\emph{Reduction from $\AND_k$:}
Given an input $(y_1,\ldots,y_k)$ of $\AND_k$ we reduce it to an input $(x_1,\ldots,x_k)$ of $\Jump_k$ as follows (see \autoref{fig:pointer}). If $y_i=0$ then $x_i$ is defined to be the identity permutation on $[2k]$, otherwise $x_i$ is the cyclic permutation that maps $j\mapsto j+1$ for $j\in[2k-1]$ and $2k\mapsto 1$.
\emph{Flippability:}
Replace the input $x_k$ with $\pi\circ x_k$, where $\pi\colon[2k]\to[2k]$ is some fixed permutation that swaps the sets $[k]$ and $[k+1,2k]$, i.e., $\pi([k])=[k+1,2k]$.
\emph{Random-self-reducibility:}
The random self-reduction is best visualised as acting on the layered graph associated with an input $(x_1,\ldots,x_k)$. First, sample $k+1$ permutations $\bm{\pi}_1,\ldots,\bm{\pi}_{k+1}\colon[2k]\to[2k]$ uniformly and independently at random under the restrictions that $\bm{\pi}_1$ fixes the element $1$ and $\bm{\pi}_{k+1}$ fixes the set $[k]$. Then use $\bm{\pi}_i$ to relabel the nodes on the $i$-th layer. Formally this means that the input $x_i$ is mapped to $\bm{\pi}_{i+1}\circ x_i\circ \bm{\pi}^{-1}_i$.
\end{proof}

The reduction $\AND_k\leq\Jump_k$ above was implicitly using a simple read-once permutation branching program for $\AND_k$; see \autoref{fig:pointer}. We will now
optimise this construction by using more efficient branching programs.

\begin{figure}[t]
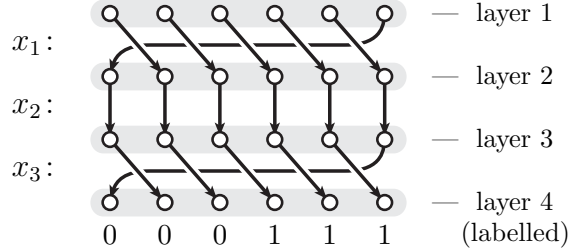

\floatbox[{\capbeside\thisfloatsetup{capbesideposition={left,top},capbesidewidth=6.5cm}}]{figure}[\FBwidth]
{\caption{Example of $\AND_3\leq\Jump_3$. The input $(x_1,x_2,x_3)$ of $\Jump_3$ is the result of applying the reduction to the input $(1,0,1)$ of $\AND_3$.}
\label{fig:pointer}}
{\begin{lpic}[l(1.4cm),r(2cm),t(-.5cm)]{pointer(.33)}
\small
\lbl[l]{140,103.7;{---~~layer $1$}}
\lbl[l]{140,78.3;{---~~layer $2$}}
\lbl[l]{140,52.8;{---~~layer $3$}}
\lbl[l]{140,27.5;{---~~layer $4$}}
\lbl[l]{151.5,14.9;{(labelled)}}
\large
\lbl[r]{-8,91;$x_1\colon$}
\lbl[r]{-8,65.5;$x_2\colon$}
\lbl[r]{-8,40.2;$x_3\colon$}
\normalsize
\lbl[c]{10.2,14.9;$0$}
\lbl[c]{32.3,14.9;$0$}
\lbl[c]{54.5,14.9;$0$}
\lbl[c]{76.6,14.9;$1$}
\lbl[c]{98.8,14.9;$1$}
\lbl[c]{120.9,14.9;$1$}
\end{lpic}}
\end{figure}

\begin{definition}[PBPs]
A \emph{permutation branching program} (PBP) of width $w$ and length $\ell$ is defined by a sequence of instructions $(i_l,\pi_l,\tau_l)$, $l\in[\ell]$, where $\pi_l,\tau_l\colon[w]\to[w]$ are permutations and each $i_l\in[n]$ indexes one of the $n$ input variables $x_1,\ldots,x_n$. Let an input $x\in\{0,1\}^n$ be given. We say that an instruction $(i,\pi,\tau)$ \emph{evaluates to $\pi$} if $x_i=0$; otherwise the instruction \emph{evaluates to~$\tau$}. The PBP \emph{evaluates} to the composition of the permutations evaluated at the instructions. Finally, if $\gamma\colon[w]\to[w]$ is a permutation, we say that the PBP \emph{$\gamma$-computes} a function $f\colon\{0,1\}^n\to\{0,1\}$ if it evaluates to the identity permutation $e\colon[w]\to[w]$ on each $0$-input in $f^{-1}(0)$ and to the permutation $\gamma\neq e$ on each $1$-input in $f^{-1}(1)$.
\end{definition}

\begin{lemma} \label{lem:pbp-to-versatile}
Suppose there exists a width-$w$ length-$\ell$ PBP that $\gamma$-computes the $\AND_k$ function. Then there exists a versatile $k$-party function on $O(\ell w \log w)$ input bits.
\end{lemma}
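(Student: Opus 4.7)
The plan is to generalise the reduction $\AND_k \leq \Jump_k$ of \autoref{lem:jump}, with the given PBP playing the role of the trivial length-$k$ branching program used there. Let the PBP $\gamma$-compute $\AND_k$ for some $\gamma \neq e$, and write $L_i := \{l \in [\ell] : i_l = i\}$. Player $i$'s input will be a tuple $x_i = (x_i^l)_{l \in L_i} \in S_w^{L_i}$; I combine these into $\sigma_l := x_{i_l}^l$ and $\sigma := \sigma_\ell \circ \cdots \circ \sigma_1$. Fix any point $v \in [w]$ with $\gamma(v) \neq v$ and any subset $V \subseteq [w]$ of size $w/2$ (WLOG $w$ is even, after padding by one) satisfying $v \notin V$ and $\gamma(v) \in V$. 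Define $g_k(x_1, \ldots, x_k) := 1 \iff \sigma(v) \in V$. The total input size is $\sum_i |L_i| \cdot \lceil \log w! \rceil = O(\ell w \log w)$ bits.

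For the reduction from $\AND_k$, player $i$ sets $x_i^l := \pi_l$ if $y_i = 0$ and $x_i^l := \tau_l$ if $y_i = 1$ for each $l \in L_i$; then $\sigma$ equals the PBP's evaluation on $y$, which is $e$ on $0$-inputs (giving $g_k = 0$ since $v \notin V$) and $\gamma$ on the all-ones input (giving $g_k = 1$ since $\gamma(v) \in V$). Balance is automatic: under uniform input every $\sigma_l$ is independently uniform on $S_w$, so their product $\sigma$ is uniform, whence $\Pr[\sigma(v) \in V] = |V|/w = 1/2$. For flippability I pick an involution $\phi \in S_w$ that swaps $V$ and $V^c$; player $i_\ell$ replaces $x_{i_\ell}^\ell$ with $\phi \cdot x_{i_\ell}^\ell$ (a bijection on $S_w$), which turns $\sigma$ into $\phi \sigma$ and toggles $g_k$.

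The substantive ingredient is random self-reducibility, implemented by layer-wise conjugation. Sample public permutations $\bm{\rho}_0, \bm{\rho}_1, \ldots, \bm{\rho}_\ell$, where $\bm{\rho}_0$ is uniform among permutations fixing $v$, $\bm{\rho}_\ell$ is uniform among those setwise fixing $V$, and $\bm{\rho}_1, \ldots, \bm{\rho}_{\ell-1}$ are uniform on $S_w$. Each player $i$ then replaces each $x_i^l$ by $\bm{\rho}_l \cdot x_i^l \cdot \bm{\rho}_{l-1}^{-1}$, which is a bijection on that player's input. Telescoping gives a new composition $\sigma' = \bm{\rho}_\ell \sigma \bm{\rho}_0^{-1}$, and the two stabiliser conditions ensure $\sigma'(v) = \bm{\rho}_\ell(\sigma(v)) \in V \iff \sigma(v) \in V$, so $g_k$ is preserved.

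The main obstacle is verifying that this map produces a \emph{uniform} sample from $g_k^{-1}(g_k(\sigma))$. The coordinate equations $\sigma'_l = \bm{\rho}_l \sigma_l \bm{\rho}_{l-1}^{-1}$ force $\bm{\rho}_1, \ldots, \bm{\rho}_\ell$ to be determined by $\bm{\rho}_0$ once $\sigma$ and $\sigma'$ are fixed, via the telescoping identity $\bm{\rho}_\ell = \sigma' \bm{\rho}_0 \sigma^{-1}$. Hence, for any target $\sigma'$ with $g_k(\sigma') = g_k(\sigma)$, the number of $\bm{\rho}$'s routing $\sigma \mapsto \sigma'$ equals the number of $\bm{\rho}_0 \in S_w$ satisfying both $\bm{\rho}_0(v) = v$ and $\bm{\rho}_0(\sigma^{-1}(V)) = \sigma'^{-1}(V)$. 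Since $v$ lies in $\sigma^{-1}(V) \cap \sigma'^{-1}(V)$ exactly when $g_k(\sigma) = g_k(\sigma') = 1$ (and misses both when both values are $0$), this count evaluates to $(w/2-1)! \cdot (w/2)!$ in either case, independently of $\sigma$ and $\sigma'$. That is precisely the uniformity required to conclude that $g_k$ is random-self-reducible, completing the verification that $g_k$ is versatile.
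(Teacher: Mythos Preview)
Your proof is correct and follows essentially the same construction as the paper: a pointer-jumping function on $\ell$ layers of width-$w$ permutations, with player $i$ holding the layers in $L_i$, flippability via post-composition at layer $\ell$ by a permutation swapping $V$ and $V^c$, and random self-reducibility via the layer-wise conjugation $x_l \mapsto \bm{\rho}_l x_l \bm{\rho}_{l-1}^{-1}$ with the appropriate stabiliser constraints at the boundary. The paper simply defers these two items to the proof of \autoref{lem:jump}; your write-up goes further by explicitly verifying the uniformity of the self-reduction via the telescoping identity $\bm{\rho}_\ell = \sigma' \bm{\rho}_0 \sigma^{-1}$ and the count $(w/2-1)!\,(w/2)!$, which is a detail the paper leaves implicit.
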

\begin{proof}
Fix a width-$w$ PBP $(i_l,\pi_l,\tau_l)$, $l\in[\ell]$, that $\gamma$-computes $\AND_k$. By modifying the PBP if necessary, we may assume that $w$ is even and $\gamma(1)\in[w/2+1,w]$. The versatile function corresponding to the given PBP is the pointer jumping function $\Jump_k^\ell(x_1,\ldots,x_\ell)$ defined similarly to (\ref{eq:jump-k}):
\[
\Jump_k^\ell(x_1,\ldots,x_\ell) = 0 \quad\iff\quad (x_\ell\circ x_{\ell-1}\circ \cdots \circ x_1)(1) \in [w/2].
\]
To define the input partition, let $L_i:=\{l\in[\ell] : i_l = i\}$ be the set of layers where the PBP reads the $i$-th input. We let the $i$-th player hold (on its forehead) the inputs $x_l$ for $l\in L_i$.

\emph{Reduction from $\AND_k$:}
The reduction $\AND_k\leq \Jump_k^\ell$ is naturally determined by the PBP: given an input $(y_1,\ldots,y_k)$ of $\AND_k$, we define $x_l$ to be the permutation that the instruction $(i_l,\pi_l,\tau_l)$ evaluates to under $(y_1,\ldots,y_k)$. Because of our input partition, it is possible to compute $x_l$ without communication.

\emph{Flippability and random-self-reducibility:}
Same as in the proof of \autoref{lem:jump}.
\end{proof}
Barrington's celebrated theorem~\cite{barrington89bounded} gives a PBP implementation of $\AND_k$ with parameters $w=5$ and $\ell=O(k^2)$. This corresponds to having $O(k)$ input bits per player, matching the quadratic character example above. Cleve~\cite{cleve91towards} has improved this to a tradeoff result where for any $\epsilon>0$ one can take $\ell=k^{1+\epsilon}$ provided that $w=w(\epsilon)$ is a large enough constant. Cleve's construction also has the property that every input variable of $\AND_k$ is read equally many times (i.e., the $L_i$ in the above proof have the same size). Thus, letting $w$ grow sufficiently slowly, we get a versatile $k$-party gadget on $O(\ell w\log w)= k^{1+o(1)}$ bits, which is~$k^{o(1)}$ bits per player.
\begin{corollary} \label{cor:gadget}
There are versatile $k$-party gadgets $g_k\colon\mathcal{X}^k\to\{0,1\}$ where $\log|\mathcal{X}| = k^{o(1)}$. \qed
\end{corollary}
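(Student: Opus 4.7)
The plan is to plug an explicit efficient permutation branching program for $\AND_k$ into Lemma~\ref{lem:pbp-to-versatile} and then optimize the parameters.

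First, I would recall that Barrington's theorem gives a PBP for $\AND_k$ of width $w=5$ and length $\ell = O(k^2)$. Feeding this into Lemma~\ref{lem:pbp-to-versatile} produces a versatile $k$-party gadget, but with input size $O(\ell w \log w) = O(k^2)$ bits in total, which comes out to $O(k)$ bits per player (assuming a balanced read pattern). That matches the quadratic character construction from Lemma~\ref{qcs-versatile} but is still $\Theta(k)$ rather than $k^{o(1)}$, so Barrington alone is not enough.

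To push the per-player cost down to $k^{o(1)}$, I would invoke Cleve's refinement~\cite{cleve91towards}, which says that for every $\epsilon>0$ there is a constant $w(\epsilon)$ and a width-$w(\epsilon)$ PBP for $\AND_k$ of length $\ell=k^{1+\epsilon}$ in which each input variable is read the same number of times. With this PBP, Lemma~\ref{lem:pbp-to-versatile} produces a versatile $k$-party function whose total input size is $O(\ell w \log w)=O(k^{1+\epsilon}\cdot w(\epsilon)\log w(\epsilon))$, and the balanced-read property of Cleve's construction guarantees that the set $L_i$ of layers seen (on the forehead) by player $i$ has size $\ell/k$, so each player's input is $O(k^\epsilon \cdot w(\epsilon)\log w(\epsilon))$ bits.

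The final step is a standard diagonalisation: choose $\epsilon=\epsilon(k)\to 0$ slowly enough that $w(\epsilon(k))\log w(\epsilon(k)) = k^{o(1)}$ and simultaneously $k^{\epsilon(k)} = k^{o(1)}$. Since $w(\epsilon)$ depends only on $\epsilon$ (not on $k$), such a choice of $\epsilon(k)$ certainly exists — for instance, take $\epsilon(k) = 1/\log\log k$ and observe that any constant factor $w(\epsilon)\log w(\epsilon)$ is dominated by $k^{o(1)}$. This yields $\log|\mathcal{X}| = k^{o(1)}$ per player, as required. The main (and only) obstacle is really just bookkeeping: verifying that Cleve's PBP can be assumed to read every variable the same number of times, so that the total cost distributes evenly across the $k$ players; but this is precisely the property stated in the paragraph preceding the corollary and is not a new result.
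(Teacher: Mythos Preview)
Your proposal is correct and matches the paper's approach exactly: the paragraph immediately preceding the corollary already spells out the Barrington-then-Cleve argument together with the diagonalisation over $\epsilon$, and the corollary carries a \qed precisely because that paragraph \emph{is} the proof. One small caveat: your concrete choice $\epsilon(k)=1/\log\log k$ does not by itself guarantee $w(\epsilon(k))\log w(\epsilon(k))=k^{o(1)}$ without some bound on how fast $w(\epsilon)$ blows up as $\epsilon\to 0$, but your preceding general diagonalisation claim (``such a choice of $\epsilon(k)$ certainly exists'') is correct and is all the paper asserts.
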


\section{Communication Lower Bound} \label{sec:cc-lb}

In this section we prove the communication lower bound for two parties (\autoref{thm:two-party}) assuming that $g$ is a versatile gadget. The generalisation to multiple parties (\autoref{thm:multi-party}) follows by the same argument---one only needs to replace $g$ with a versatile $k$-party gadget $g_k$.

Our proof builds on a result of Zhang~\cite{zhang09tightness} that lower bounds the two-party communication complexity of a composed function $f\circ g^n$ in terms of the block sensitivity of $f$. We start by outlining Zhang's approach.

\subsection{Functions: Zhang's approach} \label{ssec:zhang}

Zhang~\cite{zhang09tightness} proved the following theorem by a reduction from the \emph{unique-disjointness} function $\UDISJ_n$. Here, $\UDISJ_n = \OR_n\circ\AND^n$ is the usual set-disjointness function together with the promise that if $\UDISJ_n(a,b) = 1$, then there is a unique coordinate $i\in[n]$ such that $a_i=b_i=1$. The randomised communication complexity of $\UDISJ_n$ is well-known to be $\Theta(n)$~\cite{kalyanasundaram92probabilistic,razborov92distributional,bar-yossef04information}. Zhang's proof works for any gadget $g$ with $\AND,\OR\leq g$.%
\begin{theorem}[Zhang]
There is a two-party gadget $g\colon \mathcal{X}\times\mathcal{Y}\to\{0,1\}$ such that if $f\colon\{0,1\}^n\to Q$ is a function, then $f\circ g^n$ has communication complexity $\Omega(\bs(f))$.
\end{theorem}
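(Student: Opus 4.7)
My plan is to prove the theorem by a non-communicating reduction from the unique-disjointness problem $\UDISJ_{\bs}$ to $f\circ g^n$; combined with the $\Omega(\bs)$ randomised lower bound for $\UDISJ_{\bs}$, this will yield the desired bound. I will take $g$ to be any balanced two-party function satisfying $\AND \leq g$ and $\OR \leq g$ (the function $\VER$ works, as do $\XOR$ or $\ThreeEQ$); these two one-to-one embeddings are precisely what the reduction needs.

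First I will fix, once and for all, a sensitivity witness: an input $\alpha \in \{0,1\}^n$ together with disjoint blocks $B_1,\ldots,B_{\bs} \subseteq [n]$ such that $f(\alpha) \neq f(\alpha^{B_i})$ for every $i$. Since both $\alpha$ and the blocks depend only on $f$, they can be hardcoded into the players. Given inputs $a,b \in \{0,1\}^{\bs}$ to $\UDISJ_{\bs}$, Alice and Bob will construct $(x,y) \in \mathcal{X}^n\times\mathcal{Y}^n$ coordinate-wise without any communication:
\begin{itemize}
\item For $j$ outside every block, both parties pick fixed symbols ensuring $g(x_j,y_j) = \alpha_j$.
\item For $j \in B_i$ with $\alpha_j = 0$, they apply the $\AND \leq g$ embedding to $(a_i,b_i)$, so that $g(x_j,y_j) = a_i \wedge b_i$.
\item For $j \in B_i$ with $\alpha_j = 1$, they apply the $\OR \leq g$ embedding to $(\neg a_i, \neg b_i)$, so that $g(x_j,y_j) = \neg(a_i \wedge b_i)$.
\end{itemize}
In every block coordinate this produces $g(x_j,y_j) = \alpha_j \oplus (a_i \wedge b_i)$, and each player's side of the embedding depends only on their own input.

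The analysis will then be immediate. If $\UDISJ_{\bs}(a,b) = 0$, every $a_i \wedge b_i$ vanishes and $g^n(x,y) = \alpha$; if $\UDISJ_{\bs}(a,b) = 1$ with unique intersection at $i^*$, then $g^n(x,y) = \alpha^{B_{i^*}}$. Running the bounded-error protocol for $f\circ g^n$ on $(x,y)$ therefore returns $f(\alpha)$ in the first case and $f(\alpha^{B_{i^*}}) \neq f(\alpha)$ in the second, each with high probability. Since $f(\alpha)$ is a universal constant known to both parties, they output $0$ precisely when the protocol's answer equals $f(\alpha)$, thereby solving $\UDISJ_{\bs}$ at the same cost as the $f\circ g^n$ protocol; the $\Omega(\bs)$ bound follows. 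I do not anticipate any real obstacle here, only the modest task of exhibiting a suitable gadget. The genuinely delicate point, handled elsewhere in the paper, will be upgrading this argument from functions to search relations: once the protocol's output $q$ is no longer a known function of $g^n(x,y)$, we must prevent the implicit selector $f \subseteq S$ from ``cheating'' by returning different feasible solutions on different encodings of the same input, and it is exactly there that random-self-reducibility of $g$ becomes essential.
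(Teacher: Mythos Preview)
Your proposal is correct and follows essentially the same approach as the paper's outline of Zhang's argument: reduce from $\UDISJ_{\bs}$ by embedding $(a_i,b_i)$ into each block via $\AND\leq g$ or $\OR\leq g$ depending on $\alpha_j$, so that $g^n(x,y)$ equals $\alpha$ on $0$-inputs and $\alpha^{B_{i^*}}$ on $1$-inputs. One small slip: your parenthetical examples $\XOR$ and $\ThreeEQ$ do \emph{not} satisfy both $\AND\leq g$ and $\OR\leq g$ (the $2\times 2$ $\XOR$ matrix contains neither $\AND$ nor $\OR$ as a submatrix, and no $2\times 2$ submatrix of $\ThreeEQ$ has three $1$'s), so stick with $\VER$, which does work.
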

The proof runs roughly as follows. Fix an input $\alpha\in\{0,1\}^n$ for $f$ that witnesses the block sensitivity $\bs(f,\alpha)=\bs(f)$. Also, let $B_1,\ldots,B_\bs\subseteq[n]$ be the sensitive blocks of $f$ at~$\alpha$. Given an input $(a,b)$ to $\UDISJ_\bs$ the goal in the reduction is for the two parties to compute, without communication, an input $(x,y)$ for $f\circ g^n$ such that
\begin{description}[noitemsep]
\item[(T1)] \emph{0-inputs:} If $\UDISJ_\bs(a,b)=0$, then $g^n(x,y) = \alpha$.
\item[(T2)] \emph{1-inputs:} If $\UDISJ_\bs(a,b)=1$ with $a_i=b_i=1$, then $g^n(x,y)=\alpha^{B_i}$.
\end{description}
Clearly, if we had a reduction $(a,b)\mapsto(x,y)$ satisfying (T1--T2), then the output of $\UDISJ_\bs(a,b)$ could be recovered from $(f\circ g^n)(x,y)$. Thus, an $\epsilon$-error protocol for $f\circ g^n$ would imply an $\epsilon$-error protocol for $\UDISJ_\bs$ with the same communication cost.

\subsection{Search problems: Our approach}

We are going to prove \autoref{thm:two-party} (restated below) in close analogy to the proof template (T1--T2) above. However, as discussed below, noncritical inputs to search problems introduce new technical difficulties.
\twoparty*

\paragraph{Setup.}
Fix any versatile gadget $g\colon \mathcal{X}\times\mathcal{Y}\to\{0,1\}$.
Let $\Pi$ be a randomised $\epsilon$-error protocol for a composed search problem $S\circ g^n$. Recall that an input $(x,y)$ for the problem $S\circ g^n$ is \emph{critical} if there is exactly one solution $q$ with $((x,y),q)\in S\circ g^n$. In particular, if $g^n(x,y)$ is critical for $S$, then $(x,y)$ is critical for $S\circ g^n$. The behaviour of the protocol $\Pi$ on a critical input $(x,y)$ is predictable: the protocol's output $\Pi(x,y)$ is the unique solution with probability at least $1-\epsilon$.

However, noncritical inputs $(x,y)$ are much trickier: not only can the distribution of the output $\Pi(x,y)$ be complex, but the distributions of $\Pi(x,y)$ and $\Pi(x',y')$ can differ even if $(x,y)$ and $(x',y')$ encode the same input $g^n(x,y)=g^n(x',y')$ of $S$. The latter difficulty is the main technical challenge, and we address it by using random-self-reducible gadgets.

\paragraph{Defining a function $\bm{f\subseteq S}$.}
We start by following very closely the initial analysis in the proof of Huynh and Nordstr{\"o}m~\cite{huynh12virtue}. First, we record for each $\alpha\in\{0,1\}^n$ the \emph{most likely feasible output} of $\Pi$ on inputs $(x,y)$ that encode $\alpha$. More formally, for each $\alpha$ we define $\mu_\alpha$ to be the uniform distribution on the set of preimages of $\alpha$, i.e.,
\begin{equation} \label{eq:mu-alpha}
\mu_\alpha\enspace\text{is uniform on}\enspace \{(x,y) : g^n(x,y) = \alpha \}.
\end{equation}
Alternatively, this can be viewed as a product distribution
\begin{equation} \label{eq:mu-z}
\mu_\alpha = \mu_{\alpha_1} \times \mu_{\alpha_2} \times \cdots \times \mu_{\alpha_n},
\end{equation}
where $\mu_z$, $z\in\{0,1\}$, is the uniform distribution on $g^{-1}(z)$.

The most likely feasible solution output by $\Pi$ on inputs $(\bm{x},\bm{y})\sim\mu_\alpha$ is now captured by a total function $f\subseteq S$ defined by
\begin{equation} \label{eq:def-f}
f(\alpha)\ :=\ \argmax_{q:(\alpha,q)\in S} \Pr_{(\bm{x},\bm{y})\sim\mu_\alpha}[\,\Pi(\bm{x},\bm{y}) = q\,].
\end{equation}
Here, ties are broken arbitrarily and the randomness is taken over both $(\bm{x},\bm{y})\sim\mu_\alpha$ and the random coins of the protocol $\Pi$. (Note that, in general, the most likely output of $\Pi(\bm{x},\bm{y})$ may not be feasible. However, above, we explicitly pick out the most likely \emph{feasible} solution. Thus, $f$ is indeed a subfunction of $S$.)

\paragraph{The sensitive critical input.}
We can now use the critical block sensitivity of $S$: there is a critical input $\alpha$ such that $\bs(f,\alpha)\geq\cbs(S)$. Let $B_1,\ldots,B_\bs\subseteq[n]$ be the sensitive blocks with $f(\alpha^{B_i})\neq f(\alpha)$.
\begin{lemma} \label{lem:distinguish}
The protocol $\Pi$ can distinguish between $\mu_\alpha$ and $\mu_{\alpha^{B_i}}$ in the sense that
\begin{alignat}{3}
(\bm{x},\bm{y}) &\sim \mu_\alpha
&\quad\implies\quad& \Pr[\,\Pi(\bm{x},\bm{y})=f(\alpha)\,] \geq 1-\epsilon, \label{eq:f-crit}\\
(\bm{x},\bm{y}) &\sim \mu_{\alpha^{B_i}}
&\quad\implies\quad& \Pr[\,\Pi(\bm{x},\bm{y})= f(\alpha)\,] \leq 1/2. \label{eq:f-noncrit}
\end{alignat}
\end{lemma}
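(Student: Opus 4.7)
My plan is to prove the two inequalities separately by short direct arguments; this lemma is primarily a ``setup'' step, and the more delicate use of $g$'s random-self-reducibility will come afterward in the reduction from $\UDISJ$.

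For (\ref{eq:f-crit}), I would use that $\alpha$ is a critical input of $S$: by definition this means $\alpha$ has a unique feasible solution for $S$, which must therefore equal $f(\alpha)$ (since $f\subseteq S$). Consequently every pair $(x,y)$ with $g^n(x,y)=\alpha$ is itself a critical input to $S\circ g^n$, whose sole feasible solution is $f(\alpha)$. The $\epsilon$-error guarantee on $\Pi$ then gives $\Pr[\Pi(x,y)=f(\alpha)]\geq 1-\epsilon$ pointwise for each such $(x,y)$, and averaging over $(\bm{x},\bm{y})\sim\mu_\alpha$ preserves the bound.

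For (\ref{eq:f-noncrit}), write $p:=\Pr[\Pi(\bm{x},\bm{y})=f(\alpha)]$ when $(\bm{x},\bm{y})\sim\mu_{\alpha^{B_i}}$, and split into two cases depending on whether $f(\alpha)$ is feasible for $\alpha^{B_i}$ in $S$. If $f(\alpha)$ is \emph{not} feasible for $\alpha^{B_i}$, then outputting $f(\alpha)$ counts as an error on every encoding of $\alpha^{B_i}$, so $p\leq\epsilon\leq 1/2$ by the $\epsilon$-error guarantee averaged over $\mu_{\alpha^{B_i}}$. If $f(\alpha)$ \emph{is} feasible for $\alpha^{B_i}$, then by the $\argmax$ definition of $f(\alpha^{B_i})$ among feasible solutions we have $\Pr[\Pi(\bm{x},\bm{y})=f(\alpha^{B_i})]\geq p$; and since $B_i$ is a sensitive block of $f$ at $\alpha$ we have $f(\alpha^{B_i})\neq f(\alpha)$, so the two output events are disjoint. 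This yields $2p\leq p+\Pr[\Pi(\bm{x},\bm{y})=f(\alpha^{B_i})]\leq 1$, hence $p\leq 1/2$.

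There is no real obstacle beyond spotting the clean case split: the lemma follows entirely from the $\epsilon$-error guarantee, the criticality of $\alpha$, and the $\argmax$ definition of $f$. I want to emphasise that nothing about $g$'s structure (e.g.\ versatility or random-self-reducibility) is used here. These properties will enter in the subsequent reduction, where Alice and Bob must \emph{sample} from $\mu_\alpha$ and $\mu_{\alpha^{B_i}}$ without communication, translating a $\UDISJ_\bs$ input $(a,b)$ into an input $(x,y)$ of $S\circ g^n$ so that the distinguishability established in \autoref{lem:distinguish} becomes a distinguisher for $\UDISJ_\bs$.
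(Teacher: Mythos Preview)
Your proof is correct and matches the paper's argument. For (\ref{eq:f-crit}) the paper says exactly what you say: criticality of $\alpha$ makes every $(x,y)$ with $g^n(x,y)=\alpha$ critical for $S\circ g^n$, so the $\epsilon$-error bound holds pointwise. For (\ref{eq:f-noncrit}) the paper argues by contradiction (assume $p>1/2$, deduce feasibility of $f(\alpha)$ for $\alpha^{B_i}$ via the error bound, then use the $\argmax$ definition to force $f(\alpha^{B_i})=f(\alpha)$), whereas you do a direct case split on feasibility; the two routes unwind to the same pair of observations, and your disjoint-events inequality $2p\leq 1$ is the explicit form of the contradiction the paper reaches.
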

\begin{proof}
The consequent in the first property (\ref{eq:f-crit}) is true even for each individual $(x,y)$ in the support of $\mu_\alpha$ since $\alpha$ is critical. To see that the second property (\ref{eq:f-noncrit}) is true, suppose for a contradiction that we had $\Pr[\,\Pi(\bm{x},\bm{y})= f(\alpha)\,] > 1/2$ for $(\bm{x},\bm{y})\sim\mu_{\alpha^{B_i}}$. By averaging, there is a fixed input $(x,y)$ in the support of $\mu_{\alpha^{B_i}}$ such that $\Pr[\,\Pi(x,y)= f(\alpha)\,] > 1/2$. By the correctness of $\Pi$ (i.e., $1-\epsilon > 1/2$) this implies that $f(\alpha)$ is feasible for $\alpha^{B_i}$. Thus, $f(\alpha)$ is the most likely feasible solution output by $\Pi(\bm{x},\bm{y})$, that is, $f(\alpha^{B_i})=f(\alpha)$ by the definition~\eqref{eq:def-f}. But this contradicts the fact that $f$ is sensitive to $B_i$ at $\alpha$.
\end{proof}

\paragraph{The reduction.}
\autoref{lem:distinguish} suggests a reduction strategy analogous to the template (T1--T2) of \autoref{ssec:zhang}. Given an input $(a,b)$ for $\UDISJ_\bs$ our goal is to describe a randomised reduction $(a,b)\mapsto(\bm{x},\bm{y})$ such that
\begin{description}[noitemsep]
\item[(P1)] \emph{0-inputs:} If $\UDISJ_\bs(a,b)=0$, then $(\bm{x},\bm{y})\sim \mu_\alpha$.
\item[(P2)] \emph{1-inputs:} If $\UDISJ_\bs(a,b)=1$ with $a_i=b_i=1$, then $(\bm{x},\bm{y})\sim \mu_{\alpha^{B_i}}$.
\end{description}

Suppose for a moment that we had a reduction with properties (P1--P2). Let $\Pi'$ be the protocol that on input $(a,b)$ first applies the reduction $(a,b)\mapsto(\bm{x},\bm{y})$ with properties (P1--P2), then runs $\Pi$ on $(\bm{x},\bm{y})$, and finally outputs $0$ if $\Pi(\bm{x},\bm{y})=f(\alpha)$ and $1$ otherwise. \autoref{lem:distinguish} tells us that
\begin{itemize}[label=$-$,noitemsep]
\item If $\UDISJ_\bs(a,b)=0$, then $\Pi'(a,b)=0$ with probability at least $1-\epsilon$.
\item If $\UDISJ_\bs(a,b)=1$, then $\Pi'(a,b)=1$ with probability at least $1/2$.
\end{itemize}
The error probability of $\Pi'$ can be bounded away from $1/2$ by repeating $\Pi'$ twice and outputting~$0$ iff both runs of $\Pi'$ output $0$. (Here we are assuming that $\epsilon$ is small enough, say at most $1/4$. If not, we can use some other standard success probability boosting tricks.) This gives a randomised protocol for $\UDISJ_\bs$ with the same communication cost (up to constants) as that of $\Pi$. \autoref{thm:two-party} follows.

Indeed, it remains to implement a reduction $(a,b)\mapsto(\bm{x},\bm{y})$ satisfying (P1--P2). We do it in three steps; see \autoref{fig:reduction}.

\begin{figure}[t]
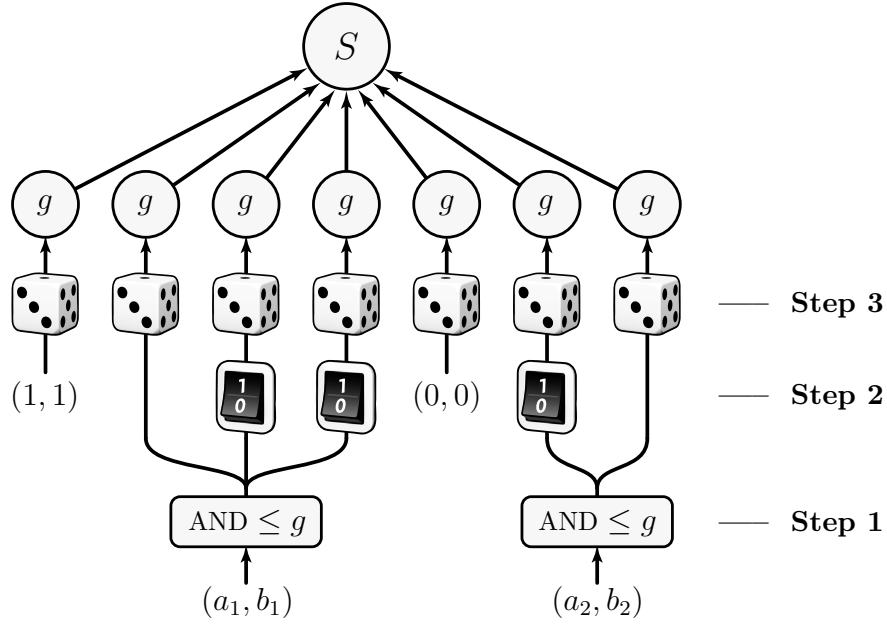
%
\ffigbox[.9\textwidth]
{\caption{The reduction $(a,b)\mapsto(\bm{x},\bm{y})$. In this example $\bs=2$ and $n=7$. The critical input is $\alpha=1011010$ and the two sensitive blocks are $B_1=\{2,3,4\}$ and $B_2=\{6,7\}$. The input pair $(a_i,b_i)$, $i=1,2$, is plugged in for the block $B_i$.}
\label{fig:reduction}}
{\begin{lpic}[r(2cm),t(3mm),b(3mm)]{reduction(.35)}
\large
\lbl[t]{91,23;$(a_1,b_1)$}
\lbl[t]{224,23;$(a_2,b_2)$}
\lbl[c]{91,47;$\AND\leq g$}
\lbl[c]{224,47;$\AND\leq g$}
\lbl[c]{14,95;$(1,1)$}
\lbl[c]{166.7,95;$(0,0)$}
\lbl[c]{14,168;$g$}
\lbl[c]{52.2,168;$g$}
\lbl[c]{90.3,168;$g$}
\lbl[c]{128.5,168;$g$}
\lbl[c]{166.7,168;$g$}
\lbl[c]{204.3,168;$g$}
\lbl[c]{243,168;$g$}
\lbl[c]{128.5,228.5;\Large$S$}
\normalsize
\lbl[l]{270,130;{\bf-----~~Step 3}}
\lbl[l]{270,95;{\bf-----~~Step 2}}
\lbl[l]{270,47;{\bf-----~~Step 1}}
\end{lpic}}
\hspace{5mm}
\end{figure}

\paragraph{Step 1.}
On input $(a,b)=(a_1\ldots a_\bs,b_1\ldots b_\bs)$ to $\UDISJ_\bs$ we first take each pair $(a_i,b_i)$ through the reduction $\AND\leq g$ to obtain instances $(a'_1,b'_1),\ldots,(a'_\bs,b'_\bs)$ of $g$. Note that
\begin{itemize}[label=$-$,noitemsep]
\item if $\UDISJ_\bs(a,b)=0$, then $g(a'_i,b'_i)=0$ for all $i$;
\item if $\UDISJ_\bs(a,b)=1$, then there is a unique $i$ with $g(a'_i,b'_i)=1$.
\end{itemize}

\paragraph{Step 2.}
Next, the instances $(a'_i,b'_i)$ are used to populate a vector $(x,y)=(x_1\ldots x_n,y_1\ldots y_n)$ carrying $n$ instances of $g$, as follows. The instance $(a'_i,b'_i)$ is plugged in for the coordinates $j\in B_i$ with the copies corresponding to $\alpha_j=1$ 	flipped. That is, we define for $j\in B_i$:
\begin{itemize}[label=$-$,noitemsep]
\item if $\alpha_j = 0$, then $(x_j,y_j) := (a'_i,b'_i)$;
\item if $\alpha_j = 1$, then $(x_j,y_j) := (\pi_A(a'_i),\pi_B(b'_i))$, where $(\pi_A,\pi_B)$ is the reduction $\neg g\leq g$.
\end{itemize}
For $j\notin \cup_i B_i$ we simply fix an arbitrary $(x_j,y_j)\in g^{-1}(\alpha_j)$. We now have that
\begin{itemize}[label=$-$,noitemsep]
\item if $\UDISJ_\bs(a,b)=0$, then $g^n(x,y) = \alpha$;
\item if $\UDISJ_\bs(a,b)=1$ with $a_i=b_i=1$, then $g^n(x,y)= \alpha^{B_i}$.
\end{itemize}

\paragraph{Step 3.}
Finally, we apply a random-self-reduction independently for each component $(x_i,y_i)$ of $(x,y)$: this maps a $z$-input $(x_i,y_i)$ to a uniformly random $z$-input $(\bm{x}_i,\bm{y}_i)\sim \mu_z$. The result is a random vector $(\bm{x},\bm{y})$ that has a distribution of the form (\ref{eq:mu-z}) and matches our requirements (P1--P2), as desired.

\bigskip\noindent
This concludes the proof of \autoref{thm:two-party}. The proof of the multi-party version (\autoref{thm:multi-party}) is exactly the same, except with $g$ and $\UDISJ_\bs$ replaced by a versatile $g_k$ and $\UDISJ_{k,\bs}$. Here, in particular, $\UDISJ_{k,n}$ is the usual $k$-party disjointness function $\DISJ_{k,n}=\OR_n\circ\AND^n_k$ together with the promise that at most one of the $\AND_k$'s evaluates to $1$.

\pagebreak[2]
\section{Critical Block Sensitivity Lower Bounds} \label{sec:cbs}

In this section we prove our new critical block sensitivity bounds, Theorems \ref{thm:tseitin} and \ref{thm:pebbling}.

\subsection{Tseitin sensitivity} \label{sec:tseitin}

Let $G=(V,E,\ell)$ be a connected graph with an odd-weight labelling $\ell\colon V\to\{0,1\}$. Recall that in the problem $S(\Tse_G)$ the input is an assignment $\alpha\colon E\to\{0,1\}$ and the goal is to find a parity violation, that is, a node in $\Viol(\alpha) := \{v\in V : C_v(\alpha) = 0\}$.

For the readers' convenience, we recall some basic facts about $\Tse_G$. Since each edge $e\in E$ participates in two constraints, the sum $\sum_v\sum_{e:v\in e} \alpha(e)$ will be even. By contrast, the sum $\sum_v\ell(v)$ is odd. It follows that $|\Viol(\alpha)|$ must be odd, and, in particular, non-empty. Conversely, for every odd-size set $U\subseteq V$, there is an $\alpha$ with $\Viol(\alpha)=U$. To see this, start with any assignment $E\to\{0,1\}$ and let $p$ be a simple path in $G$. If we flip the truth values of the edges in $p$, we end up flipping whether or not the constraints at the endpoints of $p$ are satisfied. Depending on whether the endpoints of $p$ were satisfied to begin with, this results in one of the following scenarios: (1) we create a pair of violations; (2) we remove a pair of violations; or (3) we move a violation from one endpoint of $p$ to the other. It is not hard to see that by using (1)--(3) repeatedly, we can design an assignment $\alpha$ such that $\Viol(\alpha)=U$.

We are now ready to prove \autoref{thm:tseitin}.

\tseitin*
\begin{proof}
Let $G=(V,E,\ell)$ be $(\kappa+1)$-routable. Fix a set $T\subseteq V$ of size $|T|=2\kappa+1$ such that whenever $M$ is a set of $\kappa$ disjoint pairs of nodes from $T$, there are $\kappa$ edge-disjoint paths connecting each pair in $M$. We denote by $\Paths(M)$ some canonical set of such paths.

Consider the following bipartite auxiliary graph on \emph{left} and \emph{right} vertices:
\begin{itemize}[label=$-$]
\item {\bf Left vertices} are pairs $(\alpha,M)$, where $\alpha\colon E\to\{0,1\}$ has a \emph{unique} violation that is in $T$ (i.e., $|\Viol(\alpha)| = 1$ and $\Viol(\alpha)\subseteq T$), and $M$ is a partition of the set $T \smallsetminus \Viol(\alpha)$ into $\kappa$ pairs of nodes.
\item {\bf Right vertices} are pairs $(\alpha',M')$, where $\alpha'\colon E\to\{0,1\}$ has \emph{three} violations that are all in $T$ (i.e., $|\Viol(\alpha')| = 3$ and $\Viol(\alpha')\subseteq T$), and $M'$ is a partition of the set $T \smallsetminus \Viol(\alpha')$ into $\kappa-1$ pairs of nodes.
\item {\bf Edges} are defined as follows. A left vertex $(\alpha,M)$ is connected to a right vertex $(\alpha',M')$ if $M' \subseteq M$ and $\alpha'$ is obtained from $\alpha$ by flipping the values along the path in $\Paths(M)$ that connects the pair $\Viol(\alpha')\smallsetminus\Viol(\alpha)$.
\end{itemize}
The key fact, which is easy to verify, is that the auxiliary graph is \emph{biregular}: its left-degree is $\kappa$ and its right-degree is $3$.

To prove the block sensitivity bound, let $f$ be a function solving $S(\Tse_G)$. We say that an edge from $(\alpha,M)$ to $(\alpha',M')$ in the auxiliary graph is \emph{sensitive} if $f(\alpha) \neq f(\alpha')$. Clearly, for each right vertex exactly two (out of three) of its incident edges are sensitive. Thus, by averaging, we may find a left vertex $(\alpha,M)$ such that at least a fraction $2/3$ of its incident edges are sensitive. But this means that $\alpha$ is a critical input with block sensitivity at least~$2\kappa/3$; the blocks are given by a subset of $\Paths(M)$.
\end{proof}

\subsection{Pebbling sensitivity}

\begin{figure}[t]
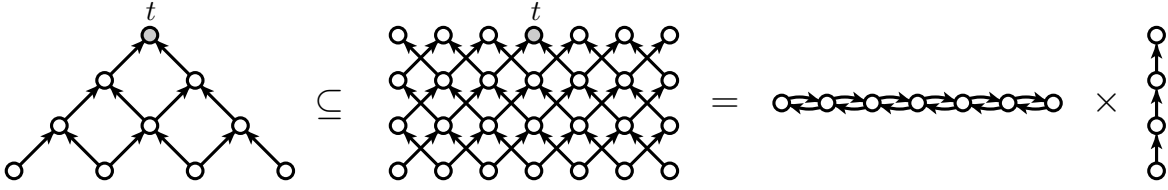
%
{\caption{Pyramid graph viewed as a subgraph of a tensor product of paths.}
\label{fig:pyramid}}
{\begin{lpic}{pyramid(.474)}
\normalsize
\lbl[c]{42,50;$t$}
\lbl[c]{150,50;$t$}
\Large
\lbl[c]{92,24;$\subseteq$}
\lbl[c]{203,23.55;$=$}
\lbl[c]{310,24.3;$\times$}
\end{lpic}}
\end{figure}

\pebbling*

\paragraph{Overview.}
Our proof of \autoref{thm:pebbling} generalises the original proof from~\cite{huynh12virtue} that held for pyramid graphs. The key idea is natural: In a pyramid graph, each horizontal layer can be interpreted as a path---this is made precise by viewing the pyramid graph as a subgraph of a tensor product of paths as in \autoref{fig:pyramid}. The analysis in the original proof suffered from the fact that random walks do not mix well on paths. So, we replace the paths by graphs with better mixing properties! (Perhaps surprisingly, we do not need to rely on expanders here.)

\paragraph{Definition of $\bm{G}$.}
Let $H$ be the $3$-dimensional grid graph on $m=r^3$ nodes where $r$ is odd. For convenience, we think of $H$ as a directed Cayley graph on $\Z_r^3$ generated by the $6$ elements
\[
\mathcal{B}=\{\pm (1,0,0),\pm (0,1,0),\pm (0,0,1)\}.
\]
That is, there is an edge $(v,u)\in E(H)$ iff $u = v+b$ for some $b\in\mathcal{B}$.
The key property of $H$ (which is not satisfied by $d$-dimensional grid graphs for $d<3$) is the following.
\begin{lemma}[Partial cover time] \label{lem:pct}
Starting from any node of $H$ the expected number of steps it takes for a random walk to visit at least half of the nodes of $H$ is $\pct(H)=O(m)$.
\end{lemma}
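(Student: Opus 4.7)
The plan is to prove the bound in two stages: first establish a uniform $O(m)$ bound on the hitting times $E_u[T_v]$ between any two vertices of $H$, and then convert this into the desired partial cover bound by a standard first-moment / Markov argument. The key conceptual ingredient is transience of simple random walk on $\Z^3$, which is exactly why the dimension $3$ (and not $2$) is needed.

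For the hitting time step, I would view $H$ as the quotient of $\tilde H = \Z^3$ by the sublattice $r\Z^3$. By P\'olya's theorem, simple random walk on $\tilde H$ is transient, i.e.\ the effective resistance from the origin to infinity is finite, $R_\infty := R_{\Z^3}(0,\infty) < \infty$. A standard flow combination argument then bounds $R_{\Z^3}(\tilde u,\tilde v) \leq 4R_\infty = O(1)$ uniformly in lifts $\tilde u, \tilde v$: take a unit flow of energy $R_\infty$ from $\tilde u$ to infinity and a unit flow of energy $R_\infty$ from $\tilde v$ to infinity (available by vertex-transitivity of $\Z^3$), then subtract to get a unit flow from $\tilde u$ to $\tilde v$ of energy at most $4R_\infty$. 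Passing from $\tilde H$ to the quotient $H$ amounts to short-circuiting each coset of $r\Z^3$ into a single vertex, which by Rayleigh monotonicity can only decrease effective resistance, so $R_H(u,v) = O(1)$ uniformly. Using $|E(H)| = 3m$ and the commute time identity,
\[
E_u[T_v] \;\leq\; C_H(u,v) \;=\; 2|E(H)| \cdot R_H(u,v) \;=\; O(m).
\]

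For the partial cover step, let $S_T := \{X_0, X_1, \ldots, X_T\}$ be the set of vertices visited by time $T$. Markov's inequality and the uniform hitting-time bound give $P_u[v \notin S_T] = P_u[T_v > T] \leq E_u[T_v]/T$, hence
\[
E_u[|S_T|] \;=\; \sum_{v \in V(H)} P_u[v \in S_T] \;\geq\; m - \frac{\sum_v E_u[T_v]}{T} \;\geq\; m - \frac{O(m^2)}{T}.
\]
Choosing $T = Cm$ with $C$ a sufficiently large absolute constant makes this at least $3m/4$. A second application of Markov to the nonnegative quantity $m - |S_T|$ then yields $P_u[|S_T| \geq m/2] \geq 1/2$. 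Since the hitting-time bound---and hence the above argument---is uniform in the starting vertex, the strong Markov property lets us restart from wherever we are if we fail, giving a geometric number of $T$-length attempts and thus $\pct(H) \leq 2T = O(m)$.

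The only mildly subtle points are (i) that Rayleigh monotonicity really does transfer the bounded-resistance guarantee from the universal cover $\Z^3$ down to the quotient $H$ (immediate once one interprets vertex identification as insertion of zero-resistance edges), and (ii) that the flow-combination bound in $\Z^3$ has energy depending only on the dimensional constant $R_\infty$ and in particular independent of $r$. Everything else is routine.
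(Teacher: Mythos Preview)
Your two-stage plan---bound the maximum hitting time by $O(m)$ and then convert to a partial-cover bound---is exactly the paper's route: the paper simply cites the hitting-time bound (Chandra et al.) and the conversion lemma (Lov\'asz's survey, Lemma~2.8), whereas you unpack both. Your second stage (Markov on individual hitting times, first moment on $|S_T|$, Markov on $m-|S_T|$, geometric restart) is correct and is essentially a proof of the cited conversion lemma.

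The first stage, however, has a real gap. You argue that identifying each $r\Z^3$-coset in $\Z^3$ produces $H$ and then invoke Rayleigh. But short-circuiting the cosets does \emph{not} yield the torus $H$: each edge of $H$ lifts to \emph{infinitely many} edges of $\Z^3$, so the shorted network is $H$ with infinite parallel multiplicity on every edge---a network with zero resistance between any two vertices. Rayleigh then gives only the vacuous inequality $0 \le R_{\Z^3}(\tilde u,\tilde v)$ and says nothing about $R_H$. (Equivalently: pushing your $\ell^2$ flow $\tilde f$ down to $H$ edge by edge need not converge, let alone have bounded energy; the optimal flow in $\Z^3$ decays like $|x|^{-2}$, and $\sum_{k\in\Z^3\setminus\{0\}}|k|^{-2}$ diverges.) Your remark that this step is ``immediate once one interprets vertex identification as insertion of zero-resistance edges'' is precisely where the argument breaks.

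The conclusion $R_H(u,v)=O(1)$ is of course correct, and your intuition that transience of $\Z^3$ is the mechanism is right; you just need a different bridge. One clean fix: build a bounded-energy unit flow directly in the finite grid $[0,r-1]^3\subseteq H$ by spreading over concentric shells and then contracting, so that the energy is $\sum_k \Theta(k^2)\cdot\Theta(k^{-2})^2=O(1)$, and then apply Rayleigh in the easy direction ($H$ has strictly more edges than this grid). Alternatively, just cite the hitting-time bound, as the paper does.
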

\begin{proof}
This follows from Lemma~2.8 in~\cite{lovasz93random} and the fact that the maximum hitting time of $H$ is $O(m)$ (e.g.,~\cite{chandra96electrical}).
\end{proof}

Let $\ell:=2\cdot\pct(H)+1=\Theta(m)$ so that by Markov's inequality a random walk of length $\ell-1$ in $H$ will cover a at least a fraction $1/2$ of $H$ with probability at least $1/2$. Let $P$ be the directed path on $[\ell]$ with edges $(i,i+1)$, $i\in[\ell-1]$. We construct the tensor product graph
\[
G\coloneqq H\times P
\]
that is defined by $V(G)=\Z_r^3\times[\ell]$ and there is a directed edge from $(v,i)$ to $(u,j)$ iff $j=i+1$ and $u = v + b$ for some $b\in\mathcal{B}$.

The $n=m\ell$ nodes of $G$ are naturally partitioned into~$\ell$ \emph{layers} (or \emph{steps}). In order to turn $G$ into a pebbling formula, we need to fix some \emph{sink} node $t$ in the $\ell$-th layer and delete all nodes from which $t$ is not reachable. We do not let this clean-up operation affect our notations, though. For example, we continue to think of the resulting graph as $G=H\times P$. The nodes $\Z_r^3\times\{1\}$ of indegree $0$ will be the \emph{sources}.

Note that each source--sink path $p$ in $G$ contains exactly one node from each layer. We view the projection of $p$ onto $H$ as a walk of length $\ell-1$ in $H$; we can describe the walk uniquely by a sequence of $\ell-1$ generators from $\mathcal{B}$. We denote by $\pi(p)\subseteq V(H)$ the set of nodes visited by the projected walk.

We can now study the search problem $S(\Peb_G)$ associated with the pebbling formula~$\Peb_G$.

\paragraph{Pebbling number.}
The pebbling strategy for $G$ that uses $O(\sqrt{n})=O(m)$ pebbles proceeds as follows. We first pebble the $1$st layer (the sources), then the $2$nd layer, then remove pebbles from the $1$st layer, then pebble the $3$rd layer, then remove pebbles from the $2$nd layer, etc.

The matching lower bound follows from the fact that $G$ contains a pyramid graph on $\Omega(n)$ nodes as a subgraph, and the pebbling number of pyramid graphs is $\Theta(\sqrt{n})$~\cite{cook74observation}.

\paragraph{Decision tree complexity.}
The deterministic decision tree that uses $O(\sqrt{n})=O(m)$ queries proceeds as follows. We start our search for a violated clause at the sink $t$. If the sink variable is false, we query its children to find a child $v$ whose associated variable is false. The search continues at $v$ in the same manner. In at most $\ell-1=O(m)$ steps we find a false node $v$ whose children are all true (perhaps $v$ is a source node).

The matching lower bound follows from the critical block sensitivity lower bound proved below, and the fact that critical block sensitivity is a lower bound on the decision tree complexity.

\paragraph{Critical block sensitivity.}
It remains to prove that $\cbs(S(\Peb_G))=\Omega(m)$. The following proof is a straightforward generalisation of the original proof from (the full version of)~\cite{huynh12virtue}.

All \emph{paths} that we consider in the following are source--sink paths in $G$. We associate with each path $p$ a critical input $\alpha_p\colon V(G)\to\{0,1\}$ that assigns to each node on $p$ the value $0$ and elsewhere the value $1$. This creates a unique clause violation at the source where $p$ starts.

If $p$ and $q$ are two paths, we say that $p$ and $q$ are \emph{paired at $i\geq 2$} if the following hold.
\begin{itemize}[label=$-$,noitemsep]
\item \emph{Agreement:} $p$ and $q$ do not meet before layer $i$, but they agree on all layers $i,\ldots,\ell$.
\item \emph{Mirroring:} if the first $i-1$ steps of $p$ are described by $(b_1,b_2,\ldots,b_{i-1})\in\mathcal{B}^{i-1}$, then the first $i-1$ steps of $q$ are described by $(-b_1,-b_2,\ldots,-b_{i-1})\in\mathcal{B}^{i-1}$.
\end{itemize}
Each path can be paired with at most $\ell-1$ other paths---often, there are plenty such:
\begin{lemma} \label{lem:plenty}
Each path $p$ is paired with at least $|\pi(p)|-1$ other paths.
\end{lemma}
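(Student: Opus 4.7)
The plan is to exhibit, for each layer $i \in \{2,\ldots,\ell\}$ at which the $H$-projection of $p$ first visits a node of $H$ it has not reached at any earlier layer, an explicit distinct path $q_i$ that is paired with $p$ at $i$. The projection walk visits $|\pi(p)|$ distinct nodes over its $\ell$ layers, so there are exactly $|\pi(p)|$ first-visit times; discarding the trivial one at $i=1$ leaves $|\pi(p)|-1$ qualifying layers, each producing one pairing.

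Concretely, denote the $H$-coordinates of $p$ by $v_1^p,\ldots,v_\ell^p \in \Z_r^3$ and its step sequence by $(b_1,\ldots,b_{\ell-1}) \in \mathcal{B}^{\ell-1}$. For a qualifying layer $i$, I will define $q_i$ to be the source--sink path with step sequence $(-b_1,\ldots,-b_{i-1},b_i,\ldots,b_{\ell-1})$ starting in layer~$1$ at $v_1^{q_i} := 2v_i^p - v_1^p$. A direct telescoping shows that $q_i$ reaches $v_i^p$ at layer $i$, and since its last $\ell-i$ steps coincide with $p$'s, it ends at the sink $t$. Thus $q_i$ satisfies the agreement condition on layers $i,\ldots,\ell$ and the mirroring condition on its first $i-1$ steps by construction. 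For the non-meeting condition on layers $j<i$, I compute
\[
v_j^{q_i} - v_j^p \;=\; (v_1^{q_i} - v_1^p) - 2\sum_{k=1}^{j-1} b_k \;=\; 2(v_i^p - v_j^p) \quad \text{in } \Z_r^3.
\]
Since $r$ is odd, $2$ is invertible modulo $r$, so this difference vanishes iff $v_i^p = v_j^p$, which the choice of $i$ rules out. Finally, distinct qualifying layers yield distinct starting nodes $2v_i^p - v_1^p$, so the paths $q_i$ are pairwise distinct, giving the claimed $|\pi(p)|-1$ pairings.

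The only point that needs a moment of care is whether each $q_i$ really is a source--sink path in $G$ after the clean-up that removes nodes from which $t$ is unreachable. This is automatic: I have constructed $q_i$ as a bona fide directed path from $(v_1^{q_i},1)$ to $t$ in the unrestricted graph $H \times P$, so every vertex of $q_i$ survives the clean-up and $(v_1^{q_i},1)$ is a legitimate source of $G$. I do not anticipate a serious obstacle; the entire argument is a symbolic verification of the four defining properties of a pairing, with the odd-$r$ hypothesis playing its only essential role in ensuring that $2$ is a unit in $\Z_r$ (so that mirroring genuinely moves each intermediate node).
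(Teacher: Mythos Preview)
Your proof is correct and follows essentially the same approach as the paper: for each first-visit layer $i\ge 2$ of the projected walk you build the unique mirrored path meeting $p$ at layer $i$, and the key non-meeting verification reduces to $2(v_i^p-v_j^p)\neq 0$ in $\Z_r^3$ via invertibility of $2$ when $r$ is odd. The only cosmetic differences are that you give the starting vertex of $q_i$ explicitly and compute the layer-$j$ difference directly, whereas the paper invokes uniqueness of the mirrored path and argues the non-meeting by contradiction; your added remarks on distinctness of the $q_i$ and survival under the clean-up are sound and make the argument slightly more self-contained.
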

\begin{proof}
For each node $v\in \pi(p)$, except the starting point of $p$, we construct a pair $q$ for $p$. To this end, let $i\geq2$ be the first step at which the projection of $p$ visits $v$. Since the mirroring property uniquely determines $q$ given $p$ and $i$, we only need to show that this~$q$ satisfies the agreement property. Thus, suppose for a contradiction that $p$ and $q$ meet at some node $(u,j)$ where $j<i$. We have, in $\Z_r^3$ arithmetic,
\begin{align*}
v &= u + b_j + b_{j+1} + \cdots + b_{i-1} \qquad(\text{according to $p$}),\\
v &= u - b_j - b_{j+1} - \cdots - b_{i-1} \qquad(\text{according to $q$}).
\end{align*}
This implies $2v=2u$, but since $r$ is odd, we get $v=u$. This contradicts our choice of $i$.
\end{proof}

If $p$ and $q$ are paired, we can consider the assignment $\alpha_{p\cup q}$ that is the node-wise logical $\AND$ of the assignments $\alpha_p$ and $\alpha_q$. In $\alpha_{p\cup q}$ we have \emph{two} clause violations associated with the two starting points of the paths.

To prove the critical block sensitivity bound $\Omega(m)$, let $f$ be a function solving $S(\Peb_G)$. Consider the following auxiliary graph.
\begin{itemize}[label=$-$,noitemsep]
\item The {\bf vertices} are the source--sink paths.
\item There is a {\bf directed edge} from $p$ to $q$ iff $p$ and $q$ are paired and $f(\alpha_{p\cup q})$ is the starting point of $q$. Thus, each two paired paths are connected by an edge one way or the other.
\end{itemize}
Recall that if we start a random walk of length $\ell-1$ on $H$ at any fixed node, the walk covers a fraction $\geq 1/2$ of $H$ with probability $\geq 1/2$. If we view a source-sink path~$p$ in $G$ in the \emph{reverse order} (starting at the sink and going towards the source), this translates into saying that $|\pi(p)|\geq m/2$ for a fraction $\geq 1/2$ of all paths $p$. Applying \autoref{lem:plenty} for such paths we conclude that the auxiliary graph has average outdegree at least $d=m/8-1$. By averaging, we can now find a path~$p$ with out-neighbours $q_1,\ldots,q_d$. Define $q'_i \coloneqq q_i \smallsetminus p$. Clearly the critical assignment $\alpha_p$ is sensitive to each $q'_i$. To see that the $q'_i$ are pairwise disjoint, we note that they take steps in the same direction in $\mathcal{B}$ at each layer (i.e., opposite to that of $p$), and the $q_i$ meet $p$ for the first time at distinct layers. This concludes the proof of \autoref{thm:pebbling}.

\section{Monotone CSP-SAT} \label{sec:monotone-proofs}

In this section we introduce a monotone variant of the CSP satisfiability problem and show how lifted search problems $S(F)\circ g^n$ reduce to its monotone Karchmer--Wigderson game. In particular, in \autoref{cor:monotone-dept-tseitin} we can take the explicit function to be a CSP satisfiability function. We also note that our function has been further studied by Oliveira~\cite[Chapter~3]{oliveira15unconditional}.

\paragraph{Definition of monotone CSP-SAT.}
The function is defined relative to some finite alphabet~$\Sigma$ and a fixed constraint topology given by a bipartite graph $G$ with left vertices $V$ (\emph{variable nodes}) and right vertices $U$ (\emph{constraint nodes}). We think of each $v\in V$ as a variable taking on values from $\Sigma$; an edge $(v,u)\in E(G)$ indicates that variable $v$ is involved in constraint node $u$. Let $d$ be the maximum degree of a node in $U$. We define $\SAT=\SAT_{G,\Sigma}\colon\{0,1\}^N\to\{0,1\}$ on $N\leq |U|\cdot |\Sigma|^d$ bits as follows. An input $\alpha\in\{0,1\}^N$ describes a CSP instance by specifying, for each constraint node $u\in U$, its \emph{truth table}: a list of at most $|\Sigma|^d$ bits that record which assignments to the variables involved in $u$ satisfy~$u$. Then $\SAT(\alpha)\coloneqq 1$ iff the CSP instance described by $\alpha$ is satisfiable. This encoding of CSP satisfiability is indeed monotone: if we flip any 0 in a truth table of a constraint into a 1, we are only making the constraint easier to satisfy.

\subsection{Reduction to CSP-SAT}

Recall the characterisation of monotone depth due to Karchmer and Wigderson~\cite{karchmer88monotone}: if $f\colon\{0,1\}^N\to\{0,1\}$ is a monotone function, then its monotone depth complexity is equal to the (deterministic) communication complexity of the following search problem.
\begin{quote}
{\bf Monotone KW-game for $\bm{f}$:} Alice holds a $a\in f^{-1}(1)$ and Bob holds a $b\in f^{-1}(0)$. The goal is to find a coordinate $i\in[N]$ such that $a_i=1$ and $b_i=0$.
\end{quote}
The next lemma shows that for any search problem of the form $S(F)\circ g^n$ there is a some monotone CSP-SAT function whose monotone KW-game embeds $S(F)\circ g^n$. (The reduction can be seen as a generalisation of Lemma 3.5 in \cite{raz99separation}.) 

We define the \emph{constraint topology} of $F$ naturally as the bipartite graph $G$ with left vertices $\vars(F)$ and right vertices $\cons(F)$. For a constraint $C\in\cons(F)$ we use the lower case $c$ to denote the corresponding node in $G$ (forgetting that $C$ is actually a function).

\begin{lemma} \label{lem:csp-sat}
Let $g\colon\mathcal{X}\times\mathcal{Y}\to\{0,1\}$ be a two-party gadget and let $F$ be an unsatisfiable $d$-CSP on $n$ variables and $m$ constraints. Let $G$ be the constraint topology of $F$. Then the monotone depth complexity of $\SAT_{G,\mathcal{X}} \colon \{0,1\}^N\to\{0,1\}$, $N\leq m|\mathcal{X}|^d$, is lower bounded by the (deterministic) communication complexity of $S(F)\circ g^n$.
\end{lemma}
\begin{proof}
We reduce the search problem $S(F)\circ g^n$ to the monotone KW-game for $\SAT_{G,\mathcal{X}}$. To this end, let $(x,y)$ be an input to the search problem $S(F)\circ g^n$ and compute as follows.
\begin{itemize}[label=$-$]
\item Alice maps $x\in \mathcal{X}^{\vars(F)}$ to the CSP whose sole satisfying assignment is $x$. That is, the truth table for a constraint node $c$ is set to all-0 except for the entry indexed by $x\upharpoonright\vars(C)$ (restriction of $x$ to the variables in $C$).
\item Bob maps $y\in\mathcal{Y}^{\vars(F)}$ to an unsatisfiable CSP as follows. The truth table for a constraint node $c$ is such that the bit indexed by $\ell\in{\mathcal{X}}^{\vars(C)}$ is set to $1$ iff $C$ is satisfied under the partial assignment $v\mapsto g(\ell(v),y(v))$ where $v\in\vars(C)$.
\end{itemize}

Alice clearly constructs a $1$-input of $\SAT_{G,\mathcal{X}}$. To see that Bob constructs a $0$-input of~$\SAT_{G,\mathcal{X}}$, suppose for a contradiction that there is a global assignment $\ell\colon\vars(F)\to\mathcal{X}$ so that the truth table of each $c$ has a $1$ in position indexed by $\ell\upharpoonright\vars(C)$. This would mean that the truth assignment $v\mapsto g(\ell(v),y(v))$ satisfies all the constraints of $F$. But this contradicts the unsatisfiability of $F$.

Assume then that Alice and Bob run a protocol for the monotone KW-game on the CSP instances constructed above. The output of the protocol is a some entry $\ell\in\mathcal{X}^{\vars(C)}$ in the truth table of some constraint node $c$ where Alice has a $1$ and Bob has a $0$. Because Alice's CSP was constructed so that for each constraint node~$c$ exactly one entry is $1$, we must have that $\ell = x\upharpoonright\vars(C)$. On the other hand, Bob's construction ensures that $C$ is not satisfied under the assignment $v \mapsto g(\ell(v),y(v))=g(x(v),y(v))$. Thus, we have found a violated constraint $C$ for the canonical search problem for $F$.
\end{proof}

\subsection{Proof of \autoref{cor:monotone-dept-tseitin}}
Theorems \ref{thm:two-party} and \ref{thm:tseitin} yield a search problem $S(\Tse_G)\circ g^m$ of communication complexity $\Omega(n/\log n)$ where $G$ is an $n$-node $m$-edge bound-degree graph ($d=O(1)$, $m=O(n)$) and $g$ is a constant-size gadget ($|\mathcal{X}|=O(1)$). Using \autoref{lem:csp-sat} we can then construct a CSP-SAT function on $N=O(n)$ bits having monotone depth $\Omega(n/\log n)=\Omega(N/\log N)$.
This proves \autoref{cor:monotone-dept-tseitin}.

\section{Applications: Proof Complexity} \label{sec:proof-complexity}

In this section we prove our new proof complexity lower bounds as stated in \autoref{sec:app-proof}. We start by reviewing some standard proof complexity terminology.

\subsection{Background}

In this work we focus on proof systems that refute unsatisfiable CNF formulas. Given a proof system, a \emph{refutation} (or a \emph{proof}) $P$ of an unsatisfiable CNF formula $F$ in the system is expressed as a sequence of {\em lines}, denoted $\Lines(P)$, each of which is either (a translation of) a clause of $F$ or follows from some previous lines via some sound {\em inference rule}. The refutation ends with some trivially false line.

For each proof $P$ we can associate a directed acyclic graph $G_P=(V,E)$ where $V=\Lines(P)$ and there is an edge $(u,v)\in E$ if $v$ is derived via some inference rule using line $u$.

\paragraph{Complexity measures.}
For the purposes of this work, we define the \emph{size} of a proof $P$ simply as the number of lines $|\Lines(P)|$. The \emph{rank} of $P$ is the length of the longest path in $G_P$. The \emph{size complexity} and \emph{rank complexity} of $F$ in a proof system are the minimum size and minimum rank, respectively, of all refutations of $F$ in that system.

We consider $G_P$ to be a tree if every internal node has fan-out one, that is, the clauses of $F$, which are not internal nodes, can be repeated.
If $G_P$ is a tree, we say that $P$ is \emph{tree-like}. The \emph{tree-like size complexity} of $F$ is the minimum size of a tree-like refutation of $F$. Note that restricting a refutation to be tree-like does not increase the rank because each line can be re-derived multiple times without affecting the rank. Tree-like size, however, can be much larger than general size.

\paragraph{Examples of proof systems.}
We mention some of the most well-studied proof systems.
In each of these systems, there is a set of derivation rules (which can be
thought of as inference schemas) of the form $F_1,F_2,\ldots, F_t\vdash F_{t+1}$
and each inference in a proof must be an instantiation of one of these rules.

A basic system is \emph{Resolution} whose lines are clauses.
Its only rule is the \emph{resolution rule}: the clause $(A \lor B)$
can be derived from $(A \lor x)$ and $(B \lor \neg x)$, where $A$ and
$B$ are arbitrary disjunctions of literals and $x$ is a variable.
A Resolution refutation of an unsatisfiable CNF formula $f$
is a sequence of clauses, ending with the empty clause,
such that each clause in the sequence is either a clause of $f$,
or follows from two previously derived clauses via the resolution rule.

Another proof system is the \emph{Cutting Planes} ($\CP$) proof system that manipulates integer linear inequalities. A $\CP$ refutation is a sequence
of inequalities, ending with $0 \geq 1$, such that all inequalities
are either translations of clauses of $F$, or follow from
two previously derived inequalities via one of the
two $\CP$ rules, addition and division with rounding. There is a natural extension of $\CP$, denoted $\CP(k)$, in
which the above $\CP$ proof rules may also be applied when the lines are
allowed to be degree $k$ multivariate polynomials.

Other important well-studied proof systems are the \LSfull
proof systems ($\LS_0$, $\LS$, $\LS_+$, and $\LS_{+,\star}$)
which are dynamic proof systems that manipulate polynomial inequalities of degree at most 2;
the Sherali--Adams and Lasserre (SOS) systems that are static proof systems
allowing polynomial inequalities of higher degree; and the 
dynamic Lasserre (dynamic SOS), and $\LS^{k}_{+,\star}$ systems, which generalize
the \LSfull systems to higher degree.
We refer the reader to~\cite{odonnell13approximability} for formal definitions and a thorough history 
for these and related proof sytems.

\paragraph{Semantic proof systems.}
Each of the above proof systems has a specific set of inference rule schemas, which allows them to have polynomial-time verifiers. In this work we consider more powerful \emph{semantic} proof systems that restrict the form of the lines and the fan-in of the inferences but dispense with the requirement of a polynomial-time verifier and allow any semantically sound inference rule with a given fan-in. The fan-in must be restricted because the semantic rules are so strong. The following system was introduced
in~\cite{beame07lower}.

\begin{definition}[Degree $k$ threshold proofs]
We denote by $\Th{k}$ the semantic proof system whose proofs have fan-in 2 and each line in a refutation of a formula $F$ is a polynomial inequality of degree at most $k$ in the variables of $F$. In particular, each clause of $F$ enters the system as translated into a linear inequality (similarly to the $\CP$ system discussed above).
\end{definition}

The following lemma follows from Caratheodory's Theorem.
\begin{lemma}\label{caratheodory-app}
$\CP$ and $\LS$ proofs can be efficiently converted into $\Th{k}$ proofs:
\begin{itemize}
\item Any $\CP$ proof of size (tree-like size) $s$ and rank $r$ can be converted
to a $\Th{1}$ proof of size (tree-like size) $O(s)$ and rank $O(r\log s)$.
\item Any $\LS_0$, $\LS$, or $\LS_+$ proof of size (tree-like size) $s$ and rank
$r$ can be converted to a $\Th{2}$ proof of size (tree-like size) $O(s)$ and
rank $O(r\log s)$.
\end{itemize}
\end{lemma}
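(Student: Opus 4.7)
The plan is to invoke Carath\'eodory's theorem in its conic form to cap the \emph{effective} fan-in of each inference step, and then simulate any high-fan-in step by a balanced binary tree of fan-in-2 semantically sound inferences.

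First I would treat the $\CP$ case. Each line of the refutation is a linear inequality $\sum_i a_i x_i \geq b$ in $n$ variables, which I view as a vector in $\mathbb{R}^{n+1}$. The $\CP$ derivation rules amount to taking a nonnegative rational combination of previously derived inequalities, followed by an optional division with rounding. The conic Carath\'eodory theorem asserts that any vector in the conic hull of vectors in $\mathbb{R}^{n+1}$ already lies in the conic hull of at most $n+1$ of them. Hence, without changing the proof DAG's rank, I may assume each $\CP$ inference is a nonnegative combination of at most $n+1 \leq s$ previously derived lines, followed by a unary division step if necessary.

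Next, to convert such a bounded-fan-in step into $\Th{1}$, I would replace the single high-fan-in addition by a balanced binary tree of pairwise weighted sums. Each internal node is a semantically sound fan-in-2 inference (a positive combination of two inequalities implies their sum), so it is a valid $\Th{1}$ line, and the division-with-rounding at the root is a unary sound inference. The tree has depth $O(\log(n+1)) = O(\log s)$, so the rank of the simulated proof is $O(r\log s)$. For the $\LS_0$, $\LS$, and $\LS_+$ case, lines are polynomial inequalities of degree at most $2$ in $n$ variables, so the ambient dimension is $O(n^2) \leq O(s^2)$; Carath\'eodory bounds the effective fan-in by $O(s^2)$, binary-tree simulation contributes depth $O(\log s^2) = O(\log s)$, and the multiplication rules by $x$, $1-x$, or by a square are unary and remain semantically sound fan-in-1 inferences in $\Th{2}$.

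The main obstacle is the size bookkeeping. A naive count shows that a single high-fan-in step expands into $O(s)$ intermediate $\Th{1}$ lines, which would multiply the total size by $s$. To recover $O(s)$, I would amortize: for the general-DAG bound, the total fan-in across all $\CP$ inferences is at most $\sum_v (\text{in-deg}(v)) = |E(G_P)|$, which is at most $O(s)$ since every edge has a unique head; the binary-tree simulation of a fan-in-$k$ step produces $O(k)$ new lines, so summing gives total size $O(s)$. The tree-like case is analogous, with the same handshake bound $O(s)$ on total effective fan-in after the Carath\'eodory reduction. Collecting the size $O(s)$ and rank $O(r\log s)$ bounds in both the $\CP\!\to\!\Th{1}$ and $\LS\!\to\!\Th{2}$ settings completes the argument.
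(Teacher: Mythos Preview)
The paper does not actually prove this lemma; it only remarks that it ``follows from Carath\'eodory's Theorem.'' Your high-level strategy---use conic Carath\'eodory to bound the fan-in of each inference and then simulate a fan-in-$k$ step by a depth-$O(\log k)$ balanced binary tree of semantically sound fan-in-$2$ additions---is exactly the intended idea, and it correctly yields the rank bound $O(r\log s)$.

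The gap is in your size bookkeeping for general DAGs. The sentence ``$|E(G_P)|$ is at most $O(s)$ since every edge has a unique head'' is a non-sequitur: uniqueness of heads just restates $|E|=\sum_v \mathrm{indeg}(v)$, it does not bound $|E|$ by the number of vertices. After Carath\'eodory, each line still has fan-in up to $n{+}1$ (or $O(n^2)$ in the $\LS$ case), so $|E|\le s(n{+}1)$ and the binary-tree expansion produces $\Theta(sn)$ new lines, not $O(s)$. Your amortization \emph{does} work in the tree-like case, because a tree on $s$ nodes has exactly $s{-}1$ edges regardless of fan-in, so $\sum_v(\mathrm{indeg}(v)-1)\le s{-}1$ and the total blow-up is indeed $O(s)$; you should separate the two cases and say this explicitly. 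For the general-DAG case, note that under the paper's own fan-in-$2$ description of $\CP$ (and the analogous syntactic $\LS$ rules) the conversion is essentially the identity and no Carath\'eodory step is needed; the Carath\'eodory argument is really aimed at the closure/round notion of rank, and there the honest size bound one obtains is $O(s\cdot\mathrm{poly}(n))$ rather than $O(s)$, unless one additionally uses $n\le s$ and is content with a polynomial (not linear) size overhead.
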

Moreover, it is not hard to show that one can extend the above
simulations by $\Th{k}$ proofs to $\CP(k)$, $\LS^{k}_{+,\star}$, and
degree $k$ (dynamic) Lasserre proofs.

%

In this paper we consider semantic proof systems that are even more general than $\Th{k}$, namely those for which the fan-in is bounded and the truth value of each line can be computed by an efficient multi-party NOF communication protocol.

\begin{definition}[Proofs with $k$-party verifiers]
We denote by $\Tcc{k,c}$ the semantic proof system of fan-in $2$ in which
each proof line is a boolean function whose value, for every $k$-partition of the input variables, can be computed by a $c$-bit randomised $k$-party NOF protocol of error at most $1/4$.  Both $k=k(s)$ and $c=c(s)$ may be functions of $s$, the size of the input formula. In keeping with the usual notions of what constitutes efficient communication, we use $\Tcc{k}$ to denote $\Tcc{k,\polylog s}$.
\end{definition}

Note that via standard boosting, we can replace the error $1/4$ in the above definition by $\epsilon$ at the cost of increasing $c$ by an $O(\log 1/\epsilon)$ factor. Therefore, without loss of generality, in the definition of $\Tcc{k}$ we can assume that the error is at most $2^{-\polylog s}$.


For polylogarithmic $k$, the following lemma
shows that $\Th{k}$ is a subclass of $\Tcc{k+1}$.

\begin{lemma} \label{th-to-rcc}
Every $\Th{k}$ refutation of an $n$-variable CNF formula is a $\Tcc{k+1,O(k^3\log^2 n)}$ refutation.
\end{lemma}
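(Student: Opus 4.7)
The plan is to build a $(k+1)$-party NOF protocol for each $\Th{k}$ line via a direct pigeonhole argument. A line has the form $p(x_1,\dots,x_n)\ge 0$ with $\deg p\le k$, written out as $p = \sum_S c_S \prod_{i\in S} x_i$ where $|S|\le k$. Fix any partition of the $n$ variables into $k+1$ groups, one per forehead. Since each monomial's support has size at most $k$, for every term $S$ there is at least one player whose forehead variables are disjoint from $S$; that player sees all of $\{x_i\}_{i\in S}$ and can evaluate the monomial locally. This is the whole reason for passing from $k$ to $k+1$ parties.

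First I would fix, as a function of the (publicly known) input partition, a map $\phi$ assigning each monomial to a designated such player, and define $P_j := \sum_{S:\phi(S)=j} c_S \prod_{i\in S} x_i$, so that $p(x) = \sum_{j=1}^{k+1} P_j(x)$ and player $j$ can compute $P_j(x)$ with no communication at all. The protocol then simply has each player broadcast its value $P_j(x)$; the others sum the $k+1$ announced values and compare against $0$. This is deterministic, so the $1/4$ error requirement is met trivially.

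For the communication bound, let $B$ denote the bit-length needed to encode each $P_j$. A degree-$\le k$ polynomial in $n$ variables has at most $\binom{n}{\le k}\le n^k$ monomials, so using the standard convention that a coefficient in a $\Th{k}$ line is representable in $O(\polylog n)$ bits (justified by a preliminary soundness-preserving truncation of coefficients, using that inputs are in $\{0,1\}^n$ and the formula is of manageable size), one obtains $|P_j| \le n^{O(k)} \cdot 2^{\polylog n}$, giving $B = O(k\log n + \polylog n)$. Multiplying by the $k+1$ broadcasts yields total communication $(k+1)B = O(k^3 \log^2 n)$, matching the claim.

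The main obstacle, in my view, is not the pigeonhole construction itself --- which is essentially immediate --- but discharging the $\polylog n$ bound on coefficient bit-length. One must either bake this into the convention for $\Th{k}$ or argue that arbitrary coefficients can be rounded/truncated without changing the truth value of the inequality on $\{0,1\}^n$; this is standard in the semi-algebraic proof complexity literature but is the one step that should be spelled out carefully. Once it is in place, the rest of the simulation is a direct deterministic broadcast protocol and the lemma follows.
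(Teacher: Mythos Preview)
Your pigeonhole decomposition is right and is indeed the first move in the protocol: with $k+1$ foreheads, every degree-$\le k$ monomial is fully visible to some player, so $p(x)=\sum_{j=1}^{k+1}P_j(x)$ with each $P_j$ computable locally. This part matches the approach in~\cite{beame07lower} that the paper invokes.

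The gap is in your treatment of coefficient size, and it breaks the argument. There is no ``standard convention'' that $\Th{k}$ coefficients fit in $\polylog n$ bits, and the truncation you allude to does not exist: there are linear threshold functions on $m$ boolean inputs whose every integer representation needs weights of $2^{\Omega(m)}$, and Muroga's theorem~\cite{muroga71threshold} only guarantees an upper bound of $O(m\log m)$ bits. Viewing a degree-$k$ inequality as a linear threshold over the $m\le n^k$ monomials, this gives coefficient bit-length $b=O(kn^k\log n)$, which is the bound the paper uses. With that $b$, your deterministic ``broadcast each $P_j$'' step costs $(k+1)\cdot b=\tilde{O}(k\,n^k)$ bits, not $\polylog n$. (Even granting your claimed $B=O(k\log n+\polylog n)$, note that $(k+1)B$ is $O(k^2\log n)+O(k\,\polylog n)$, which does not recover the stated $O(k^3\log^2 n)$ without further explanation.)

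The paper avoids this blowup precisely by \emph{not} broadcasting the $P_j$'s. It cites the randomised $(k+1)$-party protocol of~\cite{beame07lower} which decides $\sum_j P_j\ge 0$ with cost $O(k\log^2 b)$ and error $\le b^{-\Omega(1)}$; the logarithmic dependence on $b$ is essential and comes from randomised fingerprinting/comparison rather than sending the numbers themselves. Plugging in $b=O(kn^k\log n)$ gives $\log b=O(k\log n)$ and hence cost $O(k\cdot k^2\log^2 n)=O(k^3\log^2 n)$. So the missing idea in your plan is exactly this randomised comparison step; a deterministic broadcast cannot meet the bound.
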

\begin{proof}
By the well-known result of Muroga~\cite{muroga71threshold}, linear threshold functions on
$n$ boolean variables only require coefficients of $O(n \log n)$ bits.
Since a degree $k$ threshold polynomial is a linear function on at most $n^k$
monomials, it is equivalent to a degree $k$ threshold polynomial
with coefficients of $O(kn^k \log n)$ bits.
As shown in~\cite{beame07lower},
over any input partition
there is a randomized $(k+1)$-party communication protocol of cost
$O(k\log^2 b)$ 
and error $\le 1/b^{\Omega(1)}$ 
to verify a degree $k$ polynomial inequality with $b$-bit coefficients.
\end{proof}

The following lemma, which is implicit in~\cite{beame07lower},
gives the key relationships between $\Tcc{k}$ 
and randomised communication protocols for $S(F)$.

\begin{lemma} \label{proof-to-com}
If a CNF formula $F$ has a $\Tcc{k,c}$ refutation of rank $r$ then, over any $k$-partition of the variables, there is a randomised bounded-error $k$-party NOF protocol for $S(F)$ with communication cost $O(c\cdot r\log r)$.
\end{lemma}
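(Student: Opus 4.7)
The plan is to use the classical ``path-finding'' procedure on the proof DAG $G_P$. Both players are assumed to know $P$ as a fixed object; on input $\alpha$ they start at the trivially false line sitting at the sink of $G_P$, which evaluates to $0$ for every $\alpha$. Every internal line $L$ of $G_P$ is derived by a semantically sound fan-in-$2$ rule from parents $L_1, L_2$, so whenever $L(\alpha)=0$ at least one of $L_1(\alpha), L_2(\alpha)$ is also $0$. The protocol walks from the sink backwards along edges of $G_P$, always descending into a parent on which the current line is still false, until it reaches a source of $G_P$. That source must be (the translation of) some clause $C$ of $F$; since $C$ is false under $\alpha$, it is a violated clause and hence a valid output for $S(F)$.

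A single step of the walk requires the players to identify a false parent among $L_1, L_2$. By definition of $\Tcc{k,c}$, under the prescribed $k$-partition of the variables each $L_i(\alpha)$ can be evaluated by a $c$-bit randomised $k$-party NOF protocol of error at most $1/4$. To make the whole walk succeed simultaneously, I would boost the error of each such evaluation down to $1/(100\,r)$ by running $O(\log r)$ independent copies and taking a majority vote, spending $O(c \log r)$ bits per parent. The players descend to the parent reported to be false (evaluating $L_2$ only when $L_1$ turned out to be true).

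The walk has length at most $r$, since its length is bounded by the depth of the sink in $G_P$, which equals the rank of $P$. At each of the at most $r$ steps at most two amplified evaluations are performed, giving a total communication cost of $O(c\cdot r\log r)$. A union bound over the $O(r)$ amplified evaluations shows that all of them are correct with probability at least $3/4$, and conditioned on that event the procedure outputs a violated clause of $F$. The only real technical point is the amplification: because the per-line error is a constant below $1/2$, the standard $O(\log r)$-repetition trick suffices, and the hypothesis that the $\Tcc{k,c}$ protocols are valid ``over any $k$-partition of the variables'' is exactly what lets us adopt the partition fixed by the $S(F)$ instance without any additional overhead.
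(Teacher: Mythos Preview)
The paper does not actually prove this lemma; it simply states that the result is ``implicit in~\cite{beame07lower}'' and moves on. Your top-down path-finding argument is precisely the standard proof from that reference (and from~\cite{impagliazzo94upper} in the two-party case), and it is correct. One minor imprecision: the depth of the trivially false line in $G_P$ is \emph{at most} the rank~$r$, not necessarily equal to it (the longest path in $G_P$ need not terminate at that particular line), but this only helps you, so the bound on the walk length stands.
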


\subsection{Lifting CNF formulas}

In order to import our communication lower bounds to proof complexity, we need to encode composed search problems $S\circ g^n_k$ as CNF formulas. We describe a natural way of doing this in case $S=S(F)$ is the search problem associated with some CNF formula $F$.

Fix a $d$-CNF formula $F$ on $n$ variables and $m$ clauses. Also, fix a $k$-party gadget $g_k\colon\mathcal{X}^k\to\{0,1\}$ where each player holds $l\coloneqq\log|\mathcal{X}|$ bits as input. We construct a new $D$-CNF formula $F\circ g^n_k$ on $N$ variables and $M$ clauses, where
\begin{equation} \label{eq:params}
D=d\cdot kl,\qquad N=n\cdot kl,\qquad\text{and}\qquad M\leq m\cdot 2^{dkl}.
\end{equation}
\paragraph{Variables of $\bm{F\circ g_k^n}$.}
For each variable $x$ of $F$ we create a matrix of variables
\[
X=\{\, X_{ij} : i \in [k],\, j \in [l]\,\}.
\]
The idea is that truth assignments $\alpha_X \colon X \to\{0,1\}$ are in a natural one-to-one correspondence with the set $\mathcal{X}^k$, the domain of $g_k$. Namely, the value of the $j$-th bit of the $i$-th player is encoded by $X_{ij}$. We take the variable set of $F\circ g_k^n$ to be the union $X\cup Y \cup \ldots$, where $x,y,\ldots$ are the original variables of $F$.

\paragraph{Clauses of $\bm{F\circ g_k^n}$.}
Let $C$ be a clause of $F$; suppose first that $C=(x\lor \neg y)$ for simplicity. We will replace $C$ with a set of clauses $\mathcal{C}$ on the variables $X\cup Y$ such that all clauses of $\mathcal{C}$ are satisfied under an assignment $\alpha\colon X\cup Y\to\{0,1\}$ if and only if $g_k(\alpha_X) = 1$ or $g_k(\alpha_Y) = 0$; here $\alpha_X$ and $\alpha_Y$ are elements of $\mathcal{X}^k$ associated with the restrictions of $\alpha$ to $X$ and $Y$. Indeed, let $X^\alpha_{ij} = X_{ij}$ if $\alpha(X_{ij}) = 1$, and $X^\alpha_{ij} = \neg X_{ij}$ if $\alpha(X_{ij}) = 0$, and similarly for $Y^\alpha_{ij}$. Define a clause
\[
C_\alpha =
\Big(\neg \bigwedge_{i,j} X^\alpha_{ij}\Big) \vee
\Big(\neg \bigwedge_{i,j} Y^\alpha_{ij}\Big),
\]
and let $\mathcal{C}$ consist of all the clauses $C_\alpha$ where $\alpha$ is such that $g_k(\alpha_X)=0$ and $g_k(\alpha_Y)=1$.

More generally, if we had started with a clause on $d$ variables, each clause $C_\alpha$ would involve $dkl$ variables and so we would have $|\mathcal{C}| \leq 2^{dkl}$. This completes the description of $F\circ g_k^n$.

\bigskip\noindent
The formula $F\circ g_k^n$ comes with a natural partition of the variables into $k$ parts as determined by the $k$-party gadget. Thus, we can consider the canonical search problem $S(F\circ g_k^n)$.
\begin{lemma} \label{lem:lifted-search}
The two problems $S(F\circ g_k^n)$ and $S(F)\circ g_k^n$ have the same $k$-party communication complexity up to an additive $dkl$ term.
\end{lemma}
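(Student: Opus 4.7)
The plan is to establish two reductions: an easy direction showing $S(F)\circ g_k^n$ reduces to $S(F\circ g_k^n)$ with no additional communication, and a slightly less trivial direction showing $S(F\circ g_k^n)$ reduces to $S(F)\circ g_k^n$ at the cost of at most $dkl$ extra bits of communication. The key preliminary observation, which is essentially the reason the lemma is true, is that the $k$-party input partition of $F\circ g_k^n$ matches the partition induced by the gadget: for each original variable $x$ of $F$, the $i$-th row $\{X_{ij}: j\in[l]\}$ of its associated matrix is precisely the $i$-th player's $\mathcal{X}$-input to the copy of $g_k$ computing $x$. Under NOF, player $i$ therefore sees exactly the rows $\{X_{i'j}: i'\neq i, j\in[l]\}$ across all original variables of $F$, which agrees with the partition used in $S(F)\circ g_k^n$.

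For the first direction, given a NOF protocol for $S(F\circ g_k^n)$, on input $(x_1,\ldots,x_k)\in(\mathcal{X}^n)^k$ to $S(F)\circ g_k^n$, the players reinterpret their inputs as an assignment $\beta$ to the variables of $F\circ g_k^n$ (using the correspondence above) and simulate the protocol. The output is some violated clause $C_\alpha$ of $F\circ g_k^n$. By the very construction of $F\circ g_k^n$, the clause $C_\alpha$ being violated forces $\alpha=\beta\!\upharpoonright\!\vars(C_\alpha)$ and moreover forces $g_k$ applied to each row of $\alpha$ to yield a truth assignment to $\vars(C)$ that falsifies $C$. Hence the players can simply output $C$, which is a violated clause of $F$ on the assignment $g_k^n(x_1,\ldots,x_k)$. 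No extra communication is needed.

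For the other direction, given a NOF protocol for $S(F)\circ g_k^n$, the players interpret their input $\beta$ to $S(F\circ g_k^n)$ as a composed input and simulate the protocol, obtaining a clause $C$ of $F$ on $d$ variables whose induced composed assignment violates $C$. The correct output for $S(F\circ g_k^n)$ is then the clause $C_\alpha$ with $\alpha:=\beta\!\upharpoonright\!(X\cup Y\cup\cdots)$, where the union ranges over the $d$ matrices of variables associated with $\vars(C)$. This restriction has exactly $dkl$ bits. The only missing ingredient is to make $\alpha$ publicly known so that every player can name $C_\alpha$ as the output. Since, under NOF, each bit $X_{ij}$ is already seen by every player except player $i$, the players simply complete the picture: for each row $i\in[k]$ of each of the $d$ matrices, some player other than $i$ broadcasts the $l$ bits that player $i$ is missing. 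This costs $k\cdot d\cdot l=dkl$ additional bits, after which $\alpha$ is common knowledge and $C_\alpha$ can be output.

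There is no real obstacle here; the lemma is a syntactic translation, and the only thing to keep track of is the bookkeeping of how many bits are needed to publicly identify the lifted clause. Since the additive cost $dkl$ is exactly the width $D$ of $F\circ g_k^n$, it is negligible compared to the $\polylog s$-type bounds in which the lemma will later be used, so no further care about constants is required.
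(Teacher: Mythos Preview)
Your proposal is correct and follows essentially the same approach as the paper: observe that the inputs are in natural one-to-one correspondence, that extracting $C$ from a violated $C_\alpha$ requires no communication, and that going from a violated $C$ to the corresponding $C_\alpha$ requires learning the $dkl$ bits of $\alpha$. The paper's proof is terser (it simply notes the $dkl$ bits can be communicated ``even in the number-in-hand model'') whereas you spell out the NOF broadcasting explicitly, but the argument is the same.
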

\begin{proof}
As discussed above, the \emph{inputs} to the two problems are in a natural one-to-one correspondence. How about translating \emph{solutions} between the problems? Given a violated clause $C_\alpha$ in the problem $S(F\circ g_k^n)$, it is easy to reconstruct $C$ from $C_\alpha$ without communication. Moreover, given a violated clause $C$ of $F$ in the problem $S(F)\circ g_k^n$, we can construct a violated $C_\alpha$ by first finding out what encoding $\alpha$ was used for each of the $d$ variables of $C$. This can be done by communicating $dkl$ bits (even in the number-in-hand model).
\end{proof}

\subsection{Rank lower bounds}

We are now ready to prove \autoref{thm:rank-lb}, restated here for convenience.
\rankbounds*
\begin{proof}
We start with a Tseitin formula $F$ with $n$ variables, $O(n)$ clauses, and width $O(1)$ that is associated with a $\Omega(n/\log n)$-routable bounded-degree graph. Let $k=k(n)$ be a parameter. We construct the formula $F\circ g_k^n$ where $g_k^n\colon\mathcal{X}^k\to\{0,1\}$ is the gadget of \autoref{cor:gadget}. Recall that $\log|\mathcal{X}|=k^{\epsilon}$ where $\epsilon=\epsilon(k)\to 0$ as $k\to\infty$. Using~\eqref{eq:params}, we observe
\begin{itemize}[label=$-$,noitemsep]
\item $F\circ g_k^n$ has size $s=O(n)\cdot \exp(O(k^{1+\epsilon}))$,
\item $F\circ g_k^n$ has width $O(k^{1+\epsilon})$,
\item $S(F\circ g_k^n)$ has $k$-party NOF communication complexity $\mathsf{CC}=\Omega(\sqrt{n/\log n}/2^kk)$; this follows from \autoref{lem:lifted-search}, Theorems \ref{thm:multi-party} and \ref{thm:tseitin}, and Sherstov's lower bound~\cite{sherstov13communication}. (Alternatively, the complexity is $\Omega(n/\log n)$ in case $k=2$.)
\end{itemize}
Fix $\delta>0$ and choose $k=(\log n)^{1-\delta}$. For large $n$, the above bounds translate into:
\[
s= n^{1+o(1)},\qquad
\text{width}\leq \log n,\qquad\text{and}\qquad
\mathsf{CC} \geq n^{1/2-o(1)}.
\]
Therefore, by \autoref{proof-to-com}, there are no $\Tcc{k}$ refutations of $F\circ g_k^n$ with rank at most $n^{1/2-o(1)}/\polylog n=n^{1/2-o(1)}$. The result follows by letting $\delta\to 0$ sufficiently slowly.
\end{proof}


\subsection{Length--space lower bounds} \label{sec:pebbling}

In order to study the space that is required by a refutation, we need to switch to a more appropriate \emph{space-oriented} view of proofs.
\begin{definition}[{Space-oriented proofs. E.g., \cite[\S2.2]{nordstrom13pebble}}] A refutation of a CNF formula $F$ in \emph{length} $L$ and \emph{space} $\Sp$ is a sequence of \emph{configurations} $\D_0,\ldots,\D_L$ where each $\D_i$ is a set of lines (of the underlying proof system) satisfying $|\D_i|\leq \Sp$ and such that $\D_0 = \emptyset$, $\D_L$ contains a trivially false line, and $\D_i$ is obtained from $\D_{i-1}$ via one of the following derivation steps:
\begin{itemize}[label=$-$,noitemsep]
\item {\bf Clause download:} $\D_i = \D_{i-1}\cup \{v_C\}$ where $v_C$ is a translation of some clause $C$ of $F$.
\item {\bf Inference:} $\D_i = \D_{i-1} \cup \{v\}$ where $v$ follows from some number of lines of $\D_{i-1}$ by an inference rule of the system.
\item {\bf Erasure:} $\D_i = \D_{i-1} \smallsetminus \{v\}$ for some $v\in\D_{i-1}$.
\end{itemize}
\end{definition}

Huynh and Nordstr{\"o}m \cite{huynh12virtue} proved that if $F$ has a  $\Tcc{2}$ refutation of short 
length and small space, then there is a low-cost randomised two-party protocol for~$S(F)$. 
It is straightforward to show that this result holds more generally for $\Tcc{k}$ proofs and $k$-party protocols. The high level idea is that the players can use the refutation of $F$ to do a binary search for a violated clause.

\begin{lemma}[Simulation of space-bounded proofs]\label{hn-sizespace}
Fix a CNF formula $F$ of size $s$ and some $k$-partition of its variables. If $F$ has a $\Tcc{k}$ refutation of length $L$ and space $\Sp$, then there is a $k$-party randomised bounded-error protocol for $S(F)$ of communication cost
\[
\Sp\cdot \log L\cdot \polylog s.
\]
\end{lemma}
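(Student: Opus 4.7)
The plan is to use the refutation to guide a binary search for a violated clause. Fix an input $\alpha$ to $S(F)$ on which Alice, Bob (and more generally, the $k$ players under the given NOF partition) wish to locate a clause of $F$ that $\alpha$ falsifies. Given the refutation $\D_0,\ldots,\D_L$, call a configuration $\D_i$ \emph{consistent} with $\alpha$ if every line in $\D_i$ evaluates to true under $\alpha$. The initial configuration $\D_0=\emptyset$ is trivially consistent, while the final configuration $\D_L$ contains a trivially false line and is therefore inconsistent. Hence there exists an index $i^\star\in[L]$ such that $\D_{i^\star-1}$ is consistent with $\alpha$ but $\D_{i^\star}$ is not.

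First, I would show that locating such an $i^\star$ already yields a violated clause. The only new line in $\D_{i^\star}$ compared to $\D_{i^\star-1}$ is either (a) an erasure, which cannot turn a consistent configuration into an inconsistent one; (b) an inference, which by the semantic soundness of the rule evaluates to true whenever its premises in $\D_{i^\star-1}$ do; or (c) a clause download of some $C\in F$, which is the only remaining possibility. Hence $C$ must be violated by $\alpha$, and the players output $C$.

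Next, I would implement the binary search for $i^\star$. At each search step the players need to decide, for a given index $i$, whether $\D_i$ is consistent with $\alpha$; this amounts to evaluating each of the at most $\Sp$ lines of $\D_i$ on $\alpha$. By definition of $\Tcc{k}$, each individual line admits a $k$-party NOF randomised protocol of cost $\polylog s$ and error $\le 1/4$; by standard success amplification, we can reduce this error to any $1/\mathrm{poly}(s,L)$ with only a $\polylog s$ blow-up in cost. After $O(\log L)$ binary-search steps, each making $\Sp$ such evaluations, we have located $i^\star$ and can output the downloaded clause. The total communication cost is
\[
O(\log L)\cdot \Sp\cdot \polylog s \;=\; \Sp\cdot \log L\cdot \polylog s,
\]
as claimed.

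The one thing to be careful about is error control. We perform at most $O(\Sp\log L)$ line evaluations, so I would set the per-evaluation error to $1/(10\,\Sp\log L)$, which by a union bound guarantees that every evaluation is simultaneously correct with probability at least $3/4$. Conditioned on this event the protocol deterministically produces the correct violated clause, so the overall protocol is bounded-error. Since $\Sp,L \leq 2^{\polylog s}$ without loss of generality (else the bound is trivial), this extra boosting costs only an additional $\polylog s$ factor, which is absorbed into the stated bound. The main thing to verify carefully is the soundness step (b) above, i.e.\ that semantic soundness of $\Tcc{k}$ inferences really does preserve consistency; everything else is routine.
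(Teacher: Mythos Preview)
Your proposal is correct and follows essentially the same approach as the paper: binary search over the configurations $\D_0,\ldots,\D_L$, maintaining the invariant that the left endpoint is consistent with $\alpha$ and the right endpoint is not, and observing that at the transition index the new line must be a clause download by semantic soundness. Your error analysis (boost each of the $O(\Sp\log L)$ line evaluations to error $1/(10\,\Sp\log L)$ and union-bound) is slightly more explicit than the paper's, but the content is the same.
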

\begin{proof}
Let $\alpha\colon \vars(F)\to\{0,1\}$ be an input to the search problem $S(F)$.
Fix a length-$L$ space-$\Sp$ refutation of $F$ with configurations $\D_0,\ldots,\D_L$.

We will describe a $k$-party protocol to find a clause of $F$ that is violated under $\alpha$. The $k$ players first consider the configuration $\D_{L/2}$ in the refutation and communicate in order to evaluate the truth value of all lines in $\D_{L/2}$ under $\alpha$. If all lines of $\D_{L/2}$ are true, they continue their search on the subderivation $\D_{L/2},\ldots,\D_L$, and otherwise the search continues on the subderivation $\D_0,\ldots,\D_{L/2}$. In this way, we do a binary search, always maintaining the invariant that the first configuration in the subderivation evaluates to true, but some line in the last configuration evaluates to false. After $\log L$ steps, the players will find an $i \in [L]$ such that all of $\D_{i-1}$ evaluates to true but some line in $\D_i$ is false under $\alpha$. By the soundness of the proof system, the false line in $\D_i$ must have been a download of a some clause of $F$ and this clause solves the search problem.

Let us analyse the communication complexity of the protocol. The cost of evaluating any particular configuration with error at most $(4\log L)^{-1} \leq (4s)^{-1}$ is $\Sp\cdot \polylog s$. Thus the overall cost is $\Sp\cdot \log L\cdot \polylog s$ and the total error is at most $1/4$.
\end{proof}

Huynh and Nordstr{\"o}m proceeded to construct formulas $\Peb_G$ of size $s$ such that they admit Resolution refutations of size $O(s)$, but for which any $\Tcc{2}$ refutation in space $\Sp$ and length $L$ must satisfy $\Sp\cdot\log L = s^{1/4-o(1)}$. Using our multi-party lower bounds, we can now generalise this tradeoff result to $\Tcc{k}$ proof systems. Namely, we prove the following result, which was stated in the introduction.
\tradeoffs*

\begin{proof}
The formula family, parameterised by $n\in\N$, is
\[
\Peb_G\circ g_k^n,
\]
where $G$ is the graph from \autoref{thm:pebbling} with $n$ nodes and maximum degree $d=O(1)$, and where $k=k(n)$ is a parameter, and where $g_k\colon\mathcal{X}^k\to\{0,1\}$ is again our gadget from \autoref{cor:gadget}. In particular, letting $l=\log|\mathcal{X}|$, these formulas have size
\[
s \leq \Theta(n)\cdot 2^{dkl}.
\]

\paragraph{Lower bound.}
Using $\cbs(S(\Peb_G))=\Omega(n^{1/2})$ and an argument similar to the proof of \autoref{thm:rank-lb}, we conclude that $S(\Peb_G\circ g_k^n)$ has $k$-party randomised communication complexity $\Omega(n^{1/4-o(1)})$ when we choose $k=(\log n)^{1-o(1)}$ appropriately. (Alternatively, the complexity is $\Omega(n^{1/2-o(1)})$ for $k=2$.) Recall also that with this choice of~$k$, we have $s=n^{1+o(1)}$. This proves the lower bound \eqref{eq:tradeoff} in view of \autoref{hn-sizespace}.

\paragraph{Upper bound (sketch).}
To see that the lifted formula $\Peb_G\circ g_k^n$ has a Resolution refutation of length $s^{1+o(1)}$ and space $s^{1/2+o(1)}$, we will mimic the usual length-$O(n)$ space-$O(n^{1/2})$ refutation of the original formula $\Peb_G$. This refutation follows the pebbling of $G$: whenever a node $v$, with in-neighbours $w_1,\ldots,w_d$, is pebbled, we derive the clause $(v)$ from previously derived clauses $(w_1),\ldots,(w_d)$ and the clause $(\neg w_1 \lor\cdots\lor \neg w_d \lor v)$ of $\Peb_G$.

For the lifted version $\Peb_G\circ g_k^n$ we want to do the same thing, deriving the lifted clauses associated with $(v)$ from the lifted clauses associated with $(w_1),\ldots,(w_d)$ and $(\neg w_1 \lor\cdots\lor \neg w_d \lor v)$. The number of lifted variables that underlie each pebbling step is $dkl$, and since there is always a Resolution refutation of size exponential in the number of variables, it follows that each resolution step in the original refutation of $\Peb_G$ can be simulated by $O(2^{dkl})=s^{o(1)}$ steps in the lifted proof. Thus the total length of the lifted refutation is $O(n)\cdot s^{o(1)}=s^{1+o(1)}$. Similarly, the space used is $s^{1/2+o(1)}$.
\end{proof}

\medskip
\subsection*{Acknowledgements}

We thank Yuval Filmus for pointing out the quadratic character example, and Jakob Nordstr{\"o}m and Thomas Watson for providing helpful suggestions based on an early draft of this work. Thanks to Nathan Grosshans for e-mail correspondence, which clarified our presentation of the monotone CSP-SAT function. We also thank Anil Ada, Paul Beame, Trinh Huynh, and Robert Robere for discussions, and finally the STOC reviewers for useful comments.

This research was supported in part by NSERC. The first author also acknowledges support from Alfred B.\ Lehman Graduate Scholarship.

\DeclareUrlCommand{\Doi}{\urlstyle{same}}
\renewcommand{\path}[1]{\footnotesize\sf{\Doi{#1}}}

\newcommand{\etalchar}[1]{$^{#1}$}

\end{document}